\documentclass[11pt]{article}
\usepackage{float}
\usepackage{subcaption}
\usepackage{amsmath,amssymb,amsthm,mathtools}
\usepackage{bbm}
\usepackage{geometry}
\usepackage{hyperref}
\usepackage{enumitem}
\usepackage{microtype}
\usepackage{graphicx}
\usepackage{placeins}
\usepackage[numbers,sort&compress]{natbib}
\usepackage{booktabs}
\usepackage{multirow}
\usepackage{siunitx}
\usepackage{tikz}
\usepackage{pgfplots}
\usepgfplotslibrary{groupplots}
\pgfplotsset{compat=1.18}
\pgfplotsset{
  expplot/.style={
    width=8.65cm,
    height=7.05cm,
    axis lines=left,
    axis line style={-},
    clip=false,
    tick align=outside,
    tick pos=left,
    ymin=0,
    ymax=0.52,
    ytick={0,0.1,0.2,0.3,0.4,0.5},
    ylabel={Power},
    grid=major,
    major grid style={gray!25, dash pattern=on 1.1pt off 1.3pt, line width=0.25pt},
    xlabel style={font=\large},
    ylabel style={font=\large},
    tick label style={font=\small,/pgf/number format/1000 sep={}},
    legend style={
      font=\scriptsize,
      draw=gray!35,
      fill=white,
      fill opacity=0.96,
      text opacity=1,
      cells={anchor=west},
      inner sep=2.2pt,
      rounded corners=0pt,
      at={(0.98,0.03)},
      anchor=south east,
    },
  },
  series/.style={
    line width=0.9pt,
    mark size=2.15pt,
    mark options={solid, draw=., fill=white, line width=0.75pt},
  },
  bench/.style={black, dashed, dash pattern=on 5pt off 3pt, line width=0.7pt},
}
\hypersetup{colorlinks=true,linkcolor=blue,urlcolor=blue,citecolor=blue}

\newcommand{\mP}{\mathbb{P}}
\newcommand{\E}{\mathbb{E}}
\newcommand{\Var}{\mathrm{Var}}
\newcommand{\Cov}{\mathrm{Cov}}
\newcommand{\ind}{\mathbbm{1}}
\newcommand{\iid}{\overset{\mathrm{iid}}{\sim}}
\newcommand{\toP}{\xrightarrow{P}}
\newcommand{\dto}{\xrightarrow{D}}
\newcommand{\R}{\mathbb{R}}
\newcommand{\N}{\mathcal{N}}
\newcommand{\Sn}{\mathfrak{S}_n}

\newtheorem{theorem}{Theorem}
\newtheorem{lemma}{Lemma}
\newtheorem{proposition}{Proposition}

\newtheorem{remark}{Remark}
\newtheorem{definition}{Definition}
\newtheorem{assumption}{Assumption}

\title{When Do Generalized Permutation Tests Achieve Optimal Power? \\ A Dispersion Characterization}
\author{Yongmin Kim\thanks{Korea Advanced Institute of Science and Technology (KAIST).
  E-mail: \texttt{junmini5@kaist.ac.kr}}
  \and
  Ilmun Kim\thanks{Korea Advanced Institute of Science and Technology (KAIST).
  E-mail: \texttt{ilmunk@kaist.ac.kr}}}

\begin{document}
\maketitle

\begin{abstract}
We study generalized Monte Carlo permutation tests under a non-uniform distribution on permutations. Focusing on the difference-in-means statistic, we introduce two scalar dispersion measures that quantify departures from complete randomization at the individual and pairwise levels. We show that if both dispersions vanish asymptotically, then the conditional permutation distribution converges to its Gaussian benchmark, the critical value stabilizes, and the test attains optimal Pitman local power. Conversely, if these dispersions fail to vanish, the permutation distribution does not self-average, the critical value need not stabilize, and optimal local power cannot in general be guaranteed. We further show that beyond the standard Pitman local model, suitably chosen non-uniform permutation distributions can strictly dominate the uniform distribution by exploiting nuisance structure in the data.
\end{abstract}

\section{Introduction}

Permutation tests are among the oldest and most widely used nonparametric tools in statistics.
Introduced by \citet{fisher1935design} and \citet{pitman1937,pitman1938},
they construct a reference distribution by re-randomizing the treatment labels
and comparing the observed test statistic against this distribution.
When the data are exchangeable under the null hypothesis,
the resulting $p$-value controls the type~I error at the nominal level in finite samples,
without any parametric modeling assumptions on the data-generating process.

The classical theory of permutation tests focuses almost exclusively on the case where the
re-randomization is drawn uniformly from the symmetric group~$\Sn$, or equivalently,
from a simple random sample of treatment assignments.
In this setting, the asymptotic behavior of the permutation distribution is well understood:
\citet{hoeffding1952} established that the large-sample power of uniform permutation tests
matches that of the corresponding parametric tests under local alternatives,
and \citet{hajek1961,hajek1999theory} provided the permutation central limit theorem
that underpins this result.
These foundational results, further developed by
\citet{lehmann2005testing} and \citet{romano1989,romano1990},
show that uniform permutation tests are asymptotically optimal in the Pitman efficiency sense
for a broad class of statistics, including linear rank statistics and studentized test statistics.
More recent work has extended these ideas in several directions:
non-asymptotic power analysis~\citep{berrett2021optimal,kim2022minimax,schrab2023mmd},
conditional exchangeability~\citep{berrett2020conditional,kim2022local},
and robustness to exchangeability violations~\citep{chung2013exact,schrab2024robust}.

A recent paper by \citet{ramdas2023permutation}
substantially expands the scope of permutation testing.
They show that finite-sample validity is preserved
even when the Monte Carlo permutations are drawn from an arbitrary distribution~$q_n$ on~$\Sn$,
which need not have any group structure at all.
This opens the door to a much larger class of generalized permutation tests,
with potential computational and statistical benefits.
However, their analysis addresses only validity;
\emph{the power properties of generalized permutation tests with non-uniform~$q_n$
remain open.}

This gap raises two natural questions, operating at different levels.
The first concerns the standard Pitman local model with i.i.d.\ noise.
Under what conditions on~$q_n$ does the generalized permutation test
attain optimal asymptotic power?
The second concerns what happens when the data carry additional structure beyond the Pitman local model.
Can a non-uniform choice of~$q_n$ that exploits this structure
yield strictly higher power than the uniform distribution on~$\Sn$,
even when both choices would be optimal in the Pitman local setting?

In this paper, we provide a complete answer to both questions
in the two-sample testing framework.
For the first question, we introduce two scalar quantities,
the \emph{first-order dispersion}~$V_{1,n}$ and the \emph{second-order dispersion}~$V_{2,n}$,
that measure how far the marginal and pairwise assignment probabilities
induced by~$q_n$ deviate from those of complete randomization.
Under the Pitman local model, we show that the joint condition
$V_{1,n}\to 0$ and $V_{2,n}\to 0$ is sufficient for the conditional permutation distribution
to converge to the standard normal, the critical values to stabilize at $z_{1-\alpha}$,
and the test to attain optimal Pitman local power (Theorem~\ref{thm:power}).
Conversely, under Gaussian noise, if $V_{1,n}+V_{2,n}\not\to 0$ then the permutation distribution
fails to self-average and the critical values cannot stabilize at the $z$-test benchmark
at every nominal level (Theorem~\ref{thm:necessary}).
Thus, within the Gaussian-noise Pitman local model,
the two dispersions provide a necessary and sufficient characterization of when
a generalized permutation test achieves optimal local power.

For the second question, we step outside the Pitman local model.
Specifically, we consider a stratified two-sample design
in which the data exhibit between-stratum heterogeneity
on top of a local treatment effect.
In this setting, the dispersion framework alone no longer determines power.
Both the full permutation ($q_n=\mathrm{Uniform}(\Sn)$)
and a within-stratum permutation satisfy $V_{1,n}=0$ and $V_{2,n}\to 0$,
yet they yield different critical values because their permutation variances
absorb different amounts of the between-stratum variation.
We show that the within-stratum design achieves strictly higher power
(Proposition~\ref{prop:power-compare}),
recovering in the generalized permutation framework the classical variance-reduction benefit
of stratification.
Together, the two results delineate the reach of the dispersion conditions.
They pin down power completely within the Pitman local model,
but richer data structure calls for choices of $q_n$ that respect that structure.

\subsection{Summary of contributions}\label{sec:contributions}

Our main contributions are as follows.

\begin{enumerate}[label=(\roman*)]
\item \textbf{Dispersion characterization of optimal power.}
Within the Pitman local model,
we introduce the first- and second-order dispersions $(V_{1,n},V_{2,n})$
and show that they provide a necessary and sufficient characterization of when
a generalized permutation test attains optimal Pitman local power\footnote{Throughout, ``optimal local power'' refers to the Pitman local power of the $z$-test, which is asymptotically optimal for location alternatives by Le Cam's local asymptotic normality; see \citet[][Chapter~15]{vandervaart1998asymptotic}.}
for the difference-in-means statistic:
vanishing dispersions are sufficient (Theorem~\ref{thm:power}),
and under Gaussian noise, non-vanishing dispersions imply failure (Theorem~\ref{thm:necessary}).
We illustrate the sharpness of these conditions through three canonical examples:
the full symmetric group ($V_{1,n}=V_{2,n}=0$, optimal),
the cyclic group ($V_{1,n}=0$ but $V_{2,n}\not\to 0$, non-optimal),
and a block permutation design ($V_{1,n}=0$, $V_{2,n}=O(n^{-2})\to 0$, optimal).
These examples demonstrate, in particular, that marginal balance ($V_{1,n}=0$) alone is not sufficient
and that pairwise regularity ($V_{2,n}\to 0$) is also necessary.

\item \textbf{Extension to asymptotically linear statistics.}
We extend the dispersion framework to statistics admitting a permutation-typical
asymptotic linearization (Theorem~\ref{thm:AL}).
This covers the standard studentized difference in means once the required $q_n$-typical variance consistency is verified.

\item \textbf{Power gains beyond the Pitman local model.}
Going beyond the Pitman local model, we consider a stratified design
in which between-stratum heterogeneity coexists with a local treatment effect.
In this setting, the dispersion framework no longer determines power:
both the full permutation and the within-stratum permutation satisfy 
$V_{1,n}=0$ and $V_{2,n}\to 0$,
yet the latter achieves strictly higher power
by avoiding the absorption of between-stratum variation into the permutation variance
(Proposition~\ref{prop:power-compare}).
The power of the full permutation test degrades as the between-stratum variation grows,
while the within-stratum test is unaffected.
\end{enumerate}

Taken together, these results offer a unified message: within the Pitman local model,
what determines whether a generalized permutation test is optimally powerful is precisely
how close its randomization law is to the uniform distribution,
and the dispersions $(V_{1,n},V_{2,n})$ provide an exact quantitative measure of that proximity.
Outside this local regime, however, deliberately departing from uniformity
by exploiting known nuisance structure can yield additional power
that the dispersion framework alone does not capture.

\subsection{Organization}\label{sec:notation}

The remainder of this paper is organized as follows.
Section~\ref{sec:framework} introduces the framework:
the two-sample design, the generalized permutation randomization of~\cite{ramdas2023permutation},
and the Pitman local model.
Section~\ref{sec:optimal} presents the main results on optimal power under vanishing dispersions,
including the sufficient conditions (Theorems~\ref{thm:cond-normal}--\ref{thm:power}),
the necessary conditions (Theorem~\ref{thm:necessary}),
the extension to asymptotically linear statistics (Theorem~\ref{thm:AL}),
and illustrative examples.
Section~\ref{sec:nonuniform} steps outside the Pitman local model and demonstrates
that non-uniform randomization can strictly dominate the uniform distribution in terms of power
in the presence of between-stratum heterogeneity.
Section~\ref{sec:simulations} provides simulation evidence.
Section~\ref{sec:conclusion} concludes.
Proofs are collected in the Appendix.

\section{Framework}\label{sec:framework}

\subsection{Setup}\label{sec:setup}

Let $[n]=\{1,\dots,n\}$ index the experimental units and let $\Sn$ denote the symmetric group on $[n]$.
We consider a two-sample design with $n_1$ treated units and $n_0$ control units, where $n_1+n_0=n$ and $\rho_n:=n_1/n$.
We observe a single outcome vector $X=(X_1,\dots,X_n)^\top\in\R^n$.
An assignment vector $W\in\{0,1\}^n$ encodes the treatment labels, where $W_i=1$ indicates treatment and $W_i=0$ indicates control; throughout we restrict attention to assignments with exactly $n_1$ treated units,
so $\sum_{i=1}^n W_i=n_1$.

For notational convenience, we fix a reference labeling in which the first $n_1$ indices are treated and the remaining $n_0$ indices are controls. Let
\[
w_0 := (\underbrace{1,\dots,1}_{n_1},\underbrace{0,\dots,0}_{n_0})^\top,\qquad A_n:=\{1,\dots,n_1\},
\]
so that $w_{0,i}=\ind\{i\in A_n\}$.

\begin{assumption}[Asymptotic proportions]\label{ass:rhon}
$\rho_n\to\rho\in(0,1)$ and $n_1,n_0\to\infty$.
\end{assumption}

This ensures that both groups grow without bound and that the treatment fraction remains bounded away from~$0$ and~$1$, conditions under which the central limit arguments in later sections apply.

\subsection{Generalized permutation randomization}\label{sec:gen-perm}

The generalized permutation test of \citet{ramdas2023permutation} allows the Monte Carlo
permutations to be drawn from an arbitrary distribution $q_n$ on $\Sn$.
Concretely, one draws a base permutation $\pi_0\sim q_n$ and Monte Carlo permutations
$\pi_1,\dots,\pi_M\iid q_n$, all independently of the data.

The base draw $\pi_0$ is essential when $q_n$ is non-uniform: the effective randomization
applied to the data is not $\pi_m$ itself but the right-translate
\[
  \tau := \pi_m \circ \pi_0^{-1}, \qquad \pi_m \sim q_n,
\]
so that $\tau\sim q_n^{(\pi_0)}$ where $q_n^{(s)}(t):=q_n(t\circ s)$.
This translation aligns the Monte Carlo draws with the observed assignment $w_0$
and thereby restores the exchangeability needed for finite-sample validity.
When $q_n$ is uniform, right-translation preserves the law, so $\pi_0$ plays no role
and one may draw $\pi_1,\dots,\pi_M$ directly from $q_n$.

Each permutation $\tau$ induces an assignment vector $W(\tau)\in\{0,1\}^n$ via
\[
  W_i(\tau) := \ind\{\tau(i)\in A_n\} = \ind\{\tau(i)\le n_1\}, \qquad i=1,\dots,n,
\]
which always has exactly $n_1$ treated units.
Since all test statistics below depend on $\tau$ only through $W(\tau)$,
the many-to-one nature of the map $\tau\mapsto W(\tau)$ is inconsequential.

\paragraph{Test statistic (difference in means).}
For any assignment vector $W\in\{0,1\}^n$ with $\sum_{i=1}^n W_i=n_1$,
define the (standardized) difference-in-means statistic
\begin{equation}\label{eq:Tn}
T_n(X,W)
:=
\frac{1}{\sigma_\varepsilon\sqrt{n\rho_n(1-\rho_n)}}\sum_{i=1}^n (W_i-\rho_n)X_i,
\end{equation}
where $\sigma_\varepsilon^2 > 0$ denotes the noise variance (defined formally in Assumption~\ref{ass:pitman} below).
When $W=w_0$, we write $T_n^{\mathrm{obs}}:=T_n(X,w_0)$.
Since permutation tests are scale-invariant (the $p$-value depends on $T_n$ only through
its rank among the permuted values), the value of $\sigma_\varepsilon$ is irrelevant to the test itself.
The oracle normalization is adopted purely to simplify the asymptotic statements,
and Section~\ref{sec:AL} gives conditions under which studentized and other
asymptotically linear statistics inherit the same limiting behavior.
Throughout, we focus on the one-sided test that rejects for large values of $T_n^{\mathrm{obs}}$,
and the extension to two-sided tests follows naturally.

\paragraph{Validity.}
The following result, due to \citet[][Theorem 2]{ramdas2023permutation}, confirms that
this construction yields a valid $p$-value for any choice of $q_n$.

\begin{theorem}[Validity with Monte Carlo draws from an arbitrary permutation distribution]
\label{thm:ramdas-valid}
Let $q_n$ be any pmf on $\Sn$. 
Draw $\pi_0,\pi_1,\dots,\pi_M\iid q_n$ independently of the data.
Define
\[
P_n(X)
:=
\frac{
1+\sum_{m=1}^M 
\ind\big\{
T_n(X,W(\pi_m\circ\pi_0^{-1}))
\ge 
T_n(X,w_0)
\big\}
}{1+M}.
\]
If $X$ is exchangeable under the null, then $P_n(X)$ is a valid $p$-value, i.e.,
\[
\mP\big(P_n(X)\le \alpha\big)\le \alpha
\quad\text{for all }\alpha\in[0,1].
\]
\end{theorem}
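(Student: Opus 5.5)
The argument reduces validity to an exchangeability statement about the $M+1$ statistics $T_0:=T_n(X,w_0)$ and $T_m:=T_n(X,W(\pi_m\circ\pi_0^{-1}))$, $m=1,\dots,M$. Once I show that $(T_0,T_1,\dots,T_M)$ is exchangeable, the standard rank argument finishes the proof. The rank of $T_0$ among the $M+1$ values, with ties broken conservatively, is then stochastically at least uniform. Since $P_n(X)$ is exactly $(1+\#\{m\ge1:T_m\ge T_0\})/(1+M)$, this gives $\mP(P_n(X)\le\alpha)\le\lfloor\alpha(M+1)\rfloor/(M+1)\le\alpha$. To handle ties cleanly I would use the usual device: with ties, $\#\{j:T_j\ge T_{J}\}$ for $J$ uniform on $\{0,\dots,M\}$ stochastically dominates a uniform on $\{1,\dots,M+1\}$.

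For the exchangeability I will write each statistic through a permuted data vector. Let $X_\sigma$ denote the vector with entries $(X_\sigma)_i=X_{\sigma^{-1}(i)}$, with the convention chosen so that the following holds:
\[
T_n\bigl(X,W(\tau)\bigr)=\sum_i (\ind\{\tau(i)\le n_1\}-\rho_n)X_i/c=T_n(X_{\tau},w_0).
\]
Here $c$ is the normalizing constant, and the identity is a reindexing $j=\tau(i)$. Hence $T_m=T_n(X_{\pi_m\circ\pi_0^{-1}},w_0)$. Now set $Z:=X_{\pi_0^{-1}}$, the data permuted by the inverse of the base draw. Up to the order of composition in this convention, $T_m=T_n(Z_{\pi_m},w_0)$ for every $m=0,1,\dots,M$, with $m=0$ giving $Z_{\pi_0}=X$.

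The key step is to show that, conditionally on $Z$, the draws $\pi_0,\dots,\pi_M$ are i.i.d.\ from $q_n$. Under the null, $X$ is exchangeable and independent of $\pi_0$, so $X_{\pi_0^{-1}}\overset{d}{=}X$ holds conditionally on $\pi_0$. Thus $Z$ is independent of $\pi_0$, and also independent of $\pi_1,\dots,\pi_M$, with $X=Z_{\pi_0}$. The pair $(Z,(\pi_0,\dots,\pi_M))$ therefore has $Z$ independent of an i.i.d.\ vector. The statistics are $T_m=f(Z,\pi_m)$ for a single fixed function $f$, so $(T_0,\dots,T_M)$ is exchangeable.

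I expect the only real care to lie in the left/right composition conventions. The right-translate $\pi_m\circ\pi_0^{-1}$ must match the action $X\mapsto X_\sigma$ so that $T_0$ really corresponds to $\pi_0$ in the same way that $T_m$ corresponds to $\pi_m$. I would fix the convention for $X_\sigma$ first and verify the identity $X_{\pi_m\circ\pi_0^{-1}}=(X_{\pi_0^{-1}})_{\pi_m}$ directly, adjusting the definition of $X_\sigma$ if needed.
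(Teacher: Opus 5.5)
Your proposal is correct and is essentially the standard argument of \citet[Theorem~2]{ramdas2023permutation}, which the paper cites rather than reproving. In that argument, with $Z=X_{\pi_0^{-1}}$ one has $T_m=T_n(Z_{\pi_m},w_0)$ for all $m=0,\dots,M$, so the $M+1$ statistics are conditionally i.i.d.\ given $Z$, and the rank argument finishes the proof.

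Two small remarks. First, your convention $(X_\sigma)_i=X_{\sigma^{-1}(i)}$ already satisfies $(X_\rho)_\sigma=X_{\sigma\circ\rho}$, so no adjustment is needed. Second, the independence should be stated jointly, by conditioning on the whole vector $(\pi_0,\dots,\pi_M)=(s_0,\dots,s_M)$, under which $Z=X_{s_0^{-1}}\overset{d}{=}X$; independence of $Z$ from $\pi_0$ and from $(\pi_1,\dots,\pi_M)$ separately would not suffice on its own.
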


\subsection{Pitman local model}\label{sec:pitman}

We analyze power under the standard Pitman local alternative,
which posits an order-$n^{-1/2}$ mean shift on the treated units.

\begin{assumption}[Pitman local alternative]\label{ass:pitman}
Let $\varepsilon_i$ be i.i.d.\ with $\E[\varepsilon_i]=0$,
$\Var(\varepsilon_i)=\sigma_\varepsilon^2\in(0,\infty)$,
and $\E|\varepsilon_i|^{2+\kappa}<\infty$ for some $\kappa>0$.
Assume
\[
X_i = \mu + \frac{\delta}{\sqrt{n}}\,w_{0,i} + \varepsilon_i,\qquad i=1,\dots,n,
\]
for some fixed $\mu\in\R$ and fixed local effect $\delta\in\R$.
\end{assumption}

Under the null hypothesis $H_0:\delta=0$, the data $X_1,\dots,X_n$ are i.i.d.\ with
common mean $\mu$ and hence exchangeable, so Theorem~\ref{thm:ramdas-valid} ensures validity.
Under the local alternative $H_1:\delta\neq 0$, the treated units $\{i:w_{0,i}=1\}$
carry an order-$n^{-1/2}$ mean shift, and our analysis characterizes the resulting asymptotic power.
The intercept $\mu$ cancels in $T_n$ since $\sum_{i=1}^n(w_{0,i}-\rho_n)=0$,
and is included only to make explicit that the test is invariant to common location shifts.

\section{Optimal Pitman power under vanishing dispersions}\label{sec:optimal}

Throughout this section, we analyze the population version of the generalized permutation test,
in which the critical value $c_n(\alpha\mid X,\pi_0)$ is defined as the exact $(1-\alpha)$-quantile
of the conditional permutation distribution under $q_n^{(\pi_0)}$.
This corresponds to the limit $M\to\infty$ of the Monte Carlo test of Theorem~\ref{thm:ramdas-valid}.
For finite $M$, the Monte Carlo critical value $\hat{c}_n(\alpha\mid X,\pi_0,M)$ differs from
the population critical value $c_n(\alpha\mid X,\pi_0)$ by $O_P(M^{-1/2})$,
as follows from a standard CLT for empirical quantiles applied to the $M$ i.i.d.\ permuted statistics.
Since this error vanishes as $M\to\infty$ independently of $n$,
it does not interact with the $n\to\infty$ asymptotics studied here,
and all power statements below carry over to the Monte Carlo test for any $M$ that tends to infinity.

\subsection{First- and second-order dispersions}\label{sec:dispersions}

To quantify how far $q_n$ departs from complete randomization,
we introduce two scalar dispersion measures based on the marginal and pairwise
assignment probabilities induced by $q_n$.
Let $\pi\sim q_n$, and define the one- and two-point block probabilities
\[
a_k:=\mP(\pi(k)\in A_n),\qquad b_{k\ell}:=\mP(\pi(k)\in A_n,\ \pi(\ell)\in A_n),\ \ k\neq \ell,
\]
and the complete-randomization benchmark
\[
\tau_n:=\frac{n_1(n_1-1)}{n(n-1)}.
\]

\begin{definition}[$q_n$-dispersions]\label{def:V}
Define
\[
V_{1,n}:=\frac1n\sum_{k=1}^n (a_k-\rho_n)^2,
\qquad V_{2,n}:=\frac1{n^2}\sum_{k\neq \ell}(b_{k\ell}-\tau_n)^2.
\]
\end{definition}

The first-order dispersion $V_{1,n}$ measures how far the marginal treatment assignment probabilities $a_k=\mP(\pi(k)\in A_n)$ deviate from the complete-randomization benchmark $\rho_n=n_1/n$.
Under complete randomization, every unit is equally likely to be treated, so $a_k=\rho_n$ for all $k$ and $V_{1,n}=0$.
A nonzero $V_{1,n}$ indicates that some units are systematically more (or less) likely to receive treatment than others.
The second-order dispersion $V_{2,n}$ captures pairwise dependence,
measuring how far the joint probabilities $b_{k\ell}=\mP(\pi(k),\pi(\ell)\in A_n)$ deviate from the benchmark $\tau_n=n_1(n_1-1)/(n(n-1))$.
Even when marginal probabilities are balanced ($V_{1,n}=0$),
the assignment of one unit to treatment may strongly predict the assignment of another,
and this pairwise dependence is reflected in $V_{2,n}$.
As we show below, both dispersions must vanish for optimal power; marginal balance alone is not sufficient.

\begin{remark}[Invariance under right-translation]
The dispersions $(V_{1,n},V_{2,n})$ depend on $q_n$ only through symmetric averages of
$(a_k,b_{k\ell})$ and are therefore invariant under right-translation:
for every $s\in\Sn$, the law $q_n^{(s)}$ yields the same dispersions as $q_n$.
Consequently, our dispersion-based analysis applies uniformly across the family of
conditional laws $\{q_n^{(\pi_0)}:\pi_0\in\Sn\}$ arising from the base draw~$\pi_0$.
See Lemma~\ref{lem:translate} in the Appendix for a formal statement.
\end{remark}

We formalize the vanishing condition as follows.

\begin{assumption}[Vanishing dispersions]\label{ass:V}
Assume $V_{1,n}\to 0$ and $V_{2,n}\to 0$.
\end{assumption}

\subsection{Main results}\label{sec:main-results}

Throughout this subsection, let $\tau\sim q_n^{(\pi_0)}$ be drawn independently of $X$
conditional on $\pi_0$, and define the conditional permutation cumulative distribution function (cdf)
\[
F_n(t\mid X,\pi_0):=\mP\!\left(T_n(X,W(\tau))\le t\mid X,\pi_0\right)
\]
with $(1-\alpha)$-quantile $c_n(\alpha\mid X,\pi_0)$.
Under vanishing dispersions, $F_n$ converges uniformly to $\Phi$,
the critical values stabilize at $z_{1-\alpha}$,
and the generalized permutation test attains the same local power as the $z$-test.

\begin{theorem}[Permutation CLT under vanishing dispersions]\label{thm:cond-normal}
Under Assumptions~\ref{ass:rhon}, \ref{ass:pitman}, and \ref{ass:V},
\[
\sup_{t\in\R}\left|F_n(t\mid X,\pi_0)-\Phi(t)\right|\toP 0,
\]
and consequently
\[
c_n(\alpha\mid X,\pi_0)\toP z_{1-\alpha}.
\]
\end{theorem}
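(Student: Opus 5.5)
The plan is a \emph{self-averaging} (second-moment) argument that integrates over the data $X$, rather than a conditional CLT for a fixed design vector. The reason is that $V_{1,n},V_{2,n}$ control only first and second moments of the assignment vector, which is not enough for a combinatorial CLT along a fixed $X$. Fix $t\in\R$. It suffices to show that $\E[F_n(t\mid X,\pi_0)]\to\Phi(t)$ and $\E[F_n(t\mid X,\pi_0)^2]\to\Phi(t)^2$, where the expectations are over $(X,\pi_0)$. Together these give $F_n(t\mid X,\pi_0)\to\Phi(t)$ in $L^2$, hence in probability. Pointwise convergence in probability then upgrades to uniform convergence by the usual P\'olya argument, since $F_n$ is monotone and $\Phi$ is continuous and bounded: use a finite grid of $t$'s and a subsequence argument. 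The quantile statement then follows because $\Phi$ is continuous and strictly increasing at $z_{1-\alpha}$.

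For the second moment, introduce $\tau,\tau'$ that are i.i.d.\ from $q_n^{(\pi_0)}$ given $\pi_0$ and independent of $X$. Then
\[
\E[F_n(t\mid X,\pi_0)^2]=\mP\bigl(T_n(X,W(\tau))\le t,\ T_n(X,W(\tau'))\le t\bigr).
\]
Condition on $(\pi_0,\tau,\tau')$ and write $W=W(\tau)$, $W'=W(\tau')$. Since $\sum_i(W_i-\rho_n)=0$, the pair $(T_n(X,W),T_n(X,W'))$ equals a sum of independent mean-zero bivariate vectors $\bigl((W_i-\rho_n)\varepsilon_i,(W'_i-\rho_n)\varepsilon_i\bigr)$, suitably normalized, plus a deterministic shift. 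Because $\sum_i(W_i-\rho_n)^2=n\rho_n(1-\rho_n)$ exactly, each coordinate has conditional variance exactly $1$. Their conditional correlation is
\[
r(W,W'):=\frac{1}{n\rho_n(1-\rho_n)}\sum_{i}(W_i-\rho_n)(W'_i-\rho_n).
\]
The shift of the first coordinate is $\delta\sqrt{\rho_n(1-\rho_n)}\,r(W,w_0)/\sigma_\varepsilon$. The weights $W_i-\rho_n$ are bounded and $\E|\varepsilon|^{2+\kappa}<\infty$, so a multivariate Berry--Esseen/Lyapunov bound gives closeness to the bivariate normal law with these moments. This bound is \emph{uniform} over all assignment pairs, with error $O(n^{-\kappa/2})$. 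It therefore suffices to show that $r(W,W')\toP 0$ and $r(W,w_0)\toP 0$ when averaged over $(\pi_0,\tau,\tau')$. The first moment is handled by the same argument with a single copy.

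Both correlation quantities have the same structure. Write $\tau=\pi\circ\pi_0^{-1}$ with $\pi,\pi_0$ i.i.d.\ $\sim q_n$, and reindex by $k=\pi_0^{-1}(i)$. Then
\[
\sum_i(W_i-\rho_n)(w_{0,i}-\rho_n)=\sum_k(\ind\{\pi(k)\in A_n\}-\rho_n)(\ind\{\pi_0(k)\in A_n\}-\rho_n)
\]
with independent $\pi,\pi_0\sim q_n$. Likewise $r(W,W')$ is the same inner product built from two independent draws $\pi,\pi'\sim q_n$. So the key step is a moment computation. Let $Z_k=\ind\{\pi(k)\in A_n\}-\rho_n$ and $c_{k\ell}=\E[Z_kZ_\ell]$. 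Then $c_{kk}=a_k(1-2\rho_n)+\rho_n^2$ and $c_{k\ell}=b_{k\ell}-\rho_n(a_k+a_\ell)+\rho_n^2$ for $k\ne\ell$. For independent copies,
\[
\E\bigl[r^2\bigr]=\frac{1}{n^2\rho_n^2(1-\rho_n)^2}\sum_{k,\ell}c_{k\ell}^2 .
\]
Split each $c_{k\ell}$ into its complete-randomization value and a deviation. The benchmark value $\tau_n-\rho_n^2=O(1/n)$ off the diagonal, together with bounded diagonal entries, contributes $O(1/n)$. The deviation $c_{k\ell}-c^0_{k\ell}=(b_{k\ell}-\tau_n)-\rho_n(a_k-\rho_n)-\rho_n(a_\ell-\rho_n)$ contributes at most a constant times $V_{2,n}+V_{1,n}$, using $(x+y+z)^2\le 3(x^2+y^2+z^2)$ and Assumption~\ref{ass:rhon}. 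Hence $\E[r^2]=O(V_{1,n}+V_{2,n}+n^{-1})\to 0$ by Assumption~\ref{ass:V}. With both correlations negligible, the bivariate normal probability converges to $\Phi(t)^2$ by dominated convergence, and the single-copy probability converges to $\Phi(t)$.

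I expect the main obstacle to be bookkeeping rather than conceptual difficulty. One must verify that the Berry--Esseen error is uniform over all $(W,W')$, including nearly degenerate correlation, which is harmless because we only need the target to be continuous in $r$ at $r=0$. One must also handle the cross terms in the expansion of $\sum_{k,\ell}c_{k\ell}^2$ cleanly. The base draw $\pi_0$ is absorbed through the reindexing above, which is the content of the translation-invariance Lemma~\ref{lem:translate}.
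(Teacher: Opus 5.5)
Your proposal is correct and follows essentially the paper's route. Your two-copy second-moment argument is exactly the paper's two-randomization criterion (Lemma~\ref{lem:two-rand}). You combine it, as the paper does, with a conditional bivariate CLT given the assignments, whose correlation $r(W,W')=\gamma_n$ and shift $\propto O_n^{(0)}-\rho_n^2$ vanish under the dispersion conditions. The differences are only technical: you bound $\E[r^2]$ directly through the $c_{k\ell}$ expansion and absorb $\pi_0$ by reindexing to two independent $q_n$-draws, which gives convergence averaged over $\pi_0$. The paper instead bounds the conditional mean and variance of the overlaps uniformly in $\pi_0=s$ via Lemmas~\ref{lem:translate} and~\ref{lem:overlap-bound}, and uses a Lindeberg argument where you use Berry--Esseen.
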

\begin{proof}
    See Appendix~\ref{app:proof-cond-normal}.
\end{proof}
The proof proceeds in three steps.
The first step decomposes the resampled statistic $T_n=T_n(X,W(\tau))$ into a noise term $N_n$
and a signal term $L_n$ using the Pitman local structure of the data.
The signal term involves the \emph{reference overlap} $O_n^{(0)}:=\frac1n\sum_{i\in A_n}W_i(\tau)$,
which measures how much the resampled assignment agrees with the original labeling.
Vanishing dispersions force $O_n^{(0)}\toP\rho^2$
(Proposition~\ref{prop:overlap} in the Appendix),
which makes $L_n$ negligible.

The second step establishes a conditional bivariate CLT for the noise pair $(N_n,N_n')$
arising from two independent resampled assignments.
Conditional on the assignments, the noise terms are sums of independent random variables
(driven by the i.i.d.\ errors $\varepsilon_i$) whose correlation is
$\gamma_n=(O_n-\rho_n^2)/(\rho_n(1-\rho_n))$,
where $O_n:=\frac1n\sum_i W_iW_i'$ is the overlap between the two resampled assignments.
Vanishing dispersions again force $O_n\toP\rho^2$, so $\gamma_n\toP 0$,
and the noise pair becomes asymptotically i.i.d.\ standard normal.

The third step applies a two-randomization criterion
(Lemma~\ref{lem:two-rand} in the Appendix):
roughly, if two conditionally i.i.d.\ copies of a statistic are jointly Gaussian in the limit,
then the conditional cdf must concentrate uniformly around that Gaussian.
This upgrades the joint CLT from Step~2 to the desired uniform convergence of $F_n$.
Together, the three steps yield Theorem~\ref{thm:cond-normal}.

\begin{proposition}[Observed statistic has a local mean shift]\label{prop:obs}
Let $\Delta := \delta\sqrt{\rho(1-\rho)}/\sigma_\varepsilon$ denote the asymptotic signal-to-noise ratio. Under Assumptions~\ref{ass:rhon} and~\ref{ass:pitman},
\[
T_n^{\mathrm{obs}} \dto \N(\Delta,1).
\]
\end{proposition}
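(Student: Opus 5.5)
The plan is to plug the Pitman model of Assumption~\ref{ass:pitman} into \eqref{eq:Tn} at $W=w_0$ and split $T_n^{\mathrm{obs}}$ into a deterministic signal term and a centered noise term. The intercept drops out because $\sum_i (w_{0,i}-\rho_n)=0$. For the signal, $\sum_i (w_{0,i}-\rho_n)w_{0,i}=n_1(1-\rho_n)=n\rho_n(1-\rho_n)$. Hence
\[
T_n^{\mathrm{obs}}
=\frac{\delta}{\sqrt n}\cdot\frac{n\rho_n(1-\rho_n)}{\sigma_\varepsilon\sqrt{n\rho_n(1-\rho_n)}}
+\frac{1}{\sigma_\varepsilon\sqrt{n\rho_n(1-\rho_n)}}\sum_{i=1}^n (w_{0,i}-\rho_n)\varepsilon_i
=:\Delta_n+N_n^{\mathrm{obs}},
\]
with $\Delta_n=\delta\sqrt{\rho_n(1-\rho_n)}/\sigma_\varepsilon$. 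By Assumption~\ref{ass:rhon}, $\Delta_n\to\Delta$.

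It then remains to show $N_n^{\mathrm{obs}}\dto\N(0,1)$; Slutsky's lemma then gives the claim. The term $N_n^{\mathrm{obs}}$ is a weighted sum of independent mean-zero variables with deterministic weights $c_{n,i}=(w_{0,i}-\rho_n)/(\sigma_\varepsilon\sqrt{n\rho_n(1-\rho_n)})$. Its variance is exactly
\[
\sigma_\varepsilon^2\sum_i c_{n,i}^2=\frac{\sum_i (w_{0,i}-\rho_n)^2}{n\rho_n(1-\rho_n)}=1,
\]
because $\sum_i (w_{0,i}-\rho_n)^2=n_1(1-\rho_n)^2+n_0\rho_n^2=n\rho_n(1-\rho_n)$.

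I will verify the Lyapunov condition using the $(2+\kappa)$ moment in Assumption~\ref{ass:pitman}. Since $|w_{0,i}-\rho_n|\le 1$,
\[
\sum_i |c_{n,i}|^{2+\kappa}\,\E|\varepsilon_i|^{2+\kappa}
\le \frac{n\,\E|\varepsilon_1|^{2+\kappa}}{\sigma_\varepsilon^{2+\kappa}(n\rho_n(1-\rho_n))^{1+\kappa/2}}
=O\big(n^{-\kappa/2}\big)\to 0 .
\]
The last step uses $\rho_n(1-\rho_n)\to\rho(1-\rho)>0$. Equivalently, one could apply Lindeberg with the max-weight condition $\max_i c_{n,i}^2\to 0$, which would not even need the extra moment.

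There is no real obstacle here. The only care needed is the exact algebra showing that the signal coefficient and the noise variance normalize to $\Delta_n$ and $1$, together with the boundedness of $\rho_n(1-\rho_n)$ away from zero, which Assumption~\ref{ass:rhon} provides.
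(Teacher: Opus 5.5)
Your proof is correct and follows the paper's argument essentially step for step: the same signal--noise split using $\sum_i(w_{0,i}-\rho_n)w_{0,i}=\sum_i(w_{0,i}-\rho_n)^2=n\rho_n(1-\rho_n)$, a CLT for the unit-variance weighted noise sum, and Slutsky. The only cosmetic difference is that you check Lyapunov's condition via the $(2+\kappa)$ moment, whereas the paper invokes Lindeberg--Feller through $\max_i|c_{n,i}|\to 0$, which you also mention as an alternative.
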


\begin{proof}
See Appendix~\ref{app:proof-obs}.
\end{proof}

The following theorem is an immediate consequence.

\begin{theorem}[Vanishing dispersions yield optimal local power]\label{thm:power}
Under Assumptions~\ref{ass:rhon}, \ref{ass:pitman}, and \ref{ass:V},
$c_n(\alpha\mid X,\pi_0)\toP z_{1-\alpha}$ and
\[
\mP\bigl(T_n^{\mathrm{obs}} > c_n(\alpha\mid X,\pi_0)\bigr)
\ \rightarrow\
1-\Phi\!\left(z_{1-\alpha}-\Delta\right).
\]
In particular, the generalized permutation test attains the asymptotically optimal Pitman local power.
\end{theorem}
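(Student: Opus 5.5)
The proof combines Theorem~\ref{thm:cond-normal} and Proposition~\ref{prop:obs} through Slutsky's lemma.

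First, Theorem~\ref{thm:cond-normal} already gives $c_n(\alpha\mid X,\pi_0)\toP z_{1-\alpha}$ under Assumptions~\ref{ass:rhon}, \ref{ass:pitman}, and \ref{ass:V}. This is the first claim.

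Next, write the rejection event as $\{T_n^{\mathrm{obs}}-c_n(\alpha\mid X,\pi_0)>0\}$. By Proposition~\ref{prop:obs}, $T_n^{\mathrm{obs}}\dto\N(\Delta,1)$. Since $c_n\toP z_{1-\alpha}$, a constant, Slutsky's lemma gives
\[
T_n^{\mathrm{obs}}-c_n(\alpha\mid X,\pi_0)\dto \N(\Delta-z_{1-\alpha},1).
\]
Joint convergence of the pair is automatic here, because the limit of the second component is degenerate. This remains true even though $c_n$ depends on the same $X$ as $T_n^{\mathrm{obs}}$, and on $\pi_0$.

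The limit law is continuous, so the boundary $\{0\}$ of the half-line $(0,\infty)$ has limiting probability zero. The portmanteau theorem then yields
\[
\mP\bigl(T_n^{\mathrm{obs}}>c_n\bigr)\to \mP\bigl(Z+\Delta-z_{1-\alpha}>0\bigr)=1-\Phi(z_{1-\alpha}-\Delta),
\]
where $Z\sim\N(0,1)$.

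For the optimality claim, I would compare with the oracle $z$-test that rejects when $T_n^{\mathrm{obs}}>z_{1-\alpha}$. By Proposition~\ref{prop:obs} its power has the same limit. That limit is the optimal local power for this one-sided location alternative: by Le Cam's LAN theory, the $z$-test attains the asymptotic power envelope, which is the footnoted reference. The generalized permutation test therefore matches the envelope.

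I do not expect a substantive obstacle. The one point that needs care is that $c_n$ is random and dependent on $X$, and this is handled because Slutsky's lemma requires only convergence in probability to a constant. I would also note that randomness in $\pi_0$ is harmless, because Theorem~\ref{thm:cond-normal} holds unconditionally in probability over $(X,\pi_0)$.
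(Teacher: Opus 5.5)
Your proposal is correct and follows the paper's proof: Theorem~\ref{thm:cond-normal} supplies $c_n(\alpha\mid X,\pi_0)\toP z_{1-\alpha}$, Proposition~\ref{prop:obs} supplies $T_n^{\mathrm{obs}}\dto\N(\Delta,1)$, and Slutsky's theorem combines them. Your additional remarks spell out what the paper's one-line argument leaves implicit. These are that joint convergence is automatic because the limit of $c_n$ is degenerate, that the portmanteau step is valid because $0$ is a continuity point of the limit law, and the comparison with the oracle $z$-test.
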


\begin{proof}
Theorem~\ref{thm:cond-normal} gives $c_n(\alpha\mid X,\pi_0)\toP z_{1-\alpha}$
and Proposition~\ref{prop:obs} gives $T_n^{\mathrm{obs}}\dto \N(\Delta,1)$.
Slutsky's theorem then gives
$T_n^{\mathrm{obs}}-c_n(\alpha\mid X,\pi_0)\dto \N(\Delta-z_{1-\alpha},1)$,
and the result follows.
\end{proof}

\subsection{Necessary conditions: convergence failure}\label{sec:necessary}

The previous subsection established that vanishing dispersions are sufficient for optimal power.
A natural question is whether they are also necessary.
The following theorem, proved under the additional assumption of Gaussian noise,
shows that they are: if $V_{1,n}+V_{2,n}\not\to 0$,
the permutation critical values cannot stabilize at the $z$-test benchmark.

\begin{assumption}[Gaussian noise]\label{ass:gauss-noise}
In Assumption~\ref{ass:pitman}, assume additionally that $\varepsilon_i\iid \N(0,\sigma_\varepsilon^2)$.
\end{assumption}

\begin{theorem}[Convergence failure]\label{thm:necessary}
Under Assumptions~\ref{ass:rhon}, \ref{ass:pitman}, and~\ref{ass:gauss-noise},
if $V_{1,n}+V_{2,n}\not\to 0$, then there exists $\alpha\in(0,1)$ such that
\[
 c_n(\alpha\mid X,\pi_0)\ \not\toP\ z_{1-\alpha}.
\]
Consequently, without vanishing dispersions there is no general guarantee that the generalized permutation test attains the $z$-test Pitman local-power benchmark at every nominal level.
\end{theorem}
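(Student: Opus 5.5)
Under Gaussian noise, and conditional on $\pi_0$, each resampled statistic is an exactly Gaussian function of the errors. This lets us compute the conditional mean and variance of the permutation cdf in closed form. The plan is to show that non-vanishing dispersions keep the permutation distribution away from $\Phi$ along a subsequence, and then to translate this into quantile failure at some level $\alpha$.

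\textbf{Step 1 (reduction to a subsequence and to the overlap).} Pass to a subsequence with $V_{1,n}+V_{2,n}\ge\eta>0$. By Lemma~\ref{lem:translate} we may work with $q_n^{(\pi_0)}$ uniformly in $\pi_0$. Let $\tau,\tau'$ be conditionally i.i.d.\ from $q_n^{(\pi_0)}$, and write $W=W(\tau)$ and $W'=W(\tau')$. Set
\[
O_n=\frac1n\sum_i W_iW_i' .
\]
A direct computation gives
\[
\E[O_n\mid\pi_0]=\frac1n\sum_k a_k^2 = \rho_n^2+V_{1,n}.
\]
The second moment $\E[O_n^2\mid\pi_0]$ involves $\frac1{n^2}\sum_{k\ne\ell}b_{k\ell}^2$, which equals $\tau_n^2+V_{2,n}$ plus a cross term. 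The cross term vanishes because $\sum_{\ell\ne k}b_{k\ell}=(n_1-1)a_k$, so $\sum_{k\neq\ell} b_{k\ell}=n_1(n_1-1)$.

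Hence $\Var(O_n\mid\pi_0)=V_{2,n}+(\text{terms of order }V_{1,n}\text{ and }1/n)$. I expect to show that $V_{1,n}+V_{2,n}\not\to0$ forces
\[
\gamma_n=\frac{O_n-\rho_n^2}{\rho_n(1-\rho_n)}\not\toP 0 .
\]
If $V_{1,n}\ge\eta/2$, the mean of $O_n$ already stays bounded away from $\rho^2$. Otherwise $V_{2,n}\ge\eta/2$, and then $\E[(O_n-\rho_n^2)^2]$ is bounded below. Since $|\gamma_n|$ is bounded, this gives $\liminf \E[\gamma_n^2]>0$.

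\textbf{Step 2 (the self-averaging identity).} Under Assumption~\ref{ass:gauss-noise}, decompose $T_n(X,W)=N_n(W)+L_n(W)$ as in the proof of Theorem~\ref{thm:cond-normal}. Conditional on $(W,W')$, the pair $(N_n(W),N_n(W'))$ is exactly bivariate normal with unit variances and correlation $\gamma_n$. Suppose, toward a contradiction, that $c_n(\alpha\mid X,\pi_0)\toP z_{1-\alpha}$ for every $\alpha$. By monotonicity and density this is equivalent to $F_n(\cdot\mid X,\pi_0)\to\Phi$ uniformly in probability.

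For fixed $t$, bounded convergence then gives
\[
\E\bigl[F_n(t\mid X,\pi_0)^2\bigr]\to\Phi(t)^2 .
\]
On the other hand,
\[
\E[F_n(t)^2]=\mP\bigl(T_n(W)\le t,\ T_n(W')\le t\bigr)
=\E\bigl[\Phi_2(t,t;\gamma_n)\bigr]+o(1),
\]
where $\Phi_2$ is the bivariate normal cdf. We also need the signal term to be absorbed: $L_n$ is a deterministic function of $(W,\pi_0)$ of order $\delta\,(O^{(0)}_n-\rho_n^2)$. This is the main obstacle, because $L_n$ need not vanish when dispersions do not. I would handle it by choosing $t=0$ and exploiting the choice of $\alpha$. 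Alternatively, I would treat the conditional law of $T_n(W)$ given $W$ as $\N(m_n(W),1)$, with a random location $m_n(W)$. In that case the conditional cdf $F_n(t)$ concentrates only if both $m_n(W)$ and $\gamma_n$ are asymptotically degenerate at $0$ (Lemma~\ref{lem:two-rand} read in the converse direction).

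\textbf{Step 3 (contradiction via Slepian/strict monotonicity).} The map $\gamma\mapsto\Phi_2(t,t;\gamma)$ is smooth, with
\[
\partial_\gamma\Phi_2(t,t;\gamma)=\phi_2(t,t;\gamma)>0,
\]
and $\Phi_2(t,t;0)=\Phi(t)^2$. Taking $t=0$, the exact formula
\[
\Phi_2(0,0;\gamma)=\tfrac14+\tfrac{\arcsin\gamma}{2\pi}
\]
is available. If $V_{1,n}$ is non-negligible, $\E\gamma_n$ stays bounded away from $0$, and this formula gives a limit different from $\tfrac14$ directly. If instead only $V_{2,n}$ persists while $\E\gamma_n\to0$, the first-order term cancels. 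I would then use the mixture-of-Gaussians representation: $F_n(t)$ is a mixture over $W$ of normals, and the variance of $F_n(t)$ in $X$ equals
\[
\E\bigl[\Phi_2(t,t;\gamma_n)-\Phi(t)^2\bigr],
\]
whose second-order expansion $\tfrac12\phi(t)^2t^2\,\E\gamma_n^2$, together with the odd terms, is controlled by Hermite expansion:
\[
\Phi_2(t,t;\gamma)-\Phi(t)^2=\sum_{k\ge1}\frac{\gamma^k}{k!}\,\phi(t)^2 He_{k-1}(t)^2 .
\]
Because $\gamma_n$ is an overlap correlation of i.i.d.\ draws, its moments satisfy $\E\gamma_n^k\ge0$: writing $\gamma_n$ as an inner product $\langle u(W),u(W')\rangle$ of i.i.d.\ vectors, its $k$th moment is $\|\E u^{\otimes k}\|^2\ge0$. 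Hence every term of the series is nonnegative, and the $k=2$ term is strictly positive at any $t\neq0$. This yields
\[
\liminf\Var\bigl(F_n(t)\mid\pi_0\bigr)>0 ,
\]
contradicting uniform convergence to $\Phi$. Therefore some quantile fails to converge, which gives the required $\alpha$.
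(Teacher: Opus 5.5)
\paragraph{Gap: the signal term is never handled.} Your argument is complete only when $\delta=0$. In Step~3 you use $\Var(F_n(t)\mid\pi_0)=\E[\Phi_2(t,t;\gamma_n)\mid\pi_0]-\Phi(t)^2$, which presumes $T_n\mid(W,\pi_0)\sim\N(0,1)$. Under Assumption~\ref{ass:pitman} with $\delta\neq0$ one has instead $T_n\mid(W,\pi_0)\sim\N(\lambda_nD_n,1)$, where $\lambda_n=\delta/(\sigma_\varepsilon\sqrt{\rho_n(1-\rho_n)})$ and $D_n=O_n^{(0)}-\rho_n^2$. Hence $\E[F_n(t)\mid\pi_0]=\E[\Phi(t-\lambda_nD_n)\mid\pi_0]$ and $\E[F_n(t)^2\mid\pi_0]=\E[\Phi_2(t-\lambda_nD_n,\,t-\lambda_nD_n';\gamma_n)\mid\pi_0]$. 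Without Assumption~\ref{ass:V}, nothing makes $D_n$ small.

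You flag this, but neither remedy works as stated. Taking $t=0$ is counterproductive. Since $He_{k-1}(0)=0$ for even $k$, the $k=2$ term disappears, and it is the only term you can bound below by $\E\gamma_n^2$. The surviving odd terms give a bound only through $\E\gamma_n=V_{1,n}/(\rho_n(1-\rho_n))$, which is useless when $V_{1,n}\to0$ but $V_{2,n}\not\to0$ (e.g.\ the cyclic group). Your other remedy also fails: Lemma~\ref{lem:two-rand} is only a sufficient condition, so ``reading it in the converse direction'' proves nothing about $m_n(W)$.

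\paragraph{How to close it.} The missing step is short once you use that the hypothesized limit is $\Phi$ itself.
\begin{itemize}
\item If $F_n(\cdot\mid X,\pi_0)\to\Phi$ uniformly in probability, then $\mP(T_n\le t)=\E[F_n(t\mid X,\pi_0)]\to\Phi(t)$.
\item Since $N_n\mid(W,\pi_0)\sim\N(0,1)$ exactly, $\E e^{isT_n}=e^{-s^2/2}\,\E e^{is\lambda_nD_n}$. Hence $\E e^{is\lambda_nD_n}\to1$ for every $s$, so $\lambda_nD_n\toP0$.
\item $\Phi_2(x,y;\gamma)$ is $\phi(0)$-Lipschitz in each of $x,y$, uniformly in $\gamma$. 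This gives $\E[F_n(t)^2]=\E[\Phi_2(t,t;\gamma_n)]+o(1)$.
\item Your Mehler bound $\E[\Phi_2(t,t;\gamma_n)]-\Phi(t)^2\ge\tfrac12t^2\phi(t)^2\,\E\gamma_n^2$, at any $t\neq0$, then contradicts $\Var F_n(t)\to0$.
\end{itemize}

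\paragraph{Comparison with the paper.} With this step added, your route is a valid alternative. The paper works with $\E[T_n^2\mid X,\pi_0]$ rather than $F_n(t)$. Isserlis' theorem gives
\[
\Var(\E[T_n^2\mid X,\pi_0])=2\E\gamma_n^2+4\lambda_n^2\E[\gamma_nD_nD_n']+\lambda_n^4\Var(\E[D_n^2\mid\pi_0]).
\]
The signal cross term is nonnegative by the same i.i.d.\ inner-product positivity you use for $\E\gamma_n^k$. Gaussian uniform integrability then converts $F_n\to F$ into convergence of the conditional second moment. This absorbs the signal without any deconvolution step, and it excludes every deterministic limit with finite second moment, not just $\Phi$. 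Your route avoids uniform integrability and gives an explicit pointwise lower bound on $\Var F_n(t)$.

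\paragraph{Minor point in Step~1.} The split at $\eta/2$ does not suffice in the second case, because $\E[(O_n-\rho_n^2)^2]=V_{2,n}-2\rho_n^2V_{1,n}+O(1/n)$ and $V_{1,n}$ may still be of order $\eta$. Split at $\eta/4$ instead: if $V_{1,n}\ge\eta/4$, Jensen gives $\eta^2/16$; otherwise the formula gives at least $\eta/4+O(1/n)$. Alternatively, pass to further subsequences as the paper does.
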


\begin{proof}
See Appendix~\ref{app:proof-necessary}.
\end{proof}

The proof establishes a non-self-averaging phenomenon.
Recall that the overlap $O_n:=\frac{1}{n}\sum_{i=1}^n W_i W_i'$ measures agreement between two independent
resampled assignments, and the reference overlap $O_n^{(0)}:=\frac{1}{n}\sum_{i\in A_n}W_i(\tau)$
measures agreement with the observed labeling.
The conditional permutation distribution depends on $\pi_0$ through these quantities,
and when the dispersions do not vanish, they retain nontrivial fluctuations across realizations of $\pi_0$
that propagate into the conditional distribution of $T_n$.

More precisely, if $F_n(\cdot\mid X,\pi_0)$ were to converge in probability to some deterministic cdf $F$,
then $\E[T_n^2\mid X,\pi_0]$ would also have to concentrate.
Under Gaussian noise, the conditional law of $T_n$ given $(W,W',\pi_0)$ is Gaussian with
uniformly bounded variance, so $T_n^2$ is uniformly integrable, and the fourth-moment structure
of bivariate normals yields the identity
\[
\Var\bigl(\E[T_n^2\mid X,\pi_0]\bigr)\ge 2\,\E[\gamma_n^2],
\]
where $\gamma_n=(O_n-\rho_n^2)/(\rho_n(1-\rho_n))$ is the overlap correlation.
Lemma~\ref{lem:overlap-V} in the Appendix links $\E[\gamma_n^2]$ back to the dispersions:
if $V_{1,n}+V_{2,n}\not\to 0$, then $\E[\gamma_n^2]\not\to 0$,
so the conditional second moment fluctuates and $F_n$ cannot converge to any deterministic limit.

\subsection{Extension to asymptotically linear statistics}\label{sec:AL}

The preceding results are stated for the oracle-normalized difference in means.
The same dispersion argument also applies to statistics that are asymptotically linear,
but the relevant linearization must be strong enough for a permutation distribution.
A pointwise expansion at the observed assignment is not sufficient, while a supremum over
all balanced assignments is often unnecessarily strong and can fail for otherwise standard procedures.
We therefore formulate the transfer condition on the assignments that are actually sampled from
the conditional law $q_n^{(\pi_0)}$, together with the observed assignment.

Let
\[
\mathcal W_n:=\biggl\{w\in\{0,1\}^n:\ \sum_{i=1}^n w_i=n_1\biggr\}
\]
denote the set of feasible assignment vectors.  For a statistic $S_n(X,w)$, write
$W(\tau)$ for the assignment induced by a permutation $\tau\sim q_n^{(\pi_0)}$.

\begin{assumption}[Asymptotic linearization under the permutation law]\label{ass:AL}
There exist constants $\mu_L\in\R$, $\beta\in\R$, and $\sigma_L^2\in(0,\infty)$,
and i.i.d. random variables $\xi_1,\ldots,\xi_n$ satisfying
$\E[\xi_i]=0$, $\Var(\xi_i)=\sigma_L^2$, and
$\E|\xi_i|^{2+\kappa}<\infty$ for some $\kappa>0$, such that the ideal linear statistic
\[
L_n(w):=
\frac{1}{\sigma_L\sqrt{n\rho_n(1-\rho_n)}}
\sum_{i=1}^n (w_i-\rho_n)
\left(\mu_L+\frac{\beta}{\sqrt n}w_{0,i}+\xi_i\right)
\]
approximates $S_n$ in the following two senses:
for every $\eta>0$,
\begin{equation}\label{eq:qtypical-rem}
\mP_{\tau\sim q_n^{(\pi_0)}}\Bigl(
\bigl|S_n\{X,W(\tau)\}-L_n\{W(\tau)\}\bigr|>\eta
\, \Bigm|\, X,\pi_0\Bigr)\toP 0,
\end{equation}
and
\begin{equation}\label{eq:obs-rem}
\bigl|S_n(X,w_0)-L_n(w_0)\bigr|\toP 0.
\end{equation}
\end{assumption}

Condition~\eqref{eq:qtypical-rem} is deliberately weaker than
$\sup_{w\in\mathcal W_n}|S_n(X,w)-L_n(w)|=o_P(1)$.
It rules out relying on an expansion that holds only at one fixed assignment,
but it does not require control of adversarial assignments that have negligible or zero probability under $q_n^{(\pi_0)}$.
This distinction matters, for example, for studentized statistics: a variance estimator can be consistent
for $q_n^{(\pi_0)}$-typical assignments even though it is not uniformly consistent over all balanced splits of the data.

\begin{theorem}[Generalized permutation tests for asymptotically linear statistics]\label{thm:AL}
Under Assumptions~\ref{ass:rhon}, \ref{ass:V}, and~\ref{ass:AL},
letting $\Delta_L:=\frac{\beta\sqrt{\rho(1-\rho)}}{\sigma_L}$,
the conditional permutation cdf of $S_n(X,W(\tau))$ converges uniformly to $\Phi$,
the critical values satisfy $d_n(\alpha\mid X,\pi_0)\toP z_{1-\alpha}$,
$S_n^{\mathrm{obs}}\dto \N(\Delta_L,1)$, and
\[
\mP\bigl(S_n^{\mathrm{obs}}>d_n(\alpha\mid X,\pi_0)\bigr)
\ \rightarrow\
1-\Phi\!\left(z_{1-\alpha}-\Delta_L\right).
\]
\end{theorem}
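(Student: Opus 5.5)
The plan is to reduce everything to the difference-in-means case already treated in Theorems~\ref{thm:cond-normal} and~\ref{thm:power}. The ideal linear statistic $L_n(w)$ has the same form as $T_n(X,w)$ in \eqref{eq:Tn}, with $(\mu,\delta,\varepsilon_i,\sigma_\varepsilon)$ replaced by $(\mu_L,\beta,\xi_i,\sigma_L)$. Define the auxiliary data vector $\tilde X_i:=\mu_L+\beta n^{-1/2}w_{0,i}+\xi_i$. This vector satisfies Assumption~\ref{ass:pitman} exactly, including the $2+\kappa$ moment condition, and $L_n(w)=T_n(\tilde X,w)$ with $\sigma_L$ in place of $\sigma_\varepsilon$.

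\textbf{Step 1 (permutation CLT for $L_n$).} Apply Theorem~\ref{thm:cond-normal} to $\tilde X$. Under Assumptions~\ref{ass:rhon} and~\ref{ass:V}, the conditional cdf $G_n(t\mid\pi_0):=\mP(L_n\{W(\tau)\}\le t\mid X,\pi_0)$ satisfies $\sup_t|G_n(t)-\Phi(t)|\toP 0$.

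One technicality needs care here. The $\xi_i$ are functions of the data and may not be measurable with respect to $X$, so conditioning on $X$ versus on $\tilde X$ must be handled. I will treat the conditioning as being on $(X,\xi)$ jointly, or on any $\sigma$-field containing both. Because $\tau$ is independent of the data given $\pi_0$, the conditional law of $W(\tau)$ is unchanged, and the argument of Theorem~\ref{thm:cond-normal} goes through verbatim, since it only uses that $\tau$ is independent of the i.i.d. errors.

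\textbf{Step 2 (transfer to $S_n$).} Let $H_n(t):=\mP(S_n\{X,W(\tau)\}\le t\mid X,\pi_0)$ and $R_n:=S_n-L_n$ evaluated at $W(\tau)$. For every $\eta>0$,
\[
G_n(t-\eta)-\mP(|R_n|>\eta\mid X,\pi_0)\le H_n(t)\le G_n(t+\eta)+\mP(|R_n|>\eta\mid X,\pi_0).
\]
By \eqref{eq:qtypical-rem} and Step~1, this gives $\sup_t|H_n(t)-\Phi(t)|\le \sup_t|\Phi(t+\eta)-\Phi(t)|+o_P(1)\le \eta/\sqrt{2\pi}+o_P(1)$. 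Since $\eta$ is arbitrary, uniform convergence to $\Phi$ follows.

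Quantile convergence $d_n(\alpha\mid X,\pi_0)\toP z_{1-\alpha}$ then follows from the standard argument using continuity and strict monotonicity of $\Phi$ at $z_{1-\alpha}$. Specifically, for any $\epsilon>0$ we have $H_n(z_{1-\alpha}+\epsilon)>1-\alpha$ and $H_n(z_{1-\alpha}-\epsilon)<1-\alpha$ with probability tending to one.

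If the conditioning $\sigma$-field in Step~1 had to be enlarged, I will note that $d_n$ is defined via conditioning on $(X,\pi_0)$. I expect this to be the main obstacle, but it resolves by checking that the conditional law of $S_n\{X,W(\tau)\}$ given $(X,\xi,\pi_0)$ equals its law given $(X,\pi_0)$. This holds because $S_n$ depends only on $(X,\tau)$ and $\tau\perp(X,\xi)\mid\pi_0$.

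\textbf{Step 3 (observed statistic and power).} Proposition~\ref{prop:obs} applied to $\tilde X$ gives $L_n(w_0)\dto\N(\Delta_L,1)$. By \eqref{eq:obs-rem} and Slutsky's theorem, $S_n^{\mathrm{obs}}\dto\N(\Delta_L,1)$. Combining this with Step~2, $S_n^{\mathrm{obs}}-d_n\dto\N(\Delta_L-z_{1-\alpha},1)$. Since the limit is continuous at $0$, the rejection probability converges to $1-\Phi(z_{1-\alpha}-\Delta_L)$, exactly as in the proof of Theorem~\ref{thm:power}.
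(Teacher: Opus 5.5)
Your proposal is correct and follows the paper's proof essentially step for step. You apply Theorem~\ref{thm:cond-normal} and Proposition~\ref{prop:obs} to the auxiliary Pitman-model data $\mu_L+\beta n^{-1/2}w_{0,i}+\xi_i$, transfer the uniform cdf convergence to $S_n$ through the same $\eta$-sandwich with the conditional remainder probability, and finish with quantile continuity and Slutsky.

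Your remark about the conditioning $\sigma$-field when $\xi$ is not $X$-measurable addresses a point the paper passes over silently, and it does go through. The reason is that $\tau$ is independent of $(X,\xi)$ given $\pi_0$, and the remainder probability conditional on $(X,\xi,\pi_0)$ still tends to zero in probability by the tower property and Markov's inequality.
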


\begin{proof}
See Appendix~\ref{app:proof-AL}.
\end{proof}

The theorem is best read as a transfer principle.
Once a statistic is close under the randomization law used by the test to a linear statistic with a Pitman local mean shift,
Theorems~\ref{thm:cond-normal}--\ref{thm:power} apply to the linear statistic, and the approximation error is negligible.
We use this result below for the standard studentized difference in means, where the only additional verification is $q_n$-typical consistency of the variance estimator.

\medskip\noindent
\textbf{Example 1 (studentized difference in means).}
Consider the usual studentized statistic
\[
S_n^{\mathrm{stud}}(X,w)
:=
\frac{\sqrt{n\rho_n(1-\rho_n)}\{\bar X_1(w)-\bar X_0(w)\}}
{\widehat\sigma_p(w)},
\]
where
\[
\bar X_1(w):=\frac1{n_1}\sum_{i=1}^n w_iX_i,
\qquad
\bar X_0(w):=\frac1{n_0}\sum_{i=1}^n(1-w_i)X_i,
\]
and
\[
\widehat\sigma_p^2(w)
:=
\frac1n\sum_{i=1}^n w_i\{X_i-\bar X_1(w)\}^2
+
\frac1n\sum_{i=1}^n(1-w_i)\{X_i-\bar X_0(w)\}^2.
\]
The particular finite-sample denominator convention is immaterial for the following asymptotic argument.

Under Assumption~\ref{ass:pitman}, suppose additionally that $\E\varepsilon_i^4<\infty$.
If the base law $q_n$ satisfies $V_{1,n}\to0$ and $V_{2,n}\to0$, then the same two-moment calculation underlying Theorem~\ref{thm:cond-normal} gives a $q_n$-weighted law of large numbers.  In particular, for $Z_i=X_i$ and $Z_i=X_i^2$,
\[
\mP_{\tau\sim q_n^{(\pi_0)}}\Biggl(
\biggl|\frac1n\sum_{i=1}^n\{W_i(\tau)-\rho_n\}Z_i\biggr|>\eta
\, \Bigm|\, X,\pi_0\Biggr)\toP0
\qquad(\eta>0).
\]
Thus the treatment and control sample means and second moments along $q_n$-typical relabelings converge to their full-sample counterparts.  Since $\E\varepsilon_i^4<\infty$ and the local shift is $O(n^{-1/2})$, this yields
\begin{equation}\label{eq:pooled-var-qtypical}
\mP_{\tau\sim q_n^{(\pi_0)}}\Bigl(
\bigl|\widehat\sigma_p\{W(\tau)\}-\sigma_\varepsilon\bigr|>\eta
\, \Bigm|\, X,\pi_0\Bigr)\toP0
\qquad(\eta>0),
\end{equation}
and the ordinary law of large numbers gives $\widehat\sigma_p(w_0)\toP\sigma_\varepsilon$.
For $w\in\mathcal W_n$, define the oracle-normalized statistic
\[
L_n(w)=
\frac{1}{\sigma_\varepsilon\sqrt{n\rho_n(1-\rho_n)}}
\sum_{i=1}^n(w_i-\rho_n)X_i.
\]
Since
\[
S_n^{\mathrm{stud}}(X,w)
=
\frac{\sigma_\varepsilon}{\widehat\sigma_p(w)}L_n(w),
\]
we have
\[
S_n^{\mathrm{stud}}(X,w)-L_n(w)
=
\left\{
\frac{\sigma_\varepsilon}{\widehat\sigma_p(w)}-1
\right\}L_n(w).
\]
The statistic $L_n(w)$ is $O_P(1)$ under $q_n^{(\pi_0)}$ and at $w_0$.
Therefore~\eqref{eq:pooled-var-qtypical} and $\widehat\sigma_p(w_0)\toP\sigma_\varepsilon$
verify Assumption~\ref{ass:AL} with $\mu_L=0$, $\beta=\delta$, $\sigma_L=\sigma_\varepsilon$, and $\xi_i=\varepsilon_i$.
Consequently, the standard studentized difference in means has the same limiting permutation critical values and the same Pitman power as the oracle statistic.
This treatment is in line with the classical literature on studentized permutation tests, where studentization is handled through the randomization distribution~\citep{janssen1997studentized,chung2013exact}.

\subsection{Illustrative examples of dispersion conditions}\label{sec:examples-dispersion}

To illustrate the sharpness of the dispersion conditions,
we compute $(V_{1,n},V_{2,n})$ for three canonical permutation distributions:
the full symmetric group, the cyclic group, and a block permutation design.
Figure~\ref{fig:dispersion-illustration} visualizes the marginal and pairwise deviation structure
for each design.

These examples demonstrate that $V_{1,n}=0$ (marginal balance) alone is not sufficient for optimal power.
The pairwise condition $V_{2,n}\to 0$ is also essential.

\begin{figure}[ht]
\centering
\begin{tikzpicture}[x=0.42cm,y=0.42cm,
  every node/.style={align=center},
  zero/.style={draw=black!25, fill=black!6, line width=0.3pt},
  diag/.style={draw=black!25, fill=black!22, line width=0.3pt},
  posstrong/.style={draw=black!35, fill=blue!55, line width=0.3pt},
  posweak/.style={draw=black!25, fill=blue!22, line width=0.3pt},
  negstrong/.style={draw=black!35, fill=red!55, line width=0.3pt},
  negweak/.style={draw=black!25, fill=red!22, line width=0.3pt}
]

\begin{scope}[shift={(0,0)}]
  \node[font=\small\bfseries] at (4,2.9) {Uniform on $\mathfrak{S}_n$};
  \node[font=\scriptsize] at (4,2.0) {marginal deviations $a_k-\rho_n$};
  \foreach \k in {1,...,8}{
    \draw[zero] (\k-1,0.85) rectangle ++(1,0.65);
  }
  \node[font=\scriptsize, anchor=east] at (-0.2,1.18) {$V_1$};
  \node[font=\scriptsize] at (4,0.2) {pairwise deviations $b_{k\ell}-\tau_n$};
  \foreach \i in {1,...,8}{
    \foreach \j in {1,...,8}{
      \ifnum\i=\j
        \draw[diag] (\j-1,-\i-1) rectangle ++(1,1);
      \else
        \draw[zero] (\j-1,-\i-1) rectangle ++(1,1);
      \fi
    }
  }
  \draw[black!45, line width=0.6pt] (0,-9) rectangle (8,-1);
  \node[font=\small] at (4,-10.5)
    {$V_{1,n}=0,\quad V_{2,n}=0$\\\textit{optimal power}};
\end{scope}

\begin{scope}[shift={(11.5,0)}]
  \node[font=\small\bfseries] at (4,2.9) {Cyclic shifts};
  \node[font=\scriptsize] at (4,2.0) {marginal deviations $a_k-\rho_n$};
  \foreach \k in {1,...,8}{
    \draw[zero] (\k-1,0.85) rectangle ++(1,0.65);
  }
  \node[font=\scriptsize, anchor=east] at (-0.2,1.18) {$V_1$};
  \node[font=\scriptsize] at (4,0.2) {pairwise deviations $b_{k\ell}-\tau_n$};
  \foreach \i in {1,...,8}{
    \foreach \j in {1,...,8}{
      \ifnum\i=\j
        \draw[diag] (\j-1,-\i-1) rectangle ++(1,1);
      \else
        \pgfmathtruncatemacro{\dd}{abs(\i-\j)}
        \pgfmathtruncatemacro{\dc}{min(\dd,8-\dd)}
        \ifnum\dc=1
          \draw[posstrong] (\j-1,-\i-1) rectangle ++(1,1);
        \else\ifnum\dc=2
          \draw[posweak] (\j-1,-\i-1) rectangle ++(1,1);
        \else\ifnum\dc=3
          \draw[negweak] (\j-1,-\i-1) rectangle ++(1,1);
        \else
          \draw[negstrong] (\j-1,-\i-1) rectangle ++(1,1);
        \fi\fi\fi
      \fi
    }
  }
  \draw[black!45, line width=0.6pt] (0,-9) rectangle (8,-1);
  \node[font=\small] at (4,-10.5)
    {$V_{1,n}=0,\quad V_{2,n}\not\to 0$\\\textit{power loss}};
\end{scope}

\begin{scope}[shift={(23,0)}]
  \node[font=\small\bfseries] at (4,2.9) {Block randomization};
  \node[font=\scriptsize] at (4,2.0) {marginal deviations $a_k-\rho_n$};
  \foreach \k in {1,...,8}{
    \draw[zero] (\k-1,0.85) rectangle ++(1,0.65);
  }
  \node[font=\scriptsize, anchor=east] at (-0.2,1.18) {$V_1$};
  \node[font=\scriptsize] at (4,0.2) {pairwise deviations $b_{k\ell}-\tau_n$};
  \foreach \i in {1,...,8}{
    \foreach \j in {1,...,8}{
      \ifnum\i=\j
        \draw[diag] (\j-1,-\i-1) rectangle ++(1,1);
      \else
        \pgfmathtruncatemacro{\bi}{ifthenelse(\i<5,1,2)}
        \pgfmathtruncatemacro{\bj}{ifthenelse(\j<5,1,2)}
        \ifnum\bi=\bj
          \draw[negweak] (\j-1,-\i-1) rectangle ++(1,1);
        \else
          \draw[posweak] (\j-1,-\i-1) rectangle ++(1,1);
        \fi
      \fi
    }
  }
  \draw[black!45, line width=0.6pt] (0,-9) rectangle (8,-1);
  \draw[black!50, dashed, line width=0.5pt] (4,-9) -- (4,-1);
  \draw[black!50, dashed, line width=0.5pt] (0,-5) -- (8,-5);
  \node[font=\small] at (4,-10.5)
    {$V_{1,n}=0,\quad V_{2,n}=O(n^{-2})$\\\textit{optimal power}};
\end{scope}

\node[font=\scriptsize, anchor=east] at (9.8,-13.0)
  {$b_{k\ell}-\tau_n$\,:};
\draw[negstrong] (10.50,-13.40) rectangle ++(1.3,0.8);  
\draw[negweak]   (11.95,-13.40) rectangle ++(1.3,0.8);  
\draw[zero]      (13.90,-13.40) rectangle ++(1.3,0.8);  
\draw[posweak]   (15.85,-13.40) rectangle ++(1.3,0.8);  
\draw[posstrong] (17.30,-13.40) rectangle ++(1.3,0.8);  
\node[font=\scriptsize] at (11.88,-14.15) {neg.};
\node[font=\scriptsize] at (14.55,-14.15) {$\approx 0$};
\node[font=\scriptsize] at (17.23,-14.15) {pos.};

\end{tikzpicture}
\caption{Each panel shows the marginal deviations $a_k-\rho_n$ (top strip)
and pairwise deviations $b_{k\ell}-\tau_n$ (heatmap).
All three examples have $V_{1,n}=0$ (flat top strips),
but their pairwise structure differs markedly:
uniform randomization is entirely flat ($V_{2,n}=0$),
cyclic shifts exhibit persistent banded dependence ($V_{2,n}\not\to 0$),
and the block design produces only weak block-level deviations
that vanish asymptotically ($V_{2,n}=O(n^{-2})$).
The second-order dispersion $V_{2,n}$ thus governs the
distinction between optimal and non-optimal randomization.}
\label{fig:dispersion-illustration}
\end{figure}

\begin{proposition}[Uniform distribution on $\Sn$]\label{prop:ex-uniform}
Let $q_n=\mathrm{Uniform}(\Sn)$. Then $V_{1,n}=V_{2,n}=0$ for every $n$.
\end{proposition}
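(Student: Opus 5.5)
The plan is to compute the one- and two-point block probabilities exactly, using the fact that a uniform permutation has uniform one- and two-dimensional marginals. Let $\pi\sim\mathrm{Uniform}(\Sn)$. For fixed $k$, the map $\sigma\mapsto\sigma(k)$ sends $\Sn$ onto $[n]$, and each fiber has exactly $(n-1)!$ elements. Hence $\pi(k)$ is uniform on $[n]$, and
\[
a_k=\mP(\pi(k)\in A_n)=\frac{|A_n|}{n}=\frac{n_1}{n}=\rho_n .
\]
So every summand of $V_{1,n}$ vanishes, and $V_{1,n}=0$.

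For the pairwise term, fix $k\neq\ell$. The pair $(\pi(k),\pi(\ell))$ is uniform on the $n(n-1)$ ordered pairs of distinct elements of $[n]$, since each such pair has a fiber of size $(n-2)!$. The event $\{\pi(k)\in A_n,\ \pi(\ell)\in A_n\}$ corresponds to the $n_1(n_1-1)$ ordered distinct pairs lying inside $A_n$. Therefore
\[
b_{k\ell}=\frac{n_1(n_1-1)}{n(n-1)}=\tau_n
\]
for every $k\neq\ell$, each summand of $V_{2,n}$ is zero, and $V_{2,n}=0$.

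Both identities hold exactly for every $n$, with no asymptotics needed. The only step that needs care is the fiber-counting, i.e., checking that the pushforward of the uniform law under $\sigma\mapsto(\sigma(k),\sigma(\ell))$ is uniform on distinct ordered pairs. This follows from left-multiplication invariance: for any two target pairs, a transposition-type bijection of $[n]$ maps one fiber onto the other. Alternatively, one can count directly: fixing the images of $k$ and $\ell$ leaves $(n-2)!$ completions. I do not expect any real obstacle. By the right-translation invariance remark (Lemma~\ref{lem:translate}), the same conclusion holds for $q_n^{(\pi_0)}$, and in fact $q_n^{(\pi_0)}=q_n$ here.
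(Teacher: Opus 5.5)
Your proof is correct and takes the same approach as the paper: $\pi(k)$ is uniform on $[n]$, so $a_k=\rho_n$, and $(\pi(k),\pi(\ell))$ is a uniform ordered sample of two without replacement, so $b_{k\ell}=\tau_n$. Your fiber-counting argument ($(n-1)!$ and $(n-2)!$ completions) simply supplies the detail that the paper's two-line proof leaves implicit.
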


\begin{proof}
Under the uniform distribution, $\pi(k)$ is uniform on $[n]$, so $a_k=\rho_n$ for all $k$.
The pair $(\pi(k),\pi(\ell))$ is a uniform sample of size two without replacement from $[n]$,
so $b_{k\ell}=\tau_n$ for all $k\neq\ell$.
Both dispersions vanish identically.
\end{proof}

The following example shows that $V_{1,n}=0$ does not imply $V_{2,n}\to 0$.

\begin{proposition}[Uniform distribution on the cyclic group]\label{prop:ex-cyclic}
Let $q_n=\mathrm{Uniform}(C_n)$ where $C_n$ is the cyclic subgroup of $\Sn$.
Then $V_{1,n}=0$ for every $n$, while
\[
V_{2,n}
\ \rightarrow\
\rho_*^{\,3}\!\left(\frac23-\rho_*\right)>0,
\]
where $\rho_*:=\min(\rho,1-\rho)$.
In particular, Assumption~\ref{ass:V} fails, and by Theorem~\ref{thm:necessary}
there exists at least one nominal level at which the cyclic permutation critical value does not stabilize at the $z$-test benchmark.
\end{proposition}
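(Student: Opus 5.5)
The plan is to compute $a_k$ and $b_{k\ell}$ exactly for the cyclic group and then pass to a Riemann-sum limit. Write $\pi=\pi_s$ with $\pi_s(k)=k+s \pmod n$ and $s$ uniform on $\{0,\dots,n-1\}$.

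\emph{First dispersion.} For each fixed $k$, $\pi_s(k)$ is uniform on $[n]$. Hence $a_k=n_1/n=\rho_n$ for every $k$, and $V_{1,n}=0$ identically.

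\emph{Pairwise probabilities.} For $k\neq\ell$, set $d=\ell-k \bmod n$. Then
\[
b_{k\ell}=\frac1n\,\bigl|A_n\cap(A_n-d)\bigr|,
\]
the normalized overlap of the circular arc $A_n$ with its rotation by $d$. It depends only on the circular distance $d'=\min(d,n-d)$. Each $k$ has exactly one $\ell$ for each value of $d\in\{1,\dots,n-1\}$. Therefore
\[
V_{2,n}=\frac1n\sum_{d=1}^{n-1}\bigl(b(d)-\tau_n\bigr)^2 ,
\]
which is a Riemann sum.

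\emph{Reduction to $\rho\le 1/2$.} I first reduce to the case $\rho\le 1/2$ by complementation. By inclusion--exclusion,
\[
\mP(\pi(k),\pi(\ell)\in A_n)=1-2(1-\rho_n)+\mP(\pi(k),\pi(\ell)\in A_n^c),
\]
and $\tau_n$ obeys the same identity with $n_1$ replaced by $n_0$. So $b_{k\ell}-\tau_n$ is unchanged when $A_n$ is replaced by its complement, i.e.\ when $\rho$ is replaced by $1-\rho$. This is why $\rho_*=\min(\rho,1-\rho)$ appears in the limit.

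\emph{Overlap function.} For $n_1\le n/2$, the overlap of an arc of length $n_1$ with its rotation by circular distance $d'$ equals $(n_1-d')_+$, because the wrap-around contribution is zero. Thus $b(d)=(\rho_n-d'/n)_+ + O(1/n)$ uniformly in $d$, and $\tau_n=\rho_n^2+O(1/n)$. The main technical care is here: justifying the exact overlap formula and the uniform $O(1/n)$ errors, so that the Riemann sum converges (the integrand is bounded and piecewise continuous).

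\emph{Limit computation.} With $x=d'/n$ and the symmetry $d\leftrightarrow n-d$,
\[
V_{2,n}\to 2\int_0^{1/2}\bigl((\rho_*-x)_+-\rho_*^2\bigr)^2\,dx .
\]
The piece over $[0,\rho_*]$ equals $\int_0^{\rho_*}(y-\rho_*^2)^2\,dy=\rho_*^3/3-\rho_*^4+\rho_*^5$. The piece over $[\rho_*,1/2]$ contributes $(1/2-\rho_*)\rho_*^4$. Summing and doubling gives $2\rho_*^3/3-\rho_*^4=\rho_*^3(2/3-\rho_*)$. This limit is strictly positive because $0<\rho_*\le 1/2<2/3$.

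\emph{Conclusion.} Hence $V_{1,n}+V_{2,n}\not\to0$, so Assumption~\ref{ass:V} fails. Theorem~\ref{thm:necessary} then yields a nominal level $\alpha$ at which $c_n(\alpha\mid X,\pi_0)\not\toP z_{1-\alpha}$.
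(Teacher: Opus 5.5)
Your proof is correct and follows the paper's route. Uniform marginals give $V_{1,n}=0$, $b_{k\ell}$ depends only on the cyclic distance through the overlap $(n_*-d')_+/n$, and the complementation identity reduces everything to $n_1\le n_0$. The only difference is the final step: you pass to the limit via a Riemann integral, whereas the paper derives the exact finite-$n$ formula $V_{2,n}=n_1(n_1-1)(2n_1n_0-n_1^2-n_0+1)/\{3n^3(n-1)\}$, so your route avoids the power-sum algebra but yields only the limiting value.
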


\begin{proof}
See Appendix~\ref{app:proof-cyclic}.
\end{proof}

The cyclic group provides a clean illustration of why pairwise regularity matters.
Each unit is equally likely to be assigned to treatment ($a_k=\rho_n$),
so the marginal assignment probabilities are perfectly balanced.
However, cyclic shifts move ``neighboring'' indices in lockstep,
so if unit $k$ is treated, the cyclic structure forces $k+1,\dots,k+n_1-1$
to be treated as well.
This rigid dependence is invisible at the marginal level but fully captured by $V_{2,n}$.
For the balanced design $\rho=1/2$, the limiting value of $V_{2,n}$ is $1/48\approx 0.021$.

The next example demonstrates that a restricted permutation group can still satisfy both dispersion conditions,
provided the pairwise deviations are of sufficiently small order.

\begin{proposition}[Block permutation design]\label{prop:ex-block}
Partition $A_n$ into two blocks $B_1,B_2$ of size $n_1/2$
and $A_n^c$ into two blocks $B_3,B_4$ of size $n_0/2$.
Form the matched pairs $\mathcal P_1:=B_1\cup B_3$ and $\mathcal P_2:=B_2\cup B_4$,
each of size $N=n/2$, and let $q_n$ draw independent $\mathrm{Uniform}(\mathfrak S_N)$ permutations
within each pair.
Then $V_{1,n}=0$ and $V_{2,n}=O(n^{-2})\to 0$.
In particular, Assumption~\ref{ass:V} holds and the block permutation test attains optimal power.
\end{proposition}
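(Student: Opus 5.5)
The plan is a direct computation of $a_k$ and $b_{k\ell}$ from the product structure of $q_n$. Under $\pi\sim q_n$, the restriction of $\pi$ to $\mathcal P_j$ is a uniform permutation of $\mathcal P_j$, and the two restrictions are independent. The treated positions inside $\mathcal P_j$ form the set $A_n\cap\mathcal P_j$, which is $B_1$ for $j=1$ and $B_2$ for $j=2$. In both cases this set has size $n_1/2$ out of $N=n/2$.

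First, for $k\in\mathcal P_j$, $\pi(k)$ is uniform on $\mathcal P_j$. Hence $a_k=(n_1/2)/(n/2)=\rho_n$ for every $k$, and $V_{1,n}=0$ exactly.

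Second, I split the pairs $k\ne\ell$ into two cases.
\begin{itemize}
\item If $k,\ell$ lie in different matched pairs, independence gives $b_{k\ell}=\rho_n^2$.
\item If $k,\ell$ lie in the same pair, $(\pi(k),\pi(\ell))$ is a uniform draw without replacement from $\mathcal P_j$. This gives
\[
b_{k\ell}=\frac{(n_1/2)(n_1/2-1)}{(n/2)(n/2-1)}=\frac{n_1(n_1-2)}{n(n-2)}.
\]
\end{itemize}
Both values differ from $\tau_n=n_1(n_1-1)/(n(n-1))$ by $O(1/n)$. For the cross case, $\rho_n^2-\tau_n=\rho_n(1-\rho_n)/(n-1)$. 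For the same-pair case, expanding gives $-\rho_n(1-\rho_n)/(n-1)+O(n^{-2})$, in fact exactly $-\rho_n(1-\rho_n)n/((n-1)(n-2))$. A sanity check is that $\sum_{\ell\ne k}b_{k\ell}=a_k(n_1-1)$ must hold, and the signs do match Figure~\ref{fig:dispersion-illustration}: negative within blocks, positive across.

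Third, I sum the squared deviations. There are about $n^2/2$ pairs of each type, each contributing $O(n^{-2})$, so $\sum_{k\ne\ell}(b_{k\ell}-\tau_n)^2=O(1)$. Dividing by $n^2$ gives $V_{2,n}=O(n^{-2})$. The final sentence of the statement then follows by invoking Theorem~\ref{thm:power}, since Assumption~\ref{ass:V} holds.

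The only delicate step is the algebra for the same-pair deviation. There I verify that the leading $O(1)$ parts cancel and only an $O(1/n)$ remainder survives. Assuming $n_1,n_0$ are even, so that the blocks are well defined, this is elementary. Otherwise everything is routine.
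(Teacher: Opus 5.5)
Your proposal is correct and follows the paper's proof essentially step for step. The structure is the same: $a_k=\rho_n$ from uniformity within each pair, a two-valued $b_{k\ell}$ (same pair versus cross pair) whose deviations from $\tau_n$ are exactly $-n_1n_0/\{n(n-1)(n-2)\}$ and $n_1n_0/\{n^2(n-1)\}$, and then a count of ordered pairs. The only difference is that the paper carries the count through to the exact value $V_{2,n}=n_1^2n_0^2/\{n^4(n-1)(n-2)\}$, whereas you stop at an order-of-magnitude summation, which already suffices for $V_{2,n}=O(n^{-2})$.
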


\begin{proof}
See Appendix~\ref{app:proof-block}.
\end{proof}

The block design uses only $((n/2)!)^2$ out of $n!$ possible permutations, which is a substantial restriction.
Yet because the pairwise deviations from the joint probabilities $\tau_{ij}:=\Pr(w_i=1,w_j=1)$ are small ($O(1/n)$ each),
the aggregate dispersion $V_{2,n}$ still vanishes.
This stands in contrast to the cyclic group,
where the rigid lockstep structure generates $O(1)$ pairwise deviations
and $V_{2,n}\not\to 0$.
The comparison illustrates that the size of the permutation group is not what matters
for optimal power; rather, it is the regularity of the induced assignment probabilities.

\section{When non-uniform randomization dominates}\label{sec:nonuniform}

The results of Section~\ref{sec:optimal} provide a necessary and sufficient characterization
of optimal Pitman local power within the Pitman local model of Assumption~\ref{ass:pitman}.
Under that model, $V_{1,n}\to 0$ and $V_{2,n}\to 0$ are equivalent to the conditional
permutation distribution converging to a standard normal and the test attaining the $z$-test benchmark.
In particular, within the Pitman local model, the dispersions alone determine power,
and any two permutation laws $q_n$ with $V_{1,n}\to 0$ and $V_{2,n}\to 0$ yield identical asymptotic power.

Real data, however, frequently depart from the strict i.i.d.\ noise assumption of the Pitman local model.
A canonical example is a stratified design in which the observations carry between-stratum heterogeneity
on top of a local treatment effect.
In this richer setting, the conclusions of Section~\ref{sec:optimal} no longer apply directly,
and the dispersions, while still informative, cease to determine power on their own.
Although $V_{1,n}\to0$ and $V_{2,n}\to0$ continue to ensure that the conditional permutation distribution
converges to a Gaussian, the scale of that Gaussian depends on how $q_n$
interacts with the data-generating process,
and specifically on how much of the between-stratum variation
the permutation variance absorbs.

This section demonstrates, through a concrete stratified design,
that this scale dependence opens room for strict power gains from non-uniform randomization.
Two permutation laws may both satisfy $V_{1,n}=0$ and $V_{2,n}\to 0$, yet the one that respects
the stratum structure (within-stratum permutation) strictly dominates the one that ignores it
(full permutation on $\Sn$), because its permutation variance reflects only the within-stratum noise
rather than the inflated total variation.
This mirrors the classical variance-reduction benefit of stratification in parametric testing,
and our analysis places it formally within the generalized permutation framework
of Ramdas et al.~\cite{ramdas2023permutation}.

\subsection{Setup}\label{sec:strat-design}

To demonstrate that the dispersion conditions do not fully determine power outside the Pitman local model,
we consider a stratified two-sample design with between-stratum heterogeneity.
Fix $n$ divisible by $4$ and set $n_1=n_0=n/2$.
Partition $[n]$ into two strata of equal size,
\[
\mathcal S_1:=\{1,\dots,n/2\},
\qquad
\mathcal S_2:=\{n/2+1,\dots,n\},
\]
where $\mathrm{s}(i)\in\{1,2\}$ denotes the stratum of unit $i$.
For this stratified design we adopt a reference labeling that places $n/4$ treated units
in each stratum, overriding the convention of Section~\ref{sec:setup}:
\[
w_0 \;:=\; \bigl(\underbrace{1,\dots,1}_{n/4},\underbrace{0,\dots,0}_{n/4},
              \underbrace{1,\dots,1}_{n/4},\underbrace{0,\dots,0}_{n/4}\bigr)^\top,
\]
so that
$A_n := \{i : w_{0,i}=1\} = \{1,\dots,n/4\} \cup \{n/2+1,\dots,3n/4\}$.
This labeling ensures that the treatment assignment is balanced \emph{within} each stratum,
which is essential for the within-stratum permutation strategy below to be non-trivial.
All quantities defined in Section~\ref{sec:gen-perm}
($w_0$, $A_n$, $W(\tau)$, $T_n$) are taken with respect to this labeling throughout
Section~\ref{sec:nonuniform}.

The null and alternative hypotheses are
\begin{align*}
H_0 &:\ X_1,\dots,X_n \iid N(0,1), \\
H_1 &:\ X_i = \mu_{\mathrm{s}(i)} + \frac{\delta}{\sqrt{n}}\,w_{0,i} + \varepsilon_i,
\qquad \varepsilon_i\iid N(0,1),
\end{align*}
where $\mu_1=\mu$, $\mu_2=-\mu$ for a fixed nuisance parameter $\mu>0$,
and $\delta>0$ is a fixed local treatment effect.

The test statistic is the unstandardized difference in means,
\[
T:=\bar X_{\mathrm{trt}}-\bar X_{\mathrm{ctrl}}.
\]
We deliberately use this unstandardized form rather than the oracle-normalized $T_n$ of
Section~\ref{sec:gen-perm}.
The power gap between the two strategies originates in their permutation variances,
and normalizing $T$ by a known $\sigma_\varepsilon$ would obscure that gap.
Both strategies reject for large values of $T$,
compared against the conditional $(1-\alpha)$-quantile of the permutation distribution of $T$.

\medskip\noindent
\textbf{Strategy~1} (full permutation).
Draw $\pi\sim\mathrm{Uniform}(\Sn)$, treating all $n!$ permutations as equally likely.
This ignores the stratum structure, so the permutation variance absorbs both within-
and between-stratum variation.

\medskip\noindent
\textbf{Strategy~2} (within-stratum permutation).
Draw independent uniform permutations within each stratum,
so that $q_n=\mathrm{Uniform}(\mathfrak S_{n/2}\times\mathfrak S_{n/2})$
and no unit ever crosses the stratum boundary.
The permutation variance then reflects only within-stratum noise,
leaving the between-stratum component out of the reference distribution.

\begin{remark}[Role of $\pi_0^{-1}$]\label{rem:pi0-strat}
For Strategy~2, $q_n$ is uniform on the subgroup $G:=\mathfrak S_{n/2}\times\mathfrak S_{n/2}$.
Since the uniform distribution on any finite group is invariant under right multiplication,
$q_n^{(\pi_0)}=q_n$ for every $\pi_0\in G$.
The composition $\pi_m\circ\pi_0^{-1}$ is again uniform on $G$, independently of $\pi_0$.
\end{remark}

Both strategies satisfy $V_{1,n}=0$ and $V_{2,n}\to 0$ (the within-stratum design 
is a special case of the block design of Proposition~\ref{prop:ex-block} 
with strata as blocks), so within the Pitman local model
they would be equivalent. Under $H_0$, the data are i.i.d.\ and hence exchangeable,
so Theorem~\ref{thm:ramdas-valid} guarantees that both yield valid $p$-values.
The alternative $H_1$ departs from the Pitman local model of Assumption~\ref{ass:pitman},
since the location term $\mu_{\mathrm{s}(i)}$ varies across strata rather than being common.
Consequently, Theorems~\ref{thm:cond-normal}--\ref{thm:power} do not apply directly
to the data-generating process of $H_1$, and the analysis below is self-contained.
The precise role of the dispersions in this richer setting is clarified at the end of this section.

\subsection{Permutation variances}\label{sec:perm-var}

Fix the data $X=(X_1,\dots,X_n)$.
Under each strategy, the permutation distribution of $T$ is the law of $T(X,W(\tau))$
where $\tau$ is drawn from the strategy's randomization law and $X$ is held fixed.
We write $\Var_{\mathrm{full}}(T\mid X)$ and $\Var_{\mathrm{strat}}(T\mid X)$
for the conditional variances of $T$ under this permutation distribution,
for Strategies~1 and~2 respectively.
By Remark~\ref{rem:pi0-strat}, these variances do not depend on the base draw~$\pi_0$.

\begin{proposition}[Permutation variance under full randomization]\label{prop:var-full}
Under Strategy~1,
\[
\Var_{\mathrm{full}}(T\mid X)
=\frac{4}{n(n-1)}\sum_{i=1}^n \bigl(X_i-\bar X\bigr)^2.
\]
Under $H_1$,
$n\cdot\Var_{\mathrm{full}}(T\mid X)\toP 4(1+\mu^2)$.
\end{proposition}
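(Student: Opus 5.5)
The plan is to compute the exact permutation variance by writing $T$ as a linear statistic in the assignment, and then evaluate its limit by a law of large numbers. With $n_1=n_0=n/2$, for any balanced $W$ we have
\[
T=\frac{2}{n}\sum_i W_iX_i-\frac{2}{n}\sum_i(1-W_i)X_i=\frac{4}{n}\sum_{i=1}^n (W_i-\tfrac12)X_i=\frac4n\sum_{i=1}^n (W_i-\tfrac12)(X_i-\bar X),
\]
where the last equality uses $\sum_i(W_i-\tfrac12)=0$. Under Strategy~1 the right-translate $\tau$ is uniform on $\Sn$, so $W(\tau)$ is a simple random sample of $n/2$ treated units. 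Its moments are
\[
\Var(W_i)=\tfrac14, \qquad \Cov(W_i,W_j)=-\tfrac{1}{4(n-1)} \quad (i\neq j).
\]

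Write $Y_i:=X_i-\bar X$, so that $\sum_i Y_i=0$. Then
\[
\Var\Bigl(\sum_i W_iY_i\Bigr)=\tfrac14\sum_i Y_i^2-\tfrac1{4(n-1)}\sum_{i\ne j}Y_iY_j .
\]
Since $\sum_{i\ne j}Y_iY_j=(\sum_i Y_i)^2-\sum_i Y_i^2=-\sum_i Y_i^2$, this equals $\tfrac{n}{4(n-1)}\sum_i Y_i^2$. Multiplying by $16/n^2$ gives $\frac{4}{n(n-1)}\sum_i(X_i-\bar X)^2$, which is the stated formula. This is the classical sampling-without-replacement variance computation, and the finite-sample identity is routine once it is set up this way.

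For the limit, I need $\frac1{n-1}\sum_i(X_i-\bar X)^2\toP 1+\mu^2$ under $H_1$, which gives $n\Var_{\mathrm{full}}\toP 4(1+\mu^2)$. I would decompose $X_i=\mu_{\mathrm{s}(i)}+\frac{\delta}{\sqrt n}w_{0,i}+\varepsilon_i$.

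\begin{itemize}
\item The sample mean satisfies $\bar X=\frac{\delta}{2\sqrt n}+\bar\varepsilon\toP0$, because the two strata have equal sizes and $\mu_1+\mu_2=0$.
\item For the second moment, $\frac1n\sum_i X_i^2=\frac1n\sum_i\mu_{\mathrm{s}(i)}^2+\frac1n\sum_i\varepsilon_i^2+\text{cross terms}+O(n^{-1})$.
\item The first term equals $\mu^2$ exactly, and the second tends to $1$ in probability by the weak law of large numbers.
\item The cross term $\frac2n\sum_i\mu_{\mathrm{s}(i)}\varepsilon_i$ has variance $4\mu^2/n\to0$.
\item The terms involving $\delta/\sqrt n$ are $O_P(n^{-1/2})$ by Cauchy--Schwarz.
\end{itemize}

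Hence $\frac1n\sum_i(X_i-\bar X)^2=\frac1n\sum_i X_i^2-\bar X^2\toP 1+\mu^2$, and replacing $n$ by $n-1$ does not change the limit.

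I do not expect a real obstacle. The only step needing care is the covariance cancellation $\sum_{i\ne j}Y_iY_j=-\sum_i Y_i^2$ that produces the factor $n/(n-1)$. The conceptual point is that the between-stratum term $\mu^2$ survives in the full-sample variance because the strata means $\pm\mu$ are not removed by centering at the grand mean $\bar X\approx0$.
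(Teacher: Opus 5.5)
Your proposal is correct and follows the paper's proof essentially step for step. You use the same linear representation $T=\frac4n\sum_i(W_i-\tfrac12)X_i$ with simple-random-sampling moments for the exact formula, and the same stratum/signal/noise decomposition with the law of large numbers for the limit; the only differences are cosmetic (you write out the paper's ``standard computation'' by centering at $\bar X$, and you use $\frac1n\sum_i X_i^2-\bar X^2$ instead of expanding the centered squares term by term).
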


\begin{proof}
See Appendix~\ref{app:proof-var-full}.
\end{proof}

\begin{proposition}[Permutation variance under within-stratum randomization]\label{prop:var-strat}
Under Strategy~2,
\[
\Var_{\mathrm{strat}}(T\mid X)
=\frac{Q_1+Q_2}{(n/2)(n/2-1)},
\]
where $Q_s:=\sum_{i\in\mathcal S_s}(X_i-\bar X_s)^2$.
Under $H_1$,
$n\cdot\Var_{\mathrm{strat}}(T\mid X)\toP 4$.
\end{proposition}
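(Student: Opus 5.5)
The plan is to derive the exact variance formula from the covariance structure of within-stratum sampling without replacement, and then obtain the limit from the law of large numbers within each stratum.

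\textbf{Step 1: reduce $T$ to per-stratum sums.} Under Strategy~2 the permuted assignment $W=W(\tau)$ has exactly $n/4$ treated units in each stratum, because $A_n$ places $n/4$ treated indices in each stratum and $\tau$ preserves strata. With $n_1=n_0=n/2$,
\[
T=\frac{2}{n}\sum_i W_iX_i-\frac{2}{n}\sum_i(1-W_i)X_i=\frac{4}{n}\sum_i W_iX_i-\frac{2}{n}\sum_i X_i .
\]
The last term is fixed given $X$. Hence $\Var_{\mathrm{strat}}(T\mid X)=\frac{16}{n^2}\Var\bigl(\sum_i W_iX_i\mid X\bigr)$.

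\textbf{Step 2: compute each stratum's contribution.} Because $\tau$ permutes the two strata independently, $\sum_i W_iX_i$ splits into two independent pieces $\sum_{i\in\mathcal S_s}W_iX_i$. Within a stratum of size $N=n/2$, the vector $(W_i)_{i\in\mathcal S_s}$ is a uniformly random subset of size $m=N/2$. The classical finite-population formula gives
\[
\Var\Bigl(\sum_{i\in\mathcal S_s}W_iX_i\Bigm| X\Bigr)=\frac{m(N-m)}{N(N-1)}\,Q_s .
\]
I would verify this directly from $\Var(W_i)=m(N-m)/N^2$ and $\Cov(W_i,W_j)=-m(N-m)/(N^2(N-1))$, together with $\sum_{i\in\mathcal S_s}(X_i-\bar X_s)=0$. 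Substituting $m=N/2$ gives $\frac{N}{4(N-1)}Q_s$. Then
\[
\Var_{\mathrm{strat}}(T\mid X)=\frac{16}{n^2}\cdot\frac{N}{4(N-1)}(Q_1+Q_2)=\frac{Q_1+Q_2}{(n/2)(n/2-1)} ,
\]
using $16N/(4n^2)=1/N$ when $N=n/2$. The proof of Proposition~\ref{prop:var-full} is the same computation with $N=n$ and $m=n/2$, and it serves as a consistency check.

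\textbf{Step 3: the limit under $H_1$.} Within stratum $s$, $X_i=\mu_s+\delta n^{-1/2}w_{0,i}+\varepsilon_i$. The constant $\mu_s$ cancels in $Q_s$, which is the key point. The remaining shift $\delta n^{-1/2}w_{0,i}$ takes two values and changes $Q_s/N$ by at most $O(\delta^2/n)+O_P(n^{-1/2})$, by Cauchy--Schwarz on the cross term. Meanwhile $Q_s/N$ computed with pure noise tends to $1$ in probability by the weak law of large numbers, since $\varepsilon_i\iid N(0,1)$. Hence $(Q_1+Q_2)/n\toP 1$, and
\[
n\Var_{\mathrm{strat}}(T\mid X)=\frac{n^2}{(n/2)(n/2-1)}\cdot\frac{Q_1+Q_2}{n}\toP 4 .
\]

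\textbf{Main obstacle.} There is no real obstacle here; the one point needing care is the bookkeeping in Step~2. I need to confirm that the permuted treated count per stratum is exactly $n/4$, which is where the stratum-balanced $w_0$ is used, and that independence across strata kills the cross-stratum covariances.
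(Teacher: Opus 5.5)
Your proof is correct and follows essentially the same route as the paper: sampling without replacement within each stratum, independent across strata, gives the exact variance. The paper writes $T=\frac12(T_1+T_2)$ and applies Proposition~\ref{prop:var-full} in each stratum, which is your finite-population computation with $N=n/2$ and $m=N/2$. The limit then follows because $\mu_s$ cancels in $Q_s$ and the law of large numbers gives $Q_s/(n/2)\toP 1$. The only cosmetic difference is the cross term: the paper uses $\sum_{i\in\mathcal S_s}(w_{0,i}-\tfrac12)=0$ to get $O_P(1/n)$, while your Cauchy--Schwarz bound gives $O_P(n^{-1/2})$, which is equally sufficient.
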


\begin{proof}
See Appendix~\ref{app:proof-var-strat}.
\end{proof}

\subsection{Power comparison}\label{sec:power-compare}

Let $c_{\mathrm{full}}(\alpha\mid X)$ and $c_{\mathrm{strat}}(\alpha\mid X)$
denote the $(1-\alpha)$-quantiles of the conditional permutation distributions of $T$
under Strategies~1 and~2, respectively.
The corresponding one-sided tests reject $H_0$ when
$T>c_{\mathrm{full}}(\alpha\mid X)$ or $T>c_{\mathrm{strat}}(\alpha\mid X)$,
and are valid in finite samples by Theorem~\ref{thm:ramdas-valid}.
The power of each test under $H_1$ is denoted
$\mathrm{Power}_{\mathrm{full}}:=\mP(T>c_{\mathrm{full}}(\alpha\mid X))$
and $\mathrm{Power}_{\mathrm{strat}}:=\mP(T>c_{\mathrm{strat}}(\alpha\mid X))$.

\begin{proposition}[Non-uniform dominates uniform]\label{prop:power-compare}
Under $H_1$ and a one-sided test at level $\alpha\in(0,1/2)$,
the asymptotic powers are
\begin{align}
\mathrm{Power}_{\mathrm{full}}
&= 1-\Phi\!\left(z_{1-\alpha}\sqrt{1+\mu^2}-\frac{\delta}{2}\right), \label{eq:power-full}\\
\mathrm{Power}_{\mathrm{strat}}
&= 1-\Phi\!\left(z_{1-\alpha}-\frac{\delta}{2}\right). \label{eq:power-strat}
\end{align}
For any $\mu>0$,
$\mathrm{Power}_{\mathrm{strat}} > \mathrm{Power}_{\mathrm{full}}$.
Moreover, $\mathrm{Power}_{\mathrm{full}}\to 0$ as $\mu\to\infty$ while $\mathrm{Power}_{\mathrm{strat}}$ is independent of $\mu$.
\end{proposition}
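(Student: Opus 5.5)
The proof reduces to three ingredients: the limit law of the observed statistic $T$, the limit of the permutation critical values in each strategy, and Slutsky's theorem.

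\textbf{Observed statistic.} Under $H_1$, with $n_1=n_0=n/2$ and exactly $n/4$ treated units in each stratum, the nuisance terms $\mu_{\mathrm{s}(i)}$ cancel in $T$. Indeed, the treated mean of the $\mu_{\mathrm{s}(i)}$ is $(\mu-\mu)/2=0$, and the same holds for the control mean. Hence
\[
T=\frac{\delta}{\sqrt n}+\bar\varepsilon_{\mathrm{trt}}-\bar\varepsilon_{\mathrm{ctrl}},
\]
and exactly $\sqrt{n}\,T\sim\N(\delta,4)$. Equivalently, $\sqrt n T/2\sim \N(\delta/2,1)$.

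\textbf{Critical values.} I would show that $\sqrt n\,c_{\mathrm{full}}(\alpha\mid X)\toP 2\sqrt{1+\mu^2}\,z_{1-\alpha}$ and $\sqrt n\,c_{\mathrm{strat}}(\alpha\mid X)\toP 2z_{1-\alpha}$. This needs a conditional CLT for the permutation distribution of $\sqrt n T$, together with the variance limits of Propositions~\ref{prop:var-full} and~\ref{prop:var-strat}.

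For Strategy~1, write
\[
\sqrt n T=\frac{2}{\sqrt n}\sum_i (2W_i-1)(X_i-\bar X)
\]
with $W$ a uniform simple random sample. By Hájek's permutation CLT, conditional normality holds provided
\[
\frac{\max_i (X_i-\bar X)^2}{\sum_i (X_i-\bar X)^2}\toP 0.
\]
This is immediate, since the $X_i$ are Gaussian shifted by bounded means, so the maximum is $O_P(\log n)$ while the denominator is of order $n$.

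For Strategy~2, the statistic is a sum of two independent within-stratum simple-random-sampling statistics, each centred at its stratum mean. Hájek's CLT applies in each stratum under the same Lindeberg-type check, and independence gives a Gaussian limit with variance $4$.

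Alternatively, I could note that both laws have $V_{1,n}=0$ and $V_{2,n}\to0$ and reuse the two-randomization argument (Lemma~\ref{lem:two-rand}) on the centred data. However, Hájek's theorem is more direct. Uniform convergence of the conditional cdf to a continuous Gaussian limit then gives convergence of the quantiles.

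\textbf{Power.} By Slutsky,
\[
\mP\bigl(T>c_{\mathrm{full}}(\alpha\mid X)\bigr)=\mP\bigl(\sqrt nT/2>\sqrt{1+\mu^2}z_{1-\alpha}+o_P(1)\bigr)\to 1-\Phi\bigl(z_{1-\alpha}\sqrt{1+\mu^2}-\delta/2\bigr),
\]
and similarly for the stratified test.

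\textbf{Comparison.} For $\alpha<1/2$ we have $z_{1-\alpha}>0$, so $z_{1-\alpha}\sqrt{1+\mu^2}>z_{1-\alpha}$ whenever $\mu>0$. Since $\Phi$ is strictly increasing, the stratified test has strictly larger power. As $\mu\to\infty$ the argument of $\Phi$ in \eqref{eq:power-full} diverges to $+\infty$, so $\mathrm{Power}_{\mathrm{full}}\to0$. The expression \eqref{eq:power-strat} does not involve $\mu$.

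\textbf{Main obstacle.} The only nonroutine step is the conditional CLT for the full permutation. Here the data are not i.i.d., since the values form a two-component mixture, so the Pitman-model theorems do not apply directly. The Hájek/Lindeberg negligibility condition above handles it. One must also check that the limiting variance $4(1+\mu^2)$ is correctly inherited from Proposition~\ref{prop:var-full}.
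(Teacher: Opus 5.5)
Your proposal is correct and follows essentially the same route as the paper. The shared steps are: the exact $\N(\delta/2,1)$ law of $\sqrt n\,T/2$ from cancellation of the stratum means; H\'ajek's permutation CLT (globally for Strategy~1, within each stratum plus conditional independence for Strategy~2), combined with the variance limits of Propositions~\ref{prop:var-full}--\ref{prop:var-strat} to identify the critical values; then Slutsky and monotonicity of $\Phi$.

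The only quibble is your aside. For Strategy~1 the two-randomization argument is not a direct reuse, because centring at $\bar X$ leaves the $\pm\mu$ stratum means in the data. This produces an $O_P(1)$ assignment-driven term $\frac{8\mu}{\sqrt n}\bigl(\sum_{i\in\mathcal S_1}W_i-\frac n4\bigr)$ that would need its own hypergeometric CLT, so H\'ajek is indeed the right choice.
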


\begin{proof}
See Appendix~\ref{app:proof-power-compare}.
\end{proof}

The proof combines three ingredients.
First, by balanced design, the stratum effects $\mu_{\mathrm{s}(i)}$ cancel in $T$ under $H_1$,
so that $\sqrt{n}\,T/2\dto \N(\delta/2,1)$.
Second, a H\'ajek-type permutation CLT (Lemma~\ref{lem:perm-CLT-strat} in the Appendix)
shows that, under each strategy, the standardized permutation distribution of $T$ converges to $\N(0,1)$.
It follows that the critical value satisfies
$c(\alpha\mid X)=z_{1-\alpha}\,\hat\sigma\,(1+o_P(1))$,
where $\hat\sigma^2$ is the permutation variance.
Third, Propositions~\ref{prop:var-full}--\ref{prop:var-strat} identify the limiting variances
at $4(1+\mu^2)/n$ and $4/n$ respectively.
Combining these via Slutsky's theorem yields
\eqref{eq:power-full}--\eqref{eq:power-strat}.
The strict dominance $\mathrm{Power}_{\mathrm{strat}}>\mathrm{Power}_{\mathrm{full}}$
for $\mu>0$ then follows because $\sqrt{1+\mu^2}>1$ inflates the full permutation's critical value
without any compensating gain.
Table~\ref{tab:power} reports the resulting asymptotic powers for representative values of $\mu$ and $\delta$.

\begin{table}[H]
\centering
\caption{Asymptotic power at level $\alpha=0.05$, computed from the
closed-form expressions \eqref{eq:power-full}--\eqref{eq:power-strat}.
Both tests are valid under $H_0$ (Theorem~\ref{thm:ramdas-valid}).
Finite-sample simulation results are reported in
Section~\ref{sec:sim-stratified}.}
\label{tab:power}
\smallskip
\scalebox{0.9}{%
\setlength{\tabcolsep}{8pt}%
\begin{tabular}{@{}
  c
  S[table-format=1.1]
  S[table-format=1.3]
  S[table-format=1.3]
  S[table-format=1.3]
  @{}}
\toprule
& {$\mu$} & \multicolumn{3}{c}{Asymptotic power} \\
\cmidrule(lr){3-5}
& & {Full permutation} & {Within-stratum} & {Gain} \\
\midrule
\multirow{5}{*}{$\delta = 2$}
  & 0   & 0.260 & 0.260 & 0.000 \\
  & 0.5 & 0.201 & 0.260 & 0.059 \\
  & 1   & 0.092 & 0.260 & 0.167 \\
  & 2   & 0.004 & 0.260 & 0.256 \\
  & 3   & {$<0.001$} & 0.260 & 0.259 \\
\midrule
\multirow{4}{*}{$\delta = 3$}
  & 0   & 0.442 & 0.442 & 0.000 \\
  & 1   & 0.204 & 0.442 & 0.238 \\
  & 2   & 0.015 & 0.442 & 0.428 \\
  & 3   & {$<0.001$} & 0.442 & 0.442 \\
\bottomrule
\end{tabular}%
}
\end{table}

The power difference arises because the full permutation mixes units across strata,
so its permutation variance absorbs the between-stratum variation $\mu^2$,
inflating it by a factor of $1+\mu^2$.
The within-stratum permutation confines resampling to within each stratum,
so its permutation variance captures only within-stratum noise.

As noted at the start of this section, both strategies satisfy 
$V_{1,n}=0$ and $V_{2,n}\to 0$,
so the dispersion framework of Section~\ref{sec:optimal} cannot distinguish between them.
Proposition~\ref{prop:power-compare} confirms that outside the Pitman local model,
the scale of the permutation distribution becomes a separate determinant of power
beyond the dispersions, driven here by how $q_n$ interacts with the between-stratum variation.

\section{Simulations}\label{sec:simulations}

We present Monte Carlo simulations that confirm the theoretical results
of Sections~\ref{sec:optimal} and~\ref{sec:nonuniform}.
Section~\ref{sec:sim-dispersion} examines how different permutation distributions
affect power under the Pitman local model,
confirming the role of the dispersions $(V_{1,n},V_{2,n})$.
Section~\ref{sec:sim-stratified} demonstrates the power advantage of within-stratum
permutation over full permutation in the presence of nuisance stratification.

\paragraph{Common simulation parameters.}
Throughout both subsections we fix the local effect size $\delta=3$,
nominal level $\alpha=0.05$, balanced design $\rho=n_1/n=1/2$,
and i.i.d.\ Gaussian noise $\varepsilon_i\iid\N(0,1)$.
Each test uses $B=199$ Monte Carlo permutations, and we run
$N_{\mathrm{rep}}=20{,}000$ independent replications per configuration
to estimate rejection rates.

For each replication, data $X$ are generated under the relevant model,
the observed test statistic $T^{\mathrm{obs}}$ is computed,
$B$ permutations $\pi_1,\dots,\pi_B$ are drawn from $q_n$,
and the resampled statistics $T_b := T(X, W(\pi_b\circ\pi_0^{-1}))$
are computed for $b=1,\dots,B$.
For both strategies considered here, $q_n$ is uniform on a group,
so right-translation preserves the law and $\pi_0$ plays no role in practice.
The resulting one-sided $p$-value $P_n(X)$ of Theorem~\ref{thm:ramdas-valid} (with $M=B$)
is compared against $\alpha$.
The empirical rejection rate over the $N_{\mathrm{rep}}$ replications
estimates the power under $H_1$; under $H_0$, the same scheme estimates the type~I error.
With $N_{\mathrm{rep}}=20{,}000$, the Monte Carlo standard error of each
rejection-rate estimate is below $0.0036$.
We use the oracle-normalized $T_n$ throughout Section~\ref{sec:sim-dispersion}
(so $T^{\mathrm{obs}}=T_n^{\mathrm{obs}}$), to align with the theory of Section~\ref{sec:optimal};
Theorem~\ref{thm:AL} ensures that studentized variants yield the same asymptotic power when the variance estimator is consistent along $q_n^{(\pi_0)}$-typical assignments.
In Section~\ref{sec:sim-stratified}, $T$ denotes the unstandardized difference in means.

The theoretical $z$-test power benchmark at these parameters is
\[
1-\Phi\Bigl(z_{0.95}-\delta\sqrt{\rho(1-\rho)}/\sigma_\varepsilon\Bigr)
=1-\Phi(1.645-1.5)\approx 0.442.
\]

\subsection{Power under different permutation distributions}\label{sec:sim-dispersion}

The central question of Section~\ref{sec:optimal} is whether,
\emph{given an arbitrary permutation distribution $q_n$,
the resulting test achieves optimal Pitman local power.}
The dispersions $(V_{1,n},V_{2,n})$ serve as the diagnostic tool
that answers this question.
In this subsection we confirm empirically that the dispersions correctly separate
optimal from non-optimal permutation distributions.

\paragraph{Setup.}
We generate data under the Pitman local model of Assumption~\ref{ass:pitman}:
$X_i=\frac{\delta}{\sqrt{n}}w_i+\varepsilon_i$
with $\delta=3$ and $\varepsilon_i\iid\N(0,1)$.
The observed assignment $w_0$ places the first $n_1=n/2$ units in treatment.
The test statistic is the standardized difference in means~\eqref{eq:Tn}.
We vary the sample size over $n\in\{20,40,60,100,160,200,300,500,800,1000,1500,2000\}$.

\paragraph{Permutation distributions compared.}
Beyond the three canonical examples of Section~\ref{sec:examples-dispersion}
(Uniform, Block, Cyclic), which cover the cases $V_{1,n}=0$ with $V_{2,n}\to 0$
and $V_{1,n}=0$ with $V_{2,n}\not\to 0$,
we include two additional distributions (Small-support, Noisy-sort)
that illustrate the distinct effect of non-vanishing $V_{1,n}$.
The resulting five distributions jointly cover the three qualitatively different
dispersion regimes: $V_{1,n}=V_{2,n}=0$, $V_{1,n}=0$ with $V_{2,n}\not\to 0$,
and $V_{1,n}>0$:
\begin{enumerate}[label=(\roman*)]
\item \emph{Uniform} on $\Sn$:
treatment indices are a simple random sample of size $n_1$ from $[n]$.
Dispersions: $V_{1,n}=V_{2,n}=0$ (Proposition~\ref{prop:ex-uniform}).
Prediction: optimal power.

\item \emph{Block permutation}:
$[n]$ is partitioned into two matched pairs of size $N=n/2$,
and independent uniform permutations are applied within each pair
(Proposition~\ref{prop:ex-block}).
Dispersions: $V_{1,n}=0$, $V_{2,n}=O(n^{-2})\to 0$.
Prediction: optimal power.

\item \emph{Cyclic group}:
$\pi=c^J$ where $c$ is the cyclic shift and $J\sim\mathrm{Uniform}\{0,\dots,n-1\}$.
Dispersions: $V_{1,n}=0$, $V_{2,n}\to\rho_*^3(2/3-\rho_*)>0$
(Proposition~\ref{prop:ex-cyclic}).
Prediction: non-optimal (pairwise condition fails despite marginal balance).

\item \emph{Small-support} ($K=20$):
$20$ permutations are drawn uniformly from $\Sn$ once before the experiment begins,
and each Monte Carlo draw selects one of these $20$ permutations uniformly at random.
Dispersions: $V_{1,n}>0$, $V_{2,n}>0$.
Prediction: non-optimal (both conditions fail).

\item \emph{Noisy-sort} ($\eta=2.0$):
the permutation $\pi$ is defined by sorting units by the perturbed scores
$i+\eta Z_i$ where $Z_i\iid\N(0,1)$.
Units near each other in the original ordering tend to stay close.
Dispersions: $V_{1,n}>0$, full support on $\Sn$.
Prediction: non-optimal (marginal condition fails).
\end{enumerate}

\paragraph{Implementation.}
For strategies~(i)--(iii), the permutation distribution is uniform on a subgroup of $\Sn$,
so the right-translate $\pi_b\circ\pi_0^{-1}$ is again uniformly distributed
on the same subgroup (cf.\ Remark~\ref{rem:pi0-strat}).
We may therefore draw fresh assignments directly without explicitly computing $\pi_0$.
For strategies~(iv)--(v), we implement the full $\pi_0$-based protocol
of Theorem~\ref{thm:ramdas-valid}:
draw $\pi_0,\pi_1,\dots,\pi_B\iid q_n$ and form the resampled assignments
$W(\pi_b\circ\pi_0^{-1})$.

\begin{figure}[t]
\centering
\includegraphics[width=\textwidth]{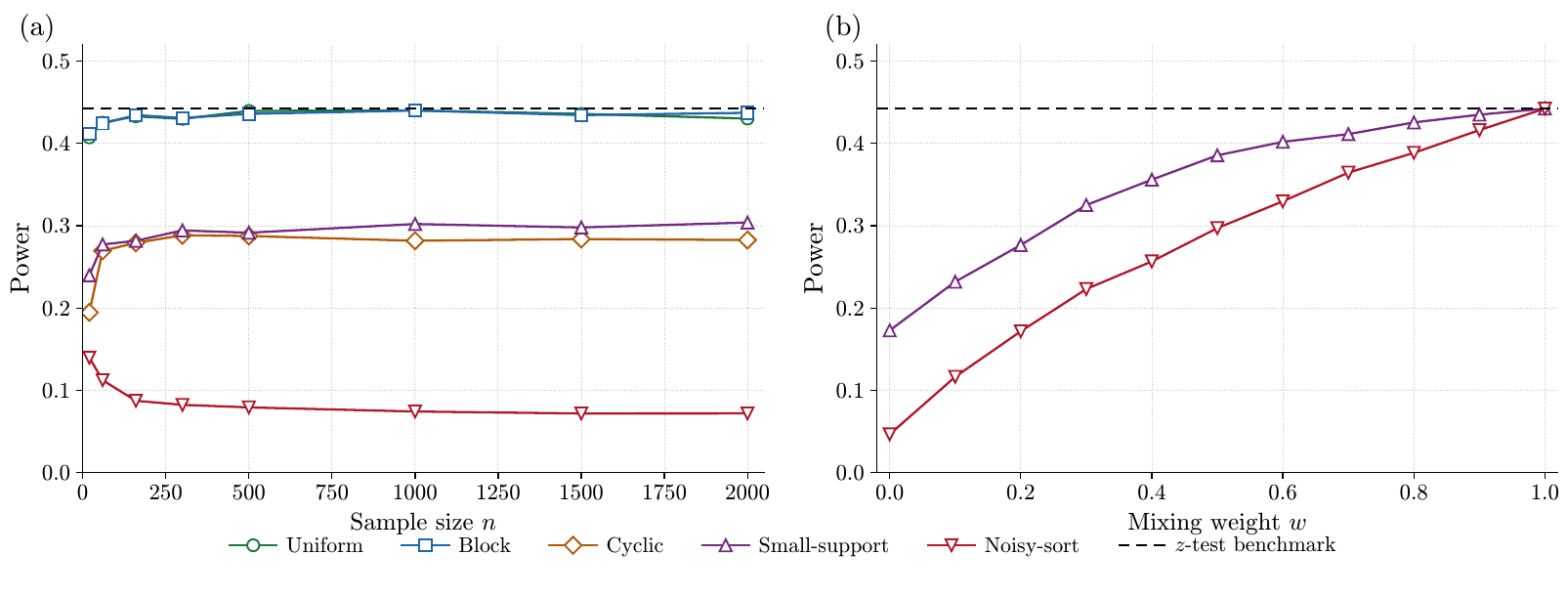}
\caption{\textbf{Experiment 1: dispersion regimes.}
\textbf{(a)} Power vs.\ sample size for five permutation strategies,
illustrating the dispersion-based power predictions of
Theorems~\ref{thm:cond-normal}--\ref{thm:necessary}.
Uniform and Block both have $V_{1,n}=0$ and $V_{2,n}\to 0$,
so their power approaches the $z$-test benchmark ($0.442$).
Cyclic has $V_{2,n}\not\to 0$ and stabilizes below the benchmark,
while Small-support and Noisy-sort have $V_{1,n}>0$ and suffer further power loss.
\textbf{(b)} Mixture power for
$q_n^{(w)} = w\cdot\mathrm{Unif}(\mathfrak S_n) + (1-w)\cdot Q$ at $n=200$
as $w$ varies from $0$ to $1$,
with $Q$ either Small-support ($K=20$) or Noisy-sort ($\eta=2.0$).
Both mixtures rise monotonically toward the benchmark,
confirming that the optimality condition of Theorem~\ref{thm:cond-normal}
is robust to mixture contamination.
Parameters: $\delta=3$, $\alpha=0.05$, $B=199$, $20{,}000$ replications.}
\label{fig:exp1}
\end{figure}

\paragraph{Results.}
Figure~\ref{fig:exp1} reports the empirical rejection rate as a function of $n$,
and several patterns emerge, all consistent with the theory.
First, the \emph{Uniform} and \emph{Block} tests (both satisfying $V_{1,n}\to 0$ and $V_{2,n}\to 0$)
converge to the $z$-test benchmark $0.442$ as $n$ grows, confirming Theorem~\ref{thm:power};
the two curves are nearly indistinguishable for $n\ge 100$.
Second, the \emph{Cyclic} test, which has $V_{1,n}=0$ but $V_{2,n}\to 1/48\approx 0.021$,
converges to a power of approximately $0.28$, well below the benchmark,
confirming that marginal balance alone is insufficient and that $V_{2,n}\to 0$
is genuinely necessary (Theorem~\ref{thm:necessary});
the pairwise irregularity $V_{2,n}\to 1/48$ translates into an absolute power loss of roughly $0.16$,
offering a concrete calibration for how much pairwise dependence the dispersion framework ``permits''.
Third, the \emph{Small-support} test ($V_{1,n}>0$, $V_{2,n}>0$) stabilizes at approximately $0.30$
and the \emph{Noisy-sort} test ($V_{1,n}>0$, full support) achieves the lowest power at $\approx 0.08$,
illustrating that $V_{1,n}$ has a more severe impact on power than $V_{2,n}$ alone.
Finally, separate simulations (not shown) confirm that all five tests remain valid under $H_0$ ($\delta=0$),
with empirical rejection rates at or below $\alpha=0.05$ for every strategy and every $n$,
consistent with Theorem~\ref{thm:ramdas-valid}.

\paragraph{Mixture contamination.}
To examine the sensitivity of the dispersion framework to the purity of $q_n$,
we consider convex mixtures
\[
q_n^{(w)} = w\cdot\mathrm{Unif}(\Sn) + (1-w)\cdot Q,
\qquad w\in[0,1],
\]
where $Q$ is either Small-support ($K=20$) or Noisy-sort ($\eta=2.0$).
At $w=0$, the randomization is pure $Q$ and suffers the dispersion penalty seen in
panel~(a); at $w=1$, it is uniform and attains the $z$-test benchmark $0.442$.
Panel~(b) of Figure~\ref{fig:exp1} plots both mixtures at $n=200$.
In both cases the power rises smoothly and monotonically with $w$,
and the benchmark is approximately attained once $w\ge 0.8$.
The Noisy-sort mixture recovers more slowly than the Small-support mixture,
consistent with its greater dispersion penalty in panel~(a).
At $w=0.2$, the Small-support mixture already reaches $\approx 0.27$
while the Noisy-sort mixture is at $\approx 0.17$,
and both converge to $\approx 0.44$ at $w=1$.
This behavior is consistent with the dispersion formula
$V_{j,n}(q_n^{(w)}) = (1-w)^2\,V_{j,n}(Q)$ for $j=1,2$,
since the uniform component contributes zero to both dispersions.
Any fixed $w<1$ leaves residual dispersion of order $(1-w)^2$,
which decreases continuously to zero as $w\to 1$.
The experiment thus confirms that the optimality condition of
Theorem~\ref{thm:cond-normal} is robust to mixture contamination,
in a manner consistent with the $(1-w)^2$ dispersion formula.


\subsection{Power comparison in the stratified design}\label{sec:sim-stratified}

\paragraph{Setup.}
We use the design, hypotheses, and test statistic of Section~\ref{sec:strat-design}:
two strata of equal size $n/2$ with stratum means $\mu_1=\mu$ and $\mu_2=-\mu$,
$n/4$ units treated in each stratum, data generated as
$X_i=\mu_{\mathrm{s}(i)}+\frac{\delta}{\sqrt{n}}w_{0,i}+\varepsilon_i$
with $\varepsilon_i\iid\N(0,1)$, and test statistic $T=\bar X_{\mathrm{trt}}-\bar X_{\mathrm{ctrl}}$.
We compare the two strategies of Section~\ref{sec:strat-design}
(full permutation and within-stratum permutation),
both valid under $H_0$ by Theorem~\ref{thm:ramdas-valid} and satisfying 
$V_{1,n}=0$ and $V_{2,n}\to 0$.
The theoretical asymptotic powers from Proposition~\ref{prop:power-compare} are
\begin{align*}
\mathrm{Power}_{\mathrm{full}}&=1-\Phi\bigl(z_{0.95}\sqrt{1+\mu^2}-\delta/2\bigr),\\
\mathrm{Power}_{\mathrm{strat}}&=1-\Phi\bigl(z_{0.95}-\delta/2\bigr)=0.442.
\end{align*}

\paragraph{Varying the nuisance parameter $\mu$ (fixed $n$).}
We fix $n=200$ and vary $\mu\in\{0,0.5,\ldots,2.5,3\}$.
Figure~\ref{fig:exp2}(a) plots the results.
At $\mu=0$, both strategies achieve power $\approx 0.43$,
close to the common asymptotic limit $0.442$.
As $\mu$ increases, the full permutation test's power drops sharply,
from $0.43$ at $\mu=0$ to $0.37$ at $\mu=0.5$,
$0.21$ at $\mu=1$, and effectively zero by $\mu=2.5$.
The within-stratum test maintains power $\approx 0.43$ across all values of $\mu$.
The simulated rejection rates closely match the theoretical dashed curves.

\paragraph{Varying the sample size $n$ (fixed $\mu$).}
We fix $\mu=1$ and vary $n\in\{40,60,\ldots,1600,2000\}$.
Figure~\ref{fig:exp2}(b) reports the empirical rejection rate of each strategy
together with the asymptotic limits from Proposition~\ref{prop:power-compare},
$\mathrm{Power}_{\mathrm{full}}\approx 0.204$ and $\mathrm{Power}_{\mathrm{strat}}\approx 0.442$.
Two features stand out.
First, both curves are essentially flat in $n$.
The within-stratum test achieves power $\approx 0.43$ already at $n=40$
and tracks its asymptote to within Monte Carlo error thereafter,
while the full permutation test is pinned near $0.20$ across the entire range.
Second, and more importantly, the gap between the two strategies
does not close as $n$ grows; the gap is already fully present at $n=40$.
This confirms that the power loss of full permutation under nuisance heterogeneity
is a genuine asymptotic phenomenon (Proposition~\ref{prop:power-compare}) rather than
a finite-sample artifact.

\begin{figure}[t]
\centering
\includegraphics[width=\textwidth]{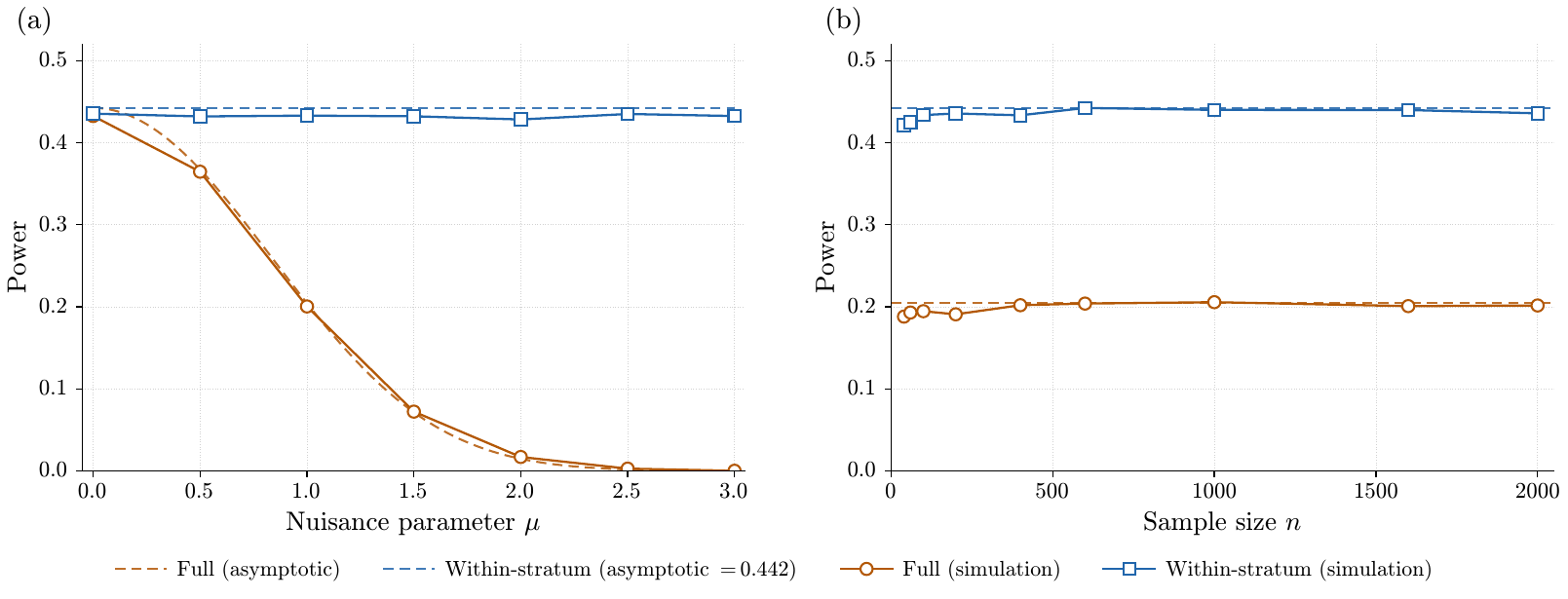}
\caption{\textbf{Experiment 2: stratified design.}
\textbf{(a)} Power as a function of the nuisance parameter $\mu$ at fixed $n=200$.
The full-permutation test loses power rapidly as $|\mu|$ grows,
while within-stratum permutation remains at the asymptotic optimum $0.442$,
confirming Proposition~\ref{prop:power-compare}.
\textbf{(b)} Power as a function of $n$ at fixed $\mu=1$.
Both curves track their asymptotic limits ($0.204$ for full, $0.442$ for within-stratum) 
from moderate $n$ onward.
Parameters: $\delta=3$, $\alpha=0.05$, $B=199$, $20{,}000$ replications.}
\label{fig:exp2}
\end{figure}

\paragraph{Summary.}
The simulation results consistently support the theoretical analysis.
In Section~\ref{sec:sim-dispersion},
the dispersions $(V_{1,n},V_{2,n})$ correctly predict which permutation distributions
achieve optimal power and which do not.
In Section~\ref{sec:sim-stratified},
the asymptotic power formulas of Proposition~\ref{prop:power-compare}
provide an accurate approximation to the finite-sample rejection rates
across the full range of nuisance parameters $\mu$ (panel a of Figure~\ref{fig:exp2})
and sample sizes $n$ (panel b).
The strict dominance of within-stratum permutation over full permutation
is confirmed in every configuration tested, and the gap
$\mathrm{Power}_{\mathrm{strat}}-\mathrm{Power}_{\mathrm{full}}$
is essentially independent of $n$, as predicted by theory.


\section{Conclusion}\label{sec:conclusion}

We have studied the power of generalized permutation tests
when the randomization distribution $q_n$ on $\Sn$ is not necessarily uniform.
Building on the finite-sample validity framework of Ramdas et al.~\cite{ramdas2023permutation},
we introduced two scalar quantities, the first-order dispersion $V_{1,n}$
and the second-order dispersion $V_{2,n}$, and showed that,
within the Pitman local model,
they provide a necessary and sufficient characterization of when
a generalized permutation test attains optimal local power
for the difference-in-means statistic.
Vanishing of both dispersions is sufficient for the conditional permutation distribution
to converge to the standard normal and for the test to match the $z$-test power benchmark
(Theorem~\ref{thm:power});
under Gaussian noise, non-vanishing dispersions cause the permutation distribution
to fail to self-average and the critical values not to stabilize
(Theorem~\ref{thm:necessary}).
A key byproduct of our analysis is the extension of this characterization
to statistics admitting a permutation-typical asymptotic linearization (Theorem~\ref{thm:AL}),
with the standard studentized difference in means as the main example.
Monte Carlo experiments (Section~\ref{sec:simulations})
confirm that the dispersions correctly separate optimal from non-optimal
randomization laws in finite samples.

We have also shown that this characterization is confined to the Pitman local model.
Once the data carry additional structure (for instance, between-stratum heterogeneity
in a stratified design), the scale of the permutation distribution
becomes a separate determinant of power beyond the dispersions.
In such settings, non-uniform randomization laws that respect the nuisance structure
can strictly dominate the uniform distribution even when both satisfy $V_{1,n}=0$ and $V_{2,n}\to0$
(Proposition~\ref{prop:power-compare}).
This formally recovers in the generalized permutation framework
the classical variance-reduction benefit of stratification.

We highlight three directions for further work.
First, the rate at which $V_{1,n}$ and $V_{2,n}$ vanish governs
finite-sample approximations to the permutation distribution.
A Berry--Esseen-type analysis quantifying these rates,
possibly refined to account for the empirical observation that $V_{1,n}$
appears more consequential for power than $V_{2,n}$ (Section~\ref{sec:sim-dispersion}),
would be of practical interest.
Second, while Section~\ref{sec:nonuniform} demonstrates that exploiting
known nuisance structure through the choice of $q_n$ yields power gains,
adaptively constructing $q_n$ when the nuisance structure is only estimated
from the data---without sacrificing finite-sample validity---remains an open problem.
Third, a natural extension of the dispersion framework would adapt it to
non-location alternatives, such as scale alternatives or distributional testing,
where the Pitman local structure takes a different form and the relevant measures
of ``deviation from complete randomization'' may need to be redefined.

\bibliographystyle{plainnat}
\bibliography{reference}

\appendix

\section{Auxiliary lemmas}\label{app:auxiliary}

This section collects four technical lemmas used repeatedly in the main proofs.

\begin{lemma}[Right-translates only relabel $(a_k,b_{k\ell})$]\label{lem:translate}
Fix $s\in\Sn$ and let $\tau=\pi\circ s^{-1}$ with $\pi\sim q_n$,
so that $\tau\sim q_n^{(s)}$.
For all $i\in[n]$ and all pairs $i\neq j$,
\[
\mP(\tau(i)\in A_n)=a_{s^{-1}(i)},\qquad
\mP(\tau(i),\tau(j)\in A_n)=b_{s^{-1}(i),\,s^{-1}(j)}.
\]
Consequently, the dispersions $(V_{1,n},V_{2,n})$ computed under $q_n^{(s)}$
coincide with those computed under $q_n$, for every $s\in\Sn$.
\end{lemma}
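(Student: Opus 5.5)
The argument is a direct change of variables. The one point needing care is that the dispersions are sums over \emph{all} indices, so they are unchanged by any relabeling of those indices.

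\textbf{First identity.} Fix $i\in[n]$. Since $\tau=\pi\circ s^{-1}$, we have $\tau(i)=\pi(s^{-1}(i))$. Hence, with $k:=s^{-1}(i)$,
\[
\mP(\tau(i)\in A_n)=\mP(\pi(s^{-1}(i))\in A_n)=\mP(\pi(k)\in A_n)=a_{s^{-1}(i)}.
\]

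\textbf{Second identity.} Fix $i\neq j$. Because $s^{-1}$ is a bijection, $s^{-1}(i)\neq s^{-1}(j)$, so $b_{s^{-1}(i),s^{-1}(j)}$ is a legitimate off-diagonal two-point probability. The same substitution gives
\[
\mP(\tau(i),\tau(j)\in A_n)=\mP\bigl(\pi(s^{-1}(i))\in A_n,\ \pi(s^{-1}(j))\in A_n\bigr)=b_{s^{-1}(i),\,s^{-1}(j)}.
\]
We also check that $\tau\sim q_n^{(s)}$ under the paper's convention $q_n^{(s)}(t)=q_n(t\circ s)$. Indeed, $\mP(\tau=t)=\mP(\pi=t\circ s)=q_n(t\circ s)$.

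\textbf{Invariance of the dispersions.} Write $a_i^{(s)}$ and $b_{ij}^{(s)}$ for the one- and two-point probabilities computed under $q_n^{(s)}$. By the identities above, $a^{(s)}_i=a_{s^{-1}(i)}$ and $b^{(s)}_{ij}=b_{s^{-1}(i)s^{-1}(j)}$. The benchmarks $\rho_n$ and $\tau_n$ do not depend on $q_n$. Therefore
\[
V_{1,n}^{(s)}=\frac1n\sum_{i=1}^n\bigl(a_{s^{-1}(i)}-\rho_n\bigr)^2=\frac1n\sum_{k=1}^n(a_k-\rho_n)^2=V_{1,n},
\]
after reindexing $k=s^{-1}(i)$, which is a bijection of $[n]$.

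For $V_{2,n}$ we reindex by $(k,\ell)=(s^{-1}(i),s^{-1}(j))$. This map sends the off-diagonal set $\{(i,j):i\neq j\}$ bijectively onto itself, because $s^{-1}\times s^{-1}$ is a bijection of $[n]^2$ that preserves the diagonal. So the sum defining $V_{2,n}^{(s)}$ is a rearrangement of the sum defining $V_{2,n}$, and the two are equal.

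No step is a real obstacle. The only thing to verify is that the pair map preserves the off-diagonal set, and that follows from injectivity of $s^{-1}$.
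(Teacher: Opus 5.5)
Your proposal is correct and is essentially the paper's proof: you write $\tau(i)=\pi(s^{-1}(i))$ to read off $a_{s^{-1}(i)}$ and $b_{s^{-1}(i),s^{-1}(j)}$, then reindex the sums defining $V_{1,n}$ and $V_{2,n}$ through the bijection $s^{-1}$ (and $s^{-1}\times s^{-1}$ on off-diagonal pairs). You also check that $\tau\sim q_n^{(s)}$ and that the pair map preserves $\{(i,j):i\neq j\}$; the paper leaves both implicit, and both are right.
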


\begin{proof}
Since $\tau(i)=\pi(s^{-1}(i))$, the events $\{\tau(i)\in A_n\}$ and 
$\{\tau(i),\tau(j)\in A_n\}$ reduce under $\pi\sim q_n$ to 
$\{\pi(s^{-1}(i))\in A_n\}$ and $\{\pi(s^{-1}(i)),\pi(s^{-1}(j))\in A_n\}$,
yielding $a_{s^{-1}(i)}$ and $b_{s^{-1}(i),s^{-1}(j)}$ respectively.

For the dispersion invariance, write 
$a_i^{(s)}:=\mP(\tau(i)\in A_n)=a_{s^{-1}(i)}$
and $b_{ij}^{(s)}:=b_{s^{-1}(i),s^{-1}(j)}$.
Since $s^{-1}:[n]\to[n]$ is a bijection, the substitution $k=s^{-1}(i)$ gives
\[
\frac1n\sum_{i=1}^n\bigl(a_i^{(s)}-\rho_n\bigr)^2
=\frac1n\sum_{k=1}^n\bigl(a_k-\rho_n\bigr)^2=V_{1,n},
\]
and, treating $(i,j)$ as a pair via $k=s^{-1}(i),\,\ell=s^{-1}(j)$,
$\frac1{n^2}\sum_{i\neq j}(b_{ij}^{(s)}-\tau_n)^2=V_{2,n}$.
\end{proof}

\begin{lemma}[A variance bound for overlap]\label{lem:overlap-bound}
Let $(W,W')$ be $\{0,1\}^n$-valued random vectors that are conditionally i.i.d.\ given $\pi_0$,
and define $O_n:=\frac1n\sum_{i=1}^n W_iW_i'$.
Write $p_i:=\mP(W_i=1\mid\pi_0)$ and 
$p_{ij}:=\mP(W_i=W_j=1\mid\pi_0)$ for $i\neq j$. Then
\[
\Var(O_n\mid \pi_0)
\le \frac{1}{4n}
+\frac{4}{n^2}\sum_{1\le i<j\le n}\bigl|p_{ij}-p_ip_j\bigr|.
\]
\end{lemma}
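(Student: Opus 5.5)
We expand the variance directly. Since $W$ and $W'$ are conditionally i.i.d.\ given $\pi_0$, the products $Y_i:=W_iW_i'$ satisfy
\[
\E[Y_i\mid\pi_0]=p_i^2,\qquad \E[Y_iY_j\mid\pi_0]=\E[W_iW_j\mid\pi_0]\,\E[W_i'W_j'\mid\pi_0]=p_{ij}^2\quad(i\neq j),
\]
and $Y_i^2=Y_i$. Hence
\[
\Var(O_n\mid\pi_0)=\frac{1}{n^2}\sum_{i=1}^n p_i^2(1-p_i^2)+\frac{2}{n^2}\sum_{1\le i<j\le n}\bigl(p_{ij}^2-p_i^2p_j^2\bigr).
\]
So the plan is to bound the diagonal and the off-diagonal parts separately.

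\emph{Diagonal part.} We have $x(1-x)\le 1/4$ for $x=p_i^2\in[0,1]$. This gives $\frac{1}{n^2}\sum_i p_i^2(1-p_i^2)\le \frac{1}{4n}$, which is exactly the first term of the claimed bound.

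\emph{Off-diagonal part.} Factor $p_{ij}^2-p_i^2p_j^2=(p_{ij}-p_ip_j)(p_{ij}+p_ip_j)$. Since $p_{ij}\le 1$ and $p_ip_j\le 1$, the second factor is at most $2$. Therefore $|p_{ij}^2-p_i^2p_j^2|\le 2|p_{ij}-p_ip_j|$. Bounding each off-diagonal summand by its absolute value then yields
\[
\frac{2}{n^2}\sum_{i<j}\bigl(p_{ij}^2-p_i^2p_j^2\bigr)\le \frac{4}{n^2}\sum_{i<j}\bigl|p_{ij}-p_ip_j\bigr|,
\]
and adding the two parts gives the lemma.

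There is no genuine obstacle here. The only point needing care is the factorization of the joint second moment $\E[Y_iY_j\mid\pi_0]$, which uses the conditional independence of $W$ and $W'$ given $\pi_0$, and the fact that they share the same pairwise law $p_{ij}$. The constant $4$ comes from the crude bound $p_{ij}+p_ip_j\le 2$, which is all that is needed.
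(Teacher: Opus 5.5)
Your proposal is correct and follows essentially the same route as the paper: you split the conditional variance into diagonal and off-diagonal parts, bound each Bernoulli variance by $1/4$, and bound each cross covariance via $p_{ij}^2-p_i^2p_j^2=(p_{ij}-p_ip_j)(p_{ij}+p_ip_j)$ with $p_{ij}+p_ip_j\le 2$. Your explicit diagonal formula $p_i^2(1-p_i^2)$ is just the Bernoulli variance the paper bounds directly.
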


\begin{proof}
The variance decomposes as
\[
\Var(O_n\mid\pi_0)
=\frac{1}{n^2}\sum_{i=1}^n \Var(W_iW_i'\mid\pi_0)
+\frac{2}{n^2}\sum_{1\le i<j\le n}\Cov(W_iW_i',W_jW_j'\mid\pi_0).
\]

For the diagonal terms, $W_iW_i'\in\{0,1\}$ is Bernoulli, so 
$\Var(W_iW_i'\mid\pi_0)\le 1/4$, and the diagonal sum is at most $1/(4n)$.

For the cross terms, conditional independence of $W$ and $W'$ given $\pi_0$ gives
$\E[W_iW_i'W_jW_j'\mid\pi_0]=\E[W_iW_j\mid\pi_0]\cdot\E[W_i'W_j'\mid\pi_0]=p_{ij}^2$,
while $\E[W_iW_i'\mid\pi_0]\cdot\E[W_jW_j'\mid\pi_0]=p_i^2 p_j^2$.
Therefore
\[
\Cov(W_iW_i',W_jW_j'\mid\pi_0)
=p_{ij}^2-p_i^2p_j^2
=(p_{ij}-p_ip_j)(p_{ij}+p_ip_j).
\]
Since $0\le p_{ij},p_i,p_j\le 1$, we have $p_{ij}+p_ip_j\le 2$,
hence $|\Cov(W_iW_i',W_jW_j'\mid\pi_0)|\le 2|p_{ij}-p_ip_j|$.
Substituting into the variance expansion yields the claimed inequality.
\end{proof}

\begin{lemma}[Two-randomization criterion]\label{lem:two-rand}
Let $W$ be a random vector such that $W$ is independent of $X$ given $\pi_0$,
and let $W'$ be an independent copy of $W$ given $\pi_0$, also independent of $X$.
Set $T_n:=T_n(X,W)$ and $T_n':=T_n(X,W')$, so that $(T_n,T_n')$ is i.i.d.\
conditional on $(X,\pi_0)$.
Assume that the conditional law of $(T_n,T_n')$ given $\pi_0$
converges weakly in probability to the law of $(Z,Z')$.
Equivalently, for every bounded Lipschitz function $g:\mathbb R^2\to\mathbb R$,
\[
\E\{g(T_n,T_n')\mid \pi_0\}
\toP
\E\{g(Z,Z')\},
\]
where $Z,Z'$ are i.i.d.\ with continuous cdf $F$.
Then
\[
\sup_{t\in\R}\bigl|\mP(T_n\le t\mid X,\pi_0)-F(t)\bigr|\toP 0.
\]
\end{lemma}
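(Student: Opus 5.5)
We prove the two-randomization criterion (Lemma~\ref{lem:two-rand}) by a second-moment argument. The plan is to control, for each fixed $t$, the conditional cdf $G_n(t):=\mP(T_n\le t\mid X,\pi_0)$ through its first two moments given $\pi_0$. Then we upgrade pointwise convergence to uniform convergence using monotonicity and continuity of $F$.

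\textbf{Step 1: pointwise convergence.} Fix $t\in\R$. Because $T_n$ and $T_n'$ are conditionally i.i.d.\ given $(X,\pi_0)$, we have
\[
G_n(t)^2=\mP(T_n\le t,\ T_n'\le t\mid X,\pi_0).
\]
Taking conditional expectations given $\pi_0$ gives
\[
\E\{G_n(t)\mid\pi_0\}=\mP(T_n\le t\mid\pi_0),
\qquad
\E\{G_n(t)^2\mid\pi_0\}=\mP(T_n\le t,\ T_n'\le t\mid\pi_0).
\]
Indicators are not Lipschitz, so we sandwich them. Take bounded Lipschitz functions $h^-_\epsilon\le \ind\{\cdot\le t\}\le h^+_\epsilon$ that differ from the indicator only on $[t-\epsilon,t+\epsilon]$. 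Apply the hypothesis to $g(x,y)=h^\pm_\epsilon(x)$ and to $g(x,y)=h^\pm_\epsilon(x)h^\pm_\epsilon(y)$. The limits differ from $F(t)$ and $F(t)^2$ by at most a constant multiple of $F(t+\epsilon)-F(t-\epsilon)$. This is small by continuity of $F$, so letting $\epsilon\downarrow0$ gives
\[
\mP(T_n\le t\mid\pi_0)\toP F(t),
\qquad
\mP(T_n\le t,\ T_n'\le t\mid\pi_0)\toP F(t)^2.
\]

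\textbf{Step 2: concentration.} The conditional mean squared deviation expands as
\[
\E\{(G_n(t)-F(t))^2\mid\pi_0\}
=\E\{G_n(t)^2\mid\pi_0\}-2F(t)\,\E\{G_n(t)\mid\pi_0\}+F(t)^2.
\]
By Step~1 this converges in probability to $F(t)^2-2F(t)^2+F(t)^2=0$. The left side is a random variable bounded by $1$, so dominated convergence gives $\E(G_n(t)-F(t))^2\to0$. Chebyshev's inequality then yields $G_n(t)\toP F(t)$ for each fixed $t$.

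\textbf{Step 3: uniformity.} We use a Pólya-type argument. Fix $\eta>0$. Since $F$ is continuous, choose finitely many points $t_1<\dots<t_K$ with $F(t_1)<\eta$, $F(t_K)>1-\eta$, and $F(t_{k+1})-F(t_k)<\eta$ for each $k$. Because $G_n$ and $F$ are both nondecreasing,
\[
\sup_t|G_n(t)-F(t)|\le \max_k|G_n(t_k)-F(t_k)|+2\eta.
\]
The tails $t<t_1$ and $t>t_K$ are covered the same way, using $G_n(t)\le G_n(t_1)$ and $1-G_n(t)\le 1-G_n(t_K)$. The finite maximum tends to $0$ in probability by Step~2, and $\eta$ is arbitrary, which proves the claim.

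\textbf{Main obstacle.} The delicate point is Step~1, where the hypothesis for bounded Lipschitz $g$ must be transferred to indicator-type functions. The sandwich argument above handles this and relies on continuity of $F$. Conditioning only on $\pi_0$ rather than on $(X,\pi_0)$ is harmless, because the tower property passes through the conditional i.i.d.\ structure of $(T_n,T_n')$.
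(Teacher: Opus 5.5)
Your proof is correct and follows essentially the same route as the paper. Both use the conditional i.i.d.\ identity $\E[G_n(t)^2]=\mP(T_n\le t,\,T_n'\le t)$ to get pointwise convergence by a second-moment argument, then upgrade it to uniform convergence with a finite grid, monotonicity, and continuity of $F$. The only difference is cosmetic: you stay conditional on $\pi_0$ and sandwich the indicator between Lipschitz functions, whereas the paper first integrates out $\pi_0$ to get unconditional weak convergence and then uses that $(-\infty,t]^2$ is a continuity set.
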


\begin{proof}
Define $p_n(t):=\mP(T_n\le t\mid X,\pi_0)$.
The proof proceeds in two stages: pointwise convergence, then upgrading to uniform.

\medskip
\noindent\emph{Stage 1: pointwise convergence.}
The conditional i.i.d.\ structure given $(X,\pi_0)$ gives
\[
\Var\bigl(p_n(t)\bigr)
=\mP(T_n\le t,\,T_n'\le t)-\mP(T_n\le t)^2,
\]
so we aim to show both the joint and marginal probabilities stabilize at $F(t)^2$ and $F(t)$.
We first upgrade the $\pi_0$-conditional convergence to unconditional convergence.
Let $g:\R^2\to\R$ be bounded continuous and set $Y_n:=\E[g(T_n,T_n')\mid\pi_0]$.
By assumption $Y_n\toP\E[g(Z,Z')]$, and since $|Y_n|\le\|g\|_\infty$,
uniform integrability yields $\E[Y_n]\to\E[g(Z,Z')]$,
i.e., $\E[g(T_n,T_n')]\to\E[g(Z,Z')]$.
In particular, $(T_n,T_n')\dto(Z,Z')$ unconditionally, and $T_n\dto Z$.

Since $F$ is continuous, $(-\infty,t]^2$ is a continuity set for $(Z,Z')$, so
\[
\mP(T_n\le t,\,T_n'\le t)\to F(t)^2
\quad\text{and}\quad
\mP(T_n\le t)\to F(t).
\]
Thus $\Var(p_n(t))\to 0$, and since $\E[p_n(t)]=\mP(T_n\le t)\to F(t)$,
we obtain $p_n(t)\toP F(t)$ for each fixed $t$.

\medskip
\noindent\emph{Stage 2: uniform convergence.}
Fix $\varepsilon>0$. Choose $B>0$ such that $F(-B)\le\varepsilon/4$ and $1-F(B)\le\varepsilon/4$.
By uniform continuity of $F$ on $[-B,B]$, there exists $\eta>0$ with $|F(t)-F(u)|\le\varepsilon/4$ whenever $|t-u|\le\eta$.
Choose a grid $-B=t_0<\cdots<t_J=B$ with mesh at most $\eta$.

From Stage~1 and the union bound, $\max_{0\le j\le J}|p_n(t_j)-F(t_j)|\toP 0$.
For any $t\in[t_j,t_{j+1}]$, monotonicity of both $p_n$ and $F$ gives
\[
|p_n(t)-F(t)|\le \max\{|p_n(t_j)-F(t_j)|,\,|p_n(t_{j+1})-F(t_{j+1})|\}+\varepsilon/4,
\]
so $\sup_{t\in[-B,B]}|p_n(t)-F(t)|\le\max_j|p_n(t_j)-F(t_j)|+\varepsilon/4\toP\varepsilon/4$.
For the tails, monotonicity of $p_n$ gives $p_n(t)\le p_n(-B)$ for $t<-B$, 
so $|p_n(t)-F(t)|\le p_n(-B)+F(-B)\le\varepsilon/2$ with probability tending to one
(using Stage~1 at $t=-B$ and $F(-B)\le\varepsilon/4$);
similarly for $t>B$.
Combining gives $\sup_{t\in\R}|p_n(t)-F(t)|\toP 0$.
\end{proof}

\begin{lemma}[Overlap fluctuations]\label{lem:overlap-V}
Let $\pi,\pi'\iid q_n$ be independent and define
\[
O_n:=\frac1n\sum_{k=1}^n \ind\{\pi(k)\in A_n\}\ind\{\pi'(k)\in A_n\}.
\]
Then
\[
\E[(O_n-\rho_n^2)^2] \ge V_{1,n}^2,
\]
and, moreover, if $V_{1,n}\to 0$, then
\[
\E[(O_n-\rho_n^2)^2]=V_{2,n}+o(1).
\]
\end{lemma}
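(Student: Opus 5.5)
The plan is to compute both moments of $O_n$ exactly in terms of $(a_k,b_{k\ell})$, using the independence of $\pi$ and $\pi'$. Two deterministic constraints, which hold for every $q_n$, will then turn these moments into $V_{1,n}$ and $V_{2,n}$.

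Write $U_k:=\ind\{\pi(k)\in A_n\}$ and $U_k':=\ind\{\pi'(k)\in A_n\}$. Since $\pi$ is a bijection, $\sum_k U_k=n_1$ almost surely. Taking expectations gives $\sum_k a_k=n_1$. Multiplying $\sum_\ell U_\ell=n_1$ by $U_k$ and using $U_k^2=U_k$ gives $\sum_{\ell\neq k}U_kU_\ell=(n_1-1)U_k$, hence
\[
\sum_{\ell\neq k} b_{k\ell}=(n_1-1)a_k
\quad\text{and}\quad
\sum_{k\neq\ell}b_{k\ell}=n_1(n_1-1)=n(n-1)\tau_n .
\]
I expect spotting these two linear constraints to be the only real point of the proof. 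Everything else is bookkeeping.

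For the first claim, use independence to get $\E[O_n]=\frac1n\sum_k a_k^2$. Expanding $\frac1n\sum_k(a_k-\rho_n)^2$ with $\sum_k a_k=n\rho_n$ gives $\E[O_n]-\rho_n^2=V_{1,n}$. Jensen's inequality (the second moment dominates the squared mean) then yields
\[
\E[(O_n-\rho_n^2)^2]\ge(\E O_n-\rho_n^2)^2=V_{1,n}^2 .
\]

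For the second claim, independence gives $\E[U_kU_\ell U_k'U_\ell']=\E[U_kU_\ell]^2$. Splitting the diagonal from the off-diagonal terms,
\[
\E[O_n^2]=\frac1{n^2}\Bigl(\sum_k a_k+\sum_{k\neq\ell}b_{k\ell}^2\Bigr)
=\frac{n_1}{n^2}+\frac1{n^2}\sum_{k\neq\ell}b_{k\ell}^2 .
\]
Next write $b_{k\ell}^2=(b_{k\ell}-\tau_n)^2+2\tau_n b_{k\ell}-\tau_n^2$ and use $\sum_{k\neq\ell}b_{k\ell}=n(n-1)\tau_n$. This gives
\[
\frac1{n^2}\sum_{k\neq\ell}b_{k\ell}^2=V_{2,n}+\frac{n-1}{n}\tau_n^2 .
\]
Combining with $\E[(O_n-\rho_n^2)^2]=\E O_n^2-2\rho_n^2\E O_n+\rho_n^4$ and $\E O_n=\rho_n^2+V_{1,n}$ produces the exact identity
\[
\E[(O_n-\rho_n^2)^2]=V_{2,n}-2\rho_n^2V_{1,n}+\frac{n_1}{n^2}+\frac{n-1}{n}\tau_n^2-\rho_n^4 .
\]
Finally, $\tau_n=\rho_n(\rho_n-1/n)\,n/(n-1)=\rho_n^2+O(1/n)$, so the last three terms are $O(1/n)$ under Assumption~\ref{ass:rhon}. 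Hence $\E[(O_n-\rho_n^2)^2]=V_{2,n}-2\rho_n^2V_{1,n}+o(1)$, which is $V_{2,n}+o(1)$ when $V_{1,n}\to0$.
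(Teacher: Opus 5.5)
Your argument follows the paper's proof: $\E O_n-\rho_n^2=V_{1,n}$ plus Jensen for the first claim, and for the second a second-moment expansion that uses $\sum_{k\neq\ell}b_{k\ell}=n_1(n_1-1)$ to isolate $V_{2,n}$. The row-sum identity $\sum_{\ell\neq k}b_{k\ell}=(n_1-1)a_k$ is true but not needed.

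There is one slip. The diagonal ($k=\ell$) contribution to $\E[O_n^2]$ is $\E[U_k]^2=a_k^2$, so the diagonal term is $\frac1{n^2}\sum_k a_k^2$, not $\frac{n_1}{n^2}$. Your displayed ``exact identity'' is therefore false. For example, with $n=2$, $n_1=1$ and uniform $q_n$, the true value of $\E[(O_n-\rho_n^2)^2]$ is $1/16$, while your formula gives $3/16$. Because both $\frac1{n^2}\sum_k a_k^2$ and $\frac{n_1}{n^2}$ lie in $[0,1/n]$, the $o(1)$ conclusion is unaffected.
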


\begin{proof}
Let $I_k:=\ind\{\pi(k)\in A_n\}$ and $I_k':=\ind\{\pi'(k)\in A_n\}$.
By independence of $\pi$ and $\pi'$, $\E[I_kI_k']=a_k^2$, so
$\E[O_n]=\frac1n\sum_k a_k^2$.
Writing $a_k=\rho_n+d_k$ with $\sum_k d_k=0$ (since $\sum_k a_k=n_1=n\rho_n$),
\[
\E[O_n]-\rho_n^2=\frac1n\sum_k d_k^2=V_{1,n}.
\]
Jensen's inequality gives $\E[(O_n-\rho_n^2)^2]\ge(\E[O_n]-\rho_n^2)^2=V_{1,n}^2$,
proving the first claim.

For the second claim, assume $V_{1,n}\to 0$.
For $k\neq\ell$, independence between $\pi$ and $\pi'$ gives
$\E[I_kI_k'I_\ell I_\ell']=b_{k\ell}^2$, hence
\[
\E[O_n^2]=\frac1{n^2}\sum_k a_k^2+\frac1{n^2}\sum_{k\neq\ell}b_{k\ell}^2.
\]
The combinatorial identity 
$\sum_{k\neq\ell}b_{k\ell}=n_1(n_1-1)$
follows from squaring $\sum_k I_k=n_1$ and taking expectations
(this gives $\sum_k a_k+\sum_{k\neq\ell}b_{k\ell}=n_1^2$, 
and $\sum_k a_k=n_1$ yields the identity).
Consequently, $\sum_{k\neq\ell}(b_{k\ell}-\tau_n)=0$, so
\[
\frac1{n^2}\sum_{k\neq\ell}b_{k\ell}^2
=\frac{n(n-1)}{n^2}\tau_n^2+V_{2,n}.
\]
Expanding $\E[(O_n-\rho_n^2)^2]=\E[O_n^2]-2\rho_n^2\E[O_n]+\rho_n^4$ 
and collecting terms yields $\E[(O_n-\rho_n^2)^2]=V_{2,n}+R_n$, where
\[
R_n=\Big(\frac{n(n-1)}{n^2}\tau_n^2-\rho_n^4\Big)
+\frac1{n^2}\sum_k a_k^2-2\rho_n^2 V_{1,n}.
\]
Each term in $R_n$ is $o(1)$: 
the first is $O(1/n)$ since $\tau_n=\rho_n^2+O(1/n)$;
the second is at most $1/n$ since $0\le a_k\le 1$;
and the third vanishes since $V_{1,n}\to 0$ by assumption.
\end{proof}

\section{Proof of Theorem~\ref{thm:cond-normal}}\label{app:proof-cond-normal}

The proof relies on the following overlap concentration result.

\begin{proposition}[Overlap concentration]\label{prop:overlap}
Conditional on $\pi_0$, let $\tau\sim q_n^{(\pi_0)}$, 
set $W=W(\tau)$ with $W_i=\ind\{\tau(i)\in A_n\}$, 
and let $\tau'$ be an independent copy of $\tau$ given $\pi_0$
with induced $W'=W(\tau')$. Define
\[
O_n:=\frac1n\sum_{i=1}^n W_iW_i',\qquad 
O_n^{(0)}:=\frac1n\sum_{i\in A_n}W_i.
\]
Under Assumptions~\ref{ass:rhon} and~\ref{ass:V},
\[
\E[O_n\mid\pi_0=s]\to\rho^2,\qquad \Var(O_n\mid\pi_0=s)\to 0,
\]
\[
\E[O_n^{(0)}\mid\pi_0=s]\to\rho^2,\qquad \Var(O_n^{(0)}\mid\pi_0=s)\to 0,
\]
with bounds that do not depend on $s\in\Sn$.
Consequently, $O_n\toP\rho^2$ and $O_n^{(0)}\toP\rho^2$.
\end{proposition}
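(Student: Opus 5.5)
We work conditionally on $\pi_0=s$ and use Lemma~\ref{lem:translate}. Under $\tau\sim q_n^{(s)}$ the marginal and pairwise probabilities are relabelings of the base quantities:
\[
p_i:=\mP(W_i=1\mid\pi_0=s)=a_{s^{-1}(i)},\qquad p_{ij}:=\mP(W_i=W_j=1\mid\pi_0=s)=b_{s^{-1}(i),s^{-1}(j)}.
\]
Every average over $i$, or over pairs $i\neq j$, of a function of these quantities therefore equals the same average computed for $q_n$. This is exactly why all bounds below will be uniform in $s$. Throughout we write $p_i=\rho_n+d_i$, where $\sum_i d_i=0$ and $\frac1n\sum_i d_i^2=V_{1,n}$.

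\emph{Means.} For the overlap, conditional independence gives $\E[O_n\mid s]=\frac1n\sum_i p_i^2=\rho_n^2+V_{1,n}\to\rho^2$. For the reference overlap,
\[
\E[O_n^{(0)}\mid s]=\frac1n\sum_{i\in A_n}p_i=\rho_n^2+\frac1n\sum_{i\in A_n}d_i.
\]
By Cauchy--Schwarz the error term satisfies $\bigl|\frac1n\sum_{i\in A_n}d_i\bigr|\le \sqrt{\rho_n V_{1,n}}\to0$. Both bounds are independent of $s$.

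\emph{Variances.} For $O_n$ we apply Lemma~\ref{lem:overlap-bound}, so it suffices to show that $\frac1{n^2}\sum_{i\neq j}|p_{ij}-p_ip_j|\to0$. We split
\[
p_{ij}-p_ip_j=(p_{ij}-\tau_n)+(\tau_n-\rho_n^2)+(\rho_n^2-p_ip_j).
\]
The first piece is handled by Cauchy--Schwarz: $\frac1{n^2}\sum_{i\neq j}|p_{ij}-\tau_n|\le\sqrt{V_{2,n}}$. The second piece is $O(1/n)$. For the third, $|\rho_n^2-p_ip_j|\le |d_i|+|d_j|+|d_id_j|$, whose average is at most $2\sqrt{V_{1,n}}+V_{1,n}$.

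For $O_n^{(0)}$ we expand the variance directly:
\[
\Var(O_n^{(0)}\mid s)=\frac1{n^2}\sum_{i\in A_n}p_i(1-p_i)+\frac1{n^2}\sum_{i\neq j\in A_n}(p_{ij}-p_ip_j).
\]
The diagonal sum is at most $1/(4n)$. The off-diagonal sum is bounded by the full absolute sum already controlled above. Hence all four quantities converge at rates depending only on $(n,V_{1,n},V_{2,n})$, uniformly in $s$.

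\emph{Conclusion.} Chebyshev's inequality conditional on $\pi_0$ gives
\[
\mP(|O_n-\rho^2|>\eta\mid\pi_0)\le \frac{\Var(O_n\mid\pi_0)+(\E[O_n\mid\pi_0]-\rho^2)^2}{\eta^2}.
\]
The right-hand side is bounded uniformly in $\pi_0$ and tends to zero. Taking expectations over $\pi_0$ gives $O_n\toP\rho^2$, and the identical argument gives $O_n^{(0)}\toP\rho^2$.

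The main obstacle is the off-diagonal covariance control. Here $V_{2,n}$ is centred at $\tau_n$ rather than at $p_ip_j$, so the step relies on the three-way decomposition combined with Cauchy--Schwarz, together with the translation lemma to keep everything free of $s$.
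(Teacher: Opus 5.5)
Your proposal is correct and follows the paper's proof essentially step for step. Both arguments use the translation lemma for uniformity in $s$. For $O_n$, both apply Lemma~\ref{lem:overlap-bound} together with the same three-way split $(p_{ij}-\tau_n)+(\tau_n-\rho_n^2)+(\rho_n^2-p_ip_j)$ controlled by Cauchy--Schwarz. For $O_n^{(0)}$, both bound the bias by $\sqrt{\rho_n V_{1,n}}$ and expand the variance directly. Your explicit Chebyshev bound including the bias term only spells out what the paper leaves implicit.
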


\begin{proof}
Fix $\pi_0=s$.
By Lemma~\ref{lem:translate}, $a_i^{(s)}=a_{s^{-1}(i)}$ and $b_{ij}^{(s)}=b_{s^{-1}(i),s^{-1}(j)}$,
so any bound expressed through symmetric averages of $(a_k,b_{k\ell})$ is independent of $s$.

\medskip
\noindent\emph{Step 1: $O_n$ concentrates.}
Since $W_i$ and $W_i'$ are conditionally independent given $\pi_0=s$,
\[
\E[O_n\mid\pi_0=s]=\frac1n\sum_k a_k^2=\rho_n^2+V_{1,n}\to\rho^2,
\]
using $\sum_k(a_k-\rho_n)^2=nV_{1,n}$.
For the variance, Lemma~\ref{lem:overlap-bound} reduces the problem to showing
\begin{equation}\label{eq:avg-cov-vanish}
\frac{1}{n^2}\sum_{k\neq\ell}|b_{k\ell}-a_ka_\ell|\to 0.
\end{equation}
Writing $a_k=\rho_n+d_k$, the triangle inequality gives
\[
|b_{k\ell}-a_ka_\ell|
\le |b_{k\ell}-\tau_n|+|\tau_n-\rho_n^2|+|a_ka_\ell-\rho_n^2|.
\]
For the first, Cauchy--Schwarz with $\#\{k\neq\ell\}=n(n-1)$ gives
\[
\frac1{n^2}\sum_{k\neq\ell}|b_{k\ell}-\tau_n|
\le\frac{\sqrt{n(n-1)}}{n^2}\sqrt{\sum_{k\neq\ell}(b_{k\ell}-\tau_n)^2}
=\frac{\sqrt{n(n-1)}}{n}\sqrt{V_{2,n}}\to 0.
\]
The second is $|\tau_n-\rho_n^2|=O(1/n)$.
The third uses $a_ka_\ell-\rho_n^2=\rho_n(d_k+d_\ell)+d_kd_\ell$, so
\[
\frac1{n^2}\sum_{k\neq\ell}|a_ka_\ell-\rho_n^2|
\le 2\rho_n\sqrt{V_{1,n}}+V_{1,n}\to 0
\]
by Cauchy--Schwarz applied to $\sum_k|d_k|\le\sqrt n\sqrt{nV_{1,n}}$.
Combining gives \eqref{eq:avg-cov-vanish}, hence $\Var(O_n\mid\pi_0=s)\to 0$,
and Chebyshev yields $O_n\toP\rho^2$.

\medskip
\noindent\emph{Step 2: $O_n^{(0)}$ concentrates.}
Note that $O_n^{(0)}$ is a function of a single draw $\tau\sim q_n^{(s)}$.
The conditional mean is
\[
\E[O_n^{(0)}\mid\pi_0=s]
=\frac1n\sum_{i\in A_n}a_i^{(s)}
=\frac1n\sum_{k\in s^{-1}(A_n)}a_k
=\rho_n^2+\frac1n\sum_{k\in s^{-1}(A_n)}d_k.
\]
By Cauchy--Schwarz applied to the index set $s^{-1}(A_n)$ of cardinality $n_1$,
\[
\Big|\frac1n\sum_{k\in s^{-1}(A_n)}d_k\Big|
\le\frac{\sqrt{n_1}}{n}\sqrt{nV_{1,n}}
=\sqrt{\rho_n}\sqrt{V_{1,n}}\to 0,
\]
with a bound independent of $s$.
For the variance, $O_n^{(0)}=\frac1n\sum_{i\in A_n}W_i$ involves a single draw, so
\[
\Var(O_n^{(0)}\mid\pi_0=s)
=\frac1{n^2}\sum_{i\in A_n}\Var(W_i\mid\pi_0=s)
+\frac1{n^2}\sum_{\substack{i,j\in A_n\\ i\neq j}}\Cov(W_i,W_j\mid\pi_0=s).
\]
The diagonal contribution is at most $n_1/(4n^2)\le 1/(4n)$.
Each covariance satisfies 
$|\Cov(W_i,W_j\mid\pi_0=s)|=|b_{ij}^{(s)}-a_i^{(s)}a_j^{(s)}|$,
and the off-diagonal sum is dominated by 
$\frac1{n^2}\sum_{i\neq j}|b_{ij}^{(s)}-a_i^{(s)}a_j^{(s)}|$,
which vanishes by \eqref{eq:avg-cov-vanish} with bounds independent of $s$.
\end{proof}

\begin{proof}[Proof of Theorem~\ref{thm:cond-normal}]
Conditional on $\pi_0$, draw two independent randomizations $\tau,\tau'\iid q_n^{(\pi_0)}$
and let $W:=W(\tau)$, $W':=W(\tau')$, $T_n:=T_n(X,W)$, $T_n':=T_n(X,W')$.
By construction, $(T_n,T_n')$ is conditionally i.i.d.\ given $(X,\pi_0)$.

\medskip
\noindent\emph{Step 1: signal--noise decomposition.}
Under Assumption~\ref{ass:pitman}, substituting $X_i=\mu+\frac{\delta}{\sqrt n}w_{0,i}+\varepsilon_i$
into~\eqref{eq:Tn} gives three contributions.
The location $\mu$ cancels since $\sum_i(W_i-\rho_n)=0$.
The signal contributes
\[
\frac{1}{\sigma_\varepsilon\sqrt{n\rho_n(1-\rho_n)}}\cdot\frac{\delta}{\sqrt n}
\sum_{i=1}^n(W_i-\rho_n)w_{0,i},
\]
and since $w_{0,i}=\ind\{i\in A_n\}$ and $\sum_{i\in A_n}W_i=nO_n^{(0)}$,
\[
\sum_{i=1}^n(W_i-\rho_n)w_{0,i}=\sum_{i\in A_n}W_i-\rho_n n_1=n(O_n^{(0)}-\rho_n^2).
\]
Dividing by the normalizer yields $T_n=N_n+L_n$ where
\[
N_n:=\frac{1}{\sigma_\varepsilon\sqrt{n\rho_n(1-\rho_n)}}\sum_{i=1}^n(W_i-\rho_n)\varepsilon_i,
\qquad
L_n:=\frac{\delta}{\sigma_\varepsilon\sqrt{\rho_n(1-\rho_n)}}(O_n^{(0)}-\rho_n^2).
\]
By Proposition~\ref{prop:overlap}, $O_n^{(0)}-\rho_n^2\toP 0$ with bounds independent of $\pi_0$,
so $L_n\toP 0$ and likewise $L_n'\toP 0$.

\medskip
\noindent\emph{Step 2: conditional bivariate CLT for $(N_n,N_n')$.}
Conditional on $(W,W',\pi_0)$, write
$(N_n,N_n')=\sum_{i=1}^n Y_{n,i}$ with
\[
Y_{n,i}:=\frac{\varepsilon_i}{\sigma_\varepsilon\sqrt{n\rho_n(1-\rho_n)}}(W_i-\rho_n,\;W_i'-\rho_n).
\]
These are independent across $i$.
Using $W_i\in\{0,1\}$ (so $W_i^2=W_i$) and $\sum_i W_i=n_1=n\rho_n$,
\[
\sum_i(W_i-\rho_n)^2=\sum_i W_i-2\rho_n\sum_i W_i+n\rho_n^2=n\rho_n(1-\rho_n),
\]
giving $\Var(N_n\mid W,W',\pi_0)=\Var(N_n'\mid W,W',\pi_0)=1$. Similarly,
\[
\Cov(N_n,N_n'\mid W,W',\pi_0)
=\frac{1}{n\rho_n(1-\rho_n)}\sum_{i=1}^n(W_i-\rho_n)(W_i'-\rho_n)
=\frac{O_n-\rho_n^2}{\rho_n(1-\rho_n)}=:\gamma_n,
\]
using $\sum_i W_iW_i'=nO_n$.
The conditional covariance matrix is 
$\Sigma_n=\bigl(\begin{smallmatrix}1&\gamma_n\\\gamma_n&1\end{smallmatrix}\bigr)$,
with $\gamma_n\toP 0$ by Proposition~\ref{prop:overlap}.

\emph{Lindeberg condition.}
Since $|W_i-\rho_n|\le 1$, we have 
$\|Y_{n,i}\|^2\le 2\varepsilon_i^2/[\sigma_\varepsilon^2 n\rho_n(1-\rho_n)]$,
so the event $\|Y_{n,i}\|>\eta$ forces 
$|\varepsilon_i|>c_\eta\sqrt n$ with 
$c_\eta:=\eta\sigma_\varepsilon\sqrt{\rho_n(1-\rho_n)/2}$ (bounded away from $0$).
Therefore
\[
\sum_{i=1}^n\E\big[\|Y_{n,i}\|^2\ind\{\|Y_{n,i}\|>\eta\}\mid W,W',\pi_0\big]
\le\frac{2\,\E[\varepsilon_1^2\ind\{|\varepsilon_1|>c_\eta\sqrt n\}]}{\sigma_\varepsilon^2\rho_n(1-\rho_n)}
\to 0,
\]
using $\E|\varepsilon_1|^{2+\kappa}<\infty$ 
(so $\E[\varepsilon_1^2\ind\{|\varepsilon_1|>c_\eta\sqrt n\}]
\le\E|\varepsilon_1|^{2+\kappa}/(c_\eta\sqrt n)^\kappa\to 0$).

The bivariate Lindeberg--Feller CLT therefore gives, for every bounded Lipschitz function $g:\mathbb R^2\to\mathbb R$,
\[
\E\{g(N_n,N_n')\mid W,W',\pi_0\}
-
\int g\,d\N(0,\Sigma_n)
\toP0.
\]
Since $\gamma_n\toP0$, the Gaussian laws $\N(0,\Sigma_n)$ converge weakly to the product law of two independent standard normals. Hence
\[
\int g\,d\N(0,\Sigma_n)\toP\E g(Z,Z'),
\]
where $Z,Z'$ are i.i.d.\ $\N(0,1)$.
Taking conditional expectations with respect to $\pi_0$ gives
\[
\E\{g(N_n,N_n')\mid \pi_0\}\toP\E g(Z,Z').
\]
Thus the conditional law of $(N_n,N_n')$ given $\pi_0$ converges weakly in probability to the product standard normal law.

\medskip
\noindent\emph{Step 3: conclusion.}
Combining Steps~1--2 with Slutsky's theorem gives 
$(T_n,T_n')\mid\pi_0\dto(Z,Z')$ with $Z,Z'$ i.i.d.\ $\N(0,1)$.
Applying Lemma~\ref{lem:two-rand} with $F=\Phi$ yields
\[
\sup_t|F_n(t\mid X,\pi_0)-\Phi(t)|\toP 0.
\]
Since $\Phi$ is continuous and strictly increasing, 
uniform convergence of cdfs implies pointwise convergence of the inverses:
$c_n(\alpha\mid X,\pi_0)=F_n^{-1}(1-\alpha\mid X,\pi_0)
\toP\Phi^{-1}(1-\alpha)=z_{1-\alpha}$.
\end{proof}

\section{Proof of Proposition~\ref{prop:obs}}\label{app:proof-obs}

\begin{proof}
Substitute $X_i=\mu+\frac{\delta}{\sqrt n}w_{0,i}+\varepsilon_i$ into $T_n^{\mathrm{obs}}=T_n(X,w_0)$.
Since $\sum_i(w_{0,i}-\rho_n)=0$, the constant $\mu$ cancels.

For the signal term, $w_{0,i}^2=w_{0,i}$ (since $w_{0,i}\in\{0,1\}$) and 
$\sum_i w_{0,i}=n_1=n\rho_n$ give
\[
\sum_{i=1}^n(w_{0,i}-\rho_n)w_{0,i}
=\sum_i w_{0,i}^2-\rho_n\sum_i w_{0,i}
=n_1-\rho_n n_1=n\rho_n(1-\rho_n),
\]
hence the signal term equals
\[
\frac{1}{\sigma_\varepsilon\sqrt{n\rho_n(1-\rho_n)}}\cdot\frac{\delta}{\sqrt n}\cdot n\rho_n(1-\rho_n)
=\frac{\delta\sqrt{\rho_n(1-\rho_n)}}{\sigma_\varepsilon}\to\Delta.
\]

For the noise term, set 
$c_{n,i}:=(w_{0,i}-\rho_n)/\sqrt{n\rho_n(1-\rho_n)}$,
so that $\sum_i c_{n,i}^2=1$ (by the same computation as above, 
applied to $\sum_i(w_{0,i}-\rho_n)^2$) 
and $\max_i|c_{n,i}|\le 1/\sqrt{n\rho_n(1-\rho_n)}=O(n^{-1/2})\to 0$.
Since $\varepsilon_i$ are i.i.d.\ with $\Var(\varepsilon_i)=\sigma_\varepsilon^2$ and 
$\E|\varepsilon_i|^{2+\kappa}<\infty$,
the Lindeberg--Feller CLT gives 
$\sum_i c_{n,i}\varepsilon_i/\sigma_\varepsilon\dto\N(0,1)$.
Combining the deterministic signal limit with Slutsky's theorem yields 
$T_n^{\mathrm{obs}}\dto\N(\Delta,1)$.
\end{proof}

\section{Proof of Theorem~\ref{thm:necessary}}\label{app:proof-necessary}

The key ingredient is the following non-self-averaging result.

\begin{proposition}[Non-self-averaging under Gaussian noise]\label{prop:nonself}
Under Assumptions~\ref{ass:rhon}, \ref{ass:pitman}, and~\ref{ass:gauss-noise},
recall from Appendix~\ref{app:proof-cond-normal} the decomposition 
$T_n=N_n+\lambda_n D_n$ (and likewise $T_n'=N_n'+\lambda_n D_n'$), where
\[
\gamma_n:=\frac{O_n-\rho_n^2}{\rho_n(1-\rho_n)},\quad
D_n:=O_n^{(0)}-\rho_n^2,\quad
D_n':=O_n^{(0)\prime}-\rho_n^2,\quad
\lambda_n:=\frac{\delta}{\sigma_\varepsilon\sqrt{\rho_n(1-\rho_n)}},
\]
and $O_n^{(0)\prime}:=\frac1n\sum_{i\in A_n}W_i(\tau')$ is the reference overlap 
for the second independent randomization $\tau'$.
Then
\[
\Var\bigl(\E[T_n^2\mid X,\pi_0]\bigr)
=2\,\E[\gamma_n^2]+4\lambda_n^2\,\E[\gamma_n D_nD_n']
+\lambda_n^4\,\Var\bigl(\E[D_n^2\mid\pi_0]\bigr),
\]
with $\E[\gamma_n D_nD_n']\ge 0$.
In particular, $\Var(\E[T_n^2\mid X,\pi_0])\ge 2\,\E[\gamma_n^2]$.
\end{proposition}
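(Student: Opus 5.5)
The plan is to compute both moments in the variance directly, using the conditional i.i.d.\ structure and Gaussian moment identities. Write $M_n:=\E[T_n^2\mid X,\pi_0]$. Since $(T_n,T_n')$ is i.i.d.\ conditionally on $(X,\pi_0)$, we have $M_n^2=\E[T_n^2T_n'^2\mid X,\pi_0]$. Therefore
\[
\Var(M_n)=\E[T_n^2T_n'^2]-\bigl(\E[T_n^2]\bigr)^2 .
\]
All moments are finite: under Assumption~\ref{ass:gauss-noise}, $(N_n,N_n')$ is Gaussian given $(W,W',\pi_0)$, and $|D_n|,|D_n'|\le 1$. So both terms can be evaluated by first conditioning on $(W,W',\pi_0)$ and then averaging. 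Throughout I use the Step~1 decomposition from Appendix~\ref{app:proof-cond-normal}, $T_n=N_n+\lambda_nD_n$. Here $D_n=O_n^{(0)}-\rho_n^2=\frac1n\sum_{j\in A_n}(W_j-\rho_n)$, using $|A_n|=n_1=n\rho_n$, and $D_n$ is a function of $W$ alone.

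\emph{Step 1 (Gaussian fourth moments).} Conditionally on $(W,W',\pi_0)$, the errors $\varepsilon$ are independent of the assignments. By Step~2 of that proof, $(N_n,N_n')\sim\N(0,\Sigma_n)$ exactly (not merely asymptotically), with unit variances and correlation $\gamma_n$. Put $a=\lambda_nD_n$ and $b=\lambda_nD_n'$, which are fixed given the assignments. Expanding $(N_n+a)^2(N_n'+b)^2$ and dropping the odd Gaussian moments leaves four contributions:
\begin{itemize}
\item $\E[N_n^2N_n'^2]=1+2\gamma_n^2$;
\item $a^2+b^2$, from the terms $a^2N_n'^2$ and $b^2N_n^2$;
\item $4ab\,\E[N_nN_n']=4ab\gamma_n$;
\item $a^2b^2$.
\end{itemize}
Taking expectations gives
\[
\E[T_n^2T_n'^2]=1+2\E[\gamma_n^2]+2\lambda_n^2\E[D_n^2]+4\lambda_n^2\E[\gamma_nD_nD_n']+\lambda_n^4\E[D_n^2D_n'^2],
\]
where I used that $D_n$ and $D_n'$ have the same law. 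Similarly, $\E[T_n^2]=1+\lambda_n^2\E[D_n^2]$.

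\emph{Step 2 (assemble the identity).} Subtract $(\E[T_n^2])^2$ from the expression in Step~1. The constant and $2\lambda_n^2\E[D_n^2]$ terms cancel, and what remains is
\[
2\E[\gamma_n^2]+4\lambda_n^2\E[\gamma_nD_nD_n']+\lambda_n^4\bigl(\E[D_n^2D_n'^2]-(\E D_n^2)^2\bigr).
\]
Because $W$ and $W'$ are i.i.d.\ given $\pi_0$, we get $\E[D_n^2D_n'^2]=\E\bigl[(\E[D_n^2\mid\pi_0])^2\bigr]$. Hence the last bracket equals $\Var(\E[D_n^2\mid\pi_0])$, which gives the claimed identity.

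\emph{Step 3 (sign of the cross term).} This is the only non-mechanical point. Write
\[
\gamma_n=\frac{1}{n\rho_n(1-\rho_n)}\sum_i(W_i-\rho_n)(W_i'-\rho_n),
\]
which is valid because $\sum_iW_i=\sum_iW_i'=n\rho_n$. Set $m_{ij}(\pi_0):=\E[(W_i-\rho_n)(W_j-\rho_n)\mid\pi_0]$. Conditional independence of $W$ and $W'$ given $\pi_0$ gives
\[
\E[\gamma_nD_nD_n'\mid\pi_0]=\frac{1}{n^3\rho_n(1-\rho_n)}\sum_{i=1}^n\Bigl(\sum_{j\in A_n}m_{ij}(\pi_0)\Bigr)^2\ \ge 0,
\]
and averaging over $\pi_0$ gives $\E[\gamma_nD_nD_n']\ge0$.

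Since the remaining term $\lambda_n^4\Var(\E[D_n^2\mid\pi_0])$ is a variance and hence nonnegative, the inequality $\Var(\E[T_n^2\mid X,\pi_0])\ge 2\E[\gamma_n^2]$ follows. I expect the main care to be needed in Step~3: factoring the cross term as a sum of squares relies on $W$ and $W'$ being independent only given $\pi_0$, so the conditioning must be done on $\pi_0$ before averaging.
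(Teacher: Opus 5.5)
Your proposal is correct and follows essentially the same route as the paper's proof. Both reduce to $\E[T_n^2T_n'^2]-(\E[T_n^2])^2$ via conditional i.i.d.-ness, apply Isserlis' theorem given $(W,W',\pi_0)$, identify the remainder as $\Var(\E[D_n^2\mid\pi_0])$, and show $\E[\gamma_nD_nD_n'\mid\pi_0]$ is a sum of squares. Your explicit check that all moments are finite (Gaussian conditional laws and $|D_n|\le 1$) is a small welcome addition that the paper leaves implicit.
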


\begin{proof}
Since $(T_n,T_n')$ is i.i.d.\ conditional on $(X,\pi_0)$,
\[
\Var\bigl(\E[T_n^2\mid X,\pi_0]\bigr)
=\E[T_n^2(T_n')^2]-\E[T_n^2]^2.
\]

\medskip
\noindent\emph{Step 1: fourth-moment identity.}
Under Gaussian noise, conditional on $(W,W',\pi_0)$, the pair 
$(T_n,T_n')=(N_n+\lambda_n D_n,\,N_n'+\lambda_n D_n')$ is bivariate normal 
with means $(\lambda_n D_n,\lambda_n D_n')$, unit marginal variances, 
and correlation $\gamma_n$ (from Appendix~\ref{app:proof-cond-normal}, Step~2).
Write $U:=T_n-\lambda_n D_n$ and $U':=T_n'-\lambda_n D_n'$, so that $(U,U')$
is centered bivariate normal with covariance 
$\bigl(\begin{smallmatrix}1&\gamma_n\\\gamma_n&1\end{smallmatrix}\bigr)$.
Isserlis' theorem gives $\E[U^2(U')^2\mid W,W',\pi_0]=1+2\gamma_n^2$.
Expanding $(U+\lambda_n D_n)^2(U'+\lambda_n D_n')^2$ and discarding all terms 
containing odd powers of $U$ or $U'$ (which vanish in expectation) yields
\[
\E[T_n^2(T_n')^2\mid W,W',\pi_0]
=1+2\gamma_n^2+\lambda_n^2\bigl(D_n^2+(D_n')^2\bigr)+4\lambda_n^2\gamma_n D_nD_n'+\lambda_n^4 D_n^2(D_n')^2.
\]

\medskip
\noindent\emph{Step 2: assembling the variance.}
Taking unconditional expectations and using that $D_n,D_n'$ are identically distributed,
\[
\E[T_n^2(T_n')^2]
=1+2\E[\gamma_n^2]+2\lambda_n^2\E[D_n^2]+4\lambda_n^2\E[\gamma_n D_nD_n']+\lambda_n^4\E[D_n^2(D_n')^2].
\]
On the other hand, $\E[T_n^2]=1+\lambda_n^2\E[D_n^2]$, so 
$\E[T_n^2]^2=1+2\lambda_n^2\E[D_n^2]+\lambda_n^4\E[D_n^2]^2$.
Subtracting,
\[
\Var\bigl(\E[T_n^2\mid X,\pi_0]\bigr)
=2\E[\gamma_n^2]+4\lambda_n^2\E[\gamma_n D_nD_n']+\lambda_n^4\bigl(\E[D_n^2(D_n')^2]-\E[D_n^2]^2\bigr).
\]
Since $D_n,D_n'$ are conditionally i.i.d.\ given $\pi_0$,
$\E[D_n^2(D_n')^2]=\E[\E[D_n^2\mid\pi_0]^2]$, 
hence the last term equals $\lambda_n^4\Var(\E[D_n^2\mid\pi_0])\ge 0$.

\medskip
\noindent\emph{Step 3: sign of $\E[\gamma_n D_nD_n']$.}
Write $u_i:=W_i-\rho_n$ and $u_i':=W_i'-\rho_n$.
Using $W_i\in\{0,1\}$ and $\sum_i W_i=n_1$, direct computation gives
\[
\gamma_n=\frac{1}{n\rho_n(1-\rho_n)}\sum_{i=1}^n u_iu_i',
\qquad
D_n=\frac{1}{n}\sum_{j\in A_n}u_j,
\qquad
D_n'=\frac{1}{n}\sum_{j\in A_n}u_j'.
\]
Since $W,W'$ are conditionally independent given $\pi_0$ 
and share the same conditional law $q_n^{(\pi_0)}$,
$\E[u_iu_i'u_ju_k'\mid\pi_0]=\E[u_iu_j\mid\pi_0]\cdot\E[u_iu_k\mid\pi_0]$.
Multiplying out $\gamma_n D_n D_n'$ and taking conditional expectations,
\[
\E[\gamma_n D_nD_n'\mid\pi_0]
=\frac{1}{n^3\rho_n(1-\rho_n)}\sum_{i=1}^n
\Big(\sum_{j\in A_n}\E[u_iu_j\mid\pi_0]\Big)^2\ge 0,
\]
and taking outer expectation over $\pi_0$ preserves the sign.

Combining Steps~1--3, the three terms in the stated identity are nonnegative
(the last two by Steps~2 and~3), yielding 
$\Var(\E[T_n^2\mid X,\pi_0])\ge 2\E[\gamma_n^2]$.
\end{proof}

\begin{proof}[Proof of Theorem~\ref{thm:necessary}]
\emph{Step 1: reduction to a subsequence with $\liminf\E[\gamma_n^2]>0$.}
Suppose $V_{1,n}+V_{2,n}\not\to 0$, so $\limsup_n(V_{1,n}+V_{2,n})>0$.
Then there exists a subsequence, still denoted by $n_k$, and a constant $c>0$ such that
\[
V_{1,n_k}+V_{2,n_k}\ge c
\qquad\text{for all }k.
\]
Passing to a further subsequence if necessary, one of the following two cases holds:
either $\liminf_k V_{1,n_k}>0$, or $V_{1,n_k}\to0$.
In the first case, Lemma~\ref{lem:overlap-V} gives
\[
\E[(O_{n_k}-\rho_{n_k}^2)^2]\ge V_{1,n_k}^2\not\to0.
\]
In the second case, the lower bound $V_{1,n_k}+V_{2,n_k}\ge c$ implies
$\liminf_k V_{2,n_k}\ge c$, and Lemma~\ref{lem:overlap-V} gives
\[
\E[(O_{n_k}-\rho_{n_k}^2)^2]
=
V_{2,n_k}+o(1)
\not\to0.
\]
In both cases, $\liminf_k\E[(O_{n_k}-\rho_{n_k}^2)^2]>0$.

Since $\rho_n\to\rho\in(0,1)$, the denominator $\rho_n^2(1-\rho_n)^2$ 
converges to $\rho^2(1-\rho)^2>0$, so
\[
\liminf_k\E[\gamma_{n_k}^2]
=\liminf_k\frac{\E[(O_{n_k}-\rho_{n_k}^2)^2]}{\rho_{n_k}^2(1-\rho_{n_k})^2}>0.
\]

\medskip
\noindent\emph{Step 2: non-self-averaging.}
By Proposition~\ref{prop:nonself}, 
$\Var(\E[T_n^2\mid X,\pi_0])\ge 2\,\E[\gamma_n^2]$, 
so $\liminf_k\Var(\E[T_{n_k}^2\mid X,\pi_0])>0$ along the same subsequence.

\medskip
\noindent\emph{Step 3: no deterministic cdf limit.}
We show, by contradiction, that $F_n(\cdot\mid X,\pi_0)$ cannot converge in probability
to any deterministic continuous cdf.
Suppose $F_n(\cdot\mid X,\pi_0)\toP F$ for some deterministic continuous cdf $F$
with $m_2:=\int t^2\,dF(t)<\infty$.
Under Assumption~\ref{ass:gauss-noise}, conditional on $(W,W',\pi_0)$ the statistic $T_n$
is Gaussian with bounded variance (Appendix~\ref{app:proof-cond-normal}, Step~2),
so $\{T_n^2\}$ is uniformly integrable.
We justify the moment implication for these random conditional laws.
Let $H_n(\cdot):=F_n(\cdot\mid X,\pi_0)$.
For each fixed $K<\infty$, the function $t\mapsto t^2\wedge K$ is bounded and continuous, so
\[
\int (t^2\wedge K)\,dH_n(t)
\toP
\int (t^2\wedge K)\,dF(t).
\]
It remains to control the tails. Since, conditional on $(W,\pi_0)$, $T_n$ is Gaussian with unit variance and uniformly bounded mean, the sequence $\{T_n^2\}$ is uniformly integrable. Hence
\[
\E\left[
\int t^2\ind\{|t|>K\}\,dH_n(t)
\right]
=
\E\left[T_n^2\ind\{|T_n|>K\}\right]
\to 0
\]
as $K\to\infty$, uniformly in $n$.
By Markov's inequality, the same tail term is $o_P(1)$ after choosing $K$ large.
Letting $K\to\infty$ gives
\[
\E[T_n^2\mid X,\pi_0]=
\int t^2\,dH_n(t)
\toP
\int t^2\,dF(t)=m_2.
\]
Therefore $\Var(\E[T_n^2\mid X,\pi_0])\to0$.
This contradicts Step~2.

\medskip
\noindent\emph{Step 4: a non-stabilizing critical value.}
Suppose, for contradiction, that $c_n(\alpha\mid X,\pi_0)\toP z_{1-\alpha}$ 
for every $\alpha\in(0,1)$.
Indeed, fix $\varepsilon>0$ and choose grid points
$0<u_1<\cdots<u_J<1$ such that $u_{j+1}-u_j<\varepsilon$,
$u_1<\varepsilon$, and $1-u_J<\varepsilon$.
By the assumed quantile convergence,
\[
F_n^{-1}(u_j\mid X,\pi_0)\toP \Phi^{-1}(u_j),
\qquad j=1,\ldots,J.
\]
On the event that these finitely many grid quantiles are close to their limits,
monotonicity of $F_n$ and uniform continuity of $\Phi$ imply
\[
\sup_{t\in\mathbb R}|F_n(t\mid X,\pi_0)-\Phi(t)|
\le C\varepsilon+o(1).
\]
Since $\varepsilon$ is arbitrary, the uniform convergence follows.
This contradicts Step~3, so there exists $\alpha\in(0,1)$ for which
$c_n(\alpha\mid X,\pi_0)\not\toP z_{1-\alpha}$.
\end{proof}


\section{Proof of Theorem~\ref{thm:AL}}\label{app:proof-AL}

\begin{proof}
Let
\[
G_n(t\mid X,\pi_0):=\mP(S_n(X,W(\tau))\le t\mid X,\pi_0),
\qquad
G_n^L(t\mid X,\pi_0):=\mP(L_n(W(\tau))\le t\mid X,\pi_0)
\]
denote the conditional permutation cdfs of $S_n$ and its ideal linear approximation.
Let $d_n(\alpha\mid X,\pi_0):=G_n^{-1}(1-\alpha\mid X,\pi_0)$.

\medskip
\noindent\emph{Step 1: Theorems~\ref{thm:cond-normal}--\ref{thm:power} apply to $L_n$.}
By Assumption~\ref{ass:AL}, the variables
$\mu_L+\beta w_{0,i}/\sqrt n+\xi_i$ satisfy the same Pitman local structure as in
Assumption~\ref{ass:pitman}, with
$(\mu,\delta,\varepsilon_i,\sigma_\varepsilon)$ replaced by
$(\mu_L,\beta,\xi_i,\sigma_L)$.
Assumptions~\ref{ass:rhon} and~\ref{ass:V} are unchanged.
Therefore Theorem~\ref{thm:cond-normal} yields
\[
\sup_t |G_n^L(t\mid X,\pi_0)-\Phi(t)|\toP0,
\]
and Proposition~\ref{prop:obs} gives
$L_n(w_0)\dto\N(\Delta_L,1)$.

\medskip
\noindent\emph{Step 2: transfer of the conditional permutation cdf.}
For fixed $\eta>0$, define the conditional approximation error
\[
A_n(\eta\mid X,\pi_0)
:=\mP_{\tau\sim q_n^{(\pi_0)}}\Bigl(
\bigl|S_n\{X,W(\tau)\}-L_n\{W(\tau)\}\bigr|>\eta
\ \Bigm|\ X,\pi_0\Bigr).
\]
By Assumption~\ref{ass:AL}, $A_n(\eta\mid X,\pi_0)\toP0$ for every $\eta>0$.
On the event $|S_n\{X,W(\tau)\}-L_n\{W(\tau)\}|\le\eta$,
\[
\{L_n(W(\tau))\le t-\eta\}\subseteq\{S_n(X,W(\tau))\le t\}
\subseteq\{L_n(W(\tau))\le t+\eta\}.
\]
Taking conditional probabilities gives, for all $t$,
\[
G_n^L(t-\eta\mid X,\pi_0)-A_n(\eta\mid X,\pi_0)
\le
G_n(t\mid X,\pi_0)
\le
G_n^L(t+\eta\mid X,\pi_0)+A_n(\eta\mid X,\pi_0).
\]
Hence
\[
\sup_t |G_n(t\mid X,\pi_0)-\Phi(t)|
\le
\sup_u|G_n^L(u\mid X,\pi_0)-\Phi(u)|
+\sup_t\{\Phi(t+\eta)-\Phi(t-\eta)\}
+A_n(\eta\mid X,\pi_0).
\]
The first and third terms converge to zero in probability for each fixed $\eta>0$, while
$\sup_t\{\Phi(t+\eta)-\Phi(t-\eta)\}\to0$ as $\eta\downarrow0$.
Therefore $\sup_t|G_n(t\mid X,\pi_0)-\Phi(t)|\toP0$.
The usual quantile-continuity argument then implies
$d_n(\alpha\mid X,\pi_0)\toP z_{1-\alpha}$.

\medskip
\noindent\emph{Step 3: limiting observed statistic and power.}
Assumption~\ref{ass:AL} also gives $S_n(X,w_0)-L_n(w_0)\toP0$.
Since $L_n(w_0)\dto\N(\Delta_L,1)$ by Step~1, Slutsky's theorem yields
$S_n^{\mathrm{obs}}\dto\N(\Delta_L,1)$.
Combining this with the critical-value convergence from Step~2, as in the proof of
Theorem~\ref{thm:power}, gives
\[
\mP\bigl(S_n^{\mathrm{obs}}>d_n(\alpha\mid X,\pi_0)\bigr)
\to 1-\Phi(z_{1-\alpha}-\Delta_L). \qedhere
\]
\end{proof}

\section{Proof of Proposition~\ref{prop:ex-cyclic}}\label{app:proof-cyclic}

\begin{proof}
Let $c:[n]\to[n]$ be the cyclic shift $c(i)=(i\bmod n)+1$, 
so that $\pi=c^J$ with $J\sim\mathrm{Uniform}\{0,\dots,n-1\}$ gives
$\pi(k)=((k-1+J)\bmod n)+1$.
The event $\{\pi(k)\in A_n\}=\{\pi(k)\le n_1\}$ is equivalent to $U\le n_1-1$,
where $U:=(k-1+J)\bmod n$.

\medskip
\noindent\emph{Step 1: $V_{1,n}=0$.}
For each fixed $k$, $J\mapsto(k-1+J)\bmod n$ is a bijection on $\{0,\dots,n-1\}$,
so $U\sim\mathrm{Uniform}\{0,\dots,n-1\}$ regardless of $k$.
Hence $a_k=\mP(U\le n_1-1)=n_1/n=\rho_n$ for all $k$, giving $V_{1,n}=0$.

\medskip
\noindent\emph{Step 2: $b_{k\ell}$ depends only on the cyclic distance.}
For $k\neq\ell$, set $d:=(\ell-k)\bmod n\in\{1,\dots,n-1\}$.
Since $\pi(\ell)=((\ell-1+J)\bmod n)+1=((U+d)\bmod n)+1$, the joint event is
\[
\{\pi(k)\in A_n,\,\pi(\ell)\in A_n\}
=\{U\le n_1-1\}\cap\{(U+d)\bmod n\le n_1-1\},
\]
whose probability depends on $(k,\ell)$ only through $d$.
Define
$\beta(d):=\frac1n|\{u\in\{0,\dots,n_1-1\}:(u+d)\bmod n\le n_1-1\}|$,
so that $b_{k\ell}=\beta(d)$.

\medskip
\noindent\emph{Step 3: closed form for $\beta(d)$.}
Count $u\in\{0,\dots,n_1-1\}$ with $(u+d)\bmod n\le n_1-1$ by two cases.
\emph{Without wraparound} ($u+d<n$): the condition reads $u+d\le n_1-1$, i.e., 
$u\le n_1-d-1$. Combined with $u\ge 0$, this gives $u\in\{0,\dots,n_1-d-1\}$, 
contributing $(n_1-d)^+$ values.
\emph{With wraparound} ($u+d\ge n$): the condition reads $u+d-n\le n_1-1$, i.e., $u\le n_1-1-d+n$.
Combined with $u\ge n-d$ and $u\le n_1-1$, this gives $u\in\{n-d,\dots,n_1-1\}$, 
which is nonempty only when $n-d\le n_1-1$ (i.e., $d\ge n_0+1$), 
contributing $(d-n_0)^+$ values.
Therefore
\[
\beta(d)=\frac{(n_1-d)^++(d-n_0)^+}{n}.
\]

\medskip
\noindent\emph{Step 4: symmetry $V_{2,n}(n_1,n_0)=V_{2,n}(n_0,n_1)$.}
We show that the centered quantity $b_{k\ell}-\tau_n$ is invariant under $n_1\leftrightarrow n_0$.
By inclusion--exclusion,
\[
b_{k\ell}^{(n_0,n_1)}
=\mP(\pi(k)\notin A_n,\,\pi(\ell)\notin A_n)
=1-2\rho_n+b_{k\ell}^{(n_1,n_0)}.
\]
On the other hand, a direct computation gives
$\tau_n^{(n_0,n_1)}-\tau_n^{(n_1,n_0)}
=\frac{n_0(n_0-1)-n_1(n_1-1)}{n(n-1)}=\frac{n_0-n_1}{n}$.
Subtracting,
\[
\bigl(b_{k\ell}^{(n_0,n_1)}-\tau_n^{(n_0,n_1)}\bigr)
-\bigl(b_{k\ell}^{(n_1,n_0)}-\tau_n^{(n_1,n_0)}\bigr)
=(1-2\rho_n)-\frac{n_0-n_1}{n}=0,
\]
so the centered quantities agree and $V_{2,n}(n_0,n_1)=V_{2,n}(n_1,n_0)$.
It therefore suffices to compute $V_{2,n}$ when $n_*:=\min(n_1,n_0)=n_1\le n_0=:n^*$.

\medskip
\noindent\emph{Step 5: closed form for $V_{2,n}$.}
Each cyclic distance $d\in\{1,\dots,n-1\}$ corresponds to exactly $n$ ordered pairs 
$(k,\ell)$ with $(\ell-k)\bmod n=d$, so
\[
V_{2,n}=\frac1n\sum_{d=1}^{n-1}\bigl(\beta(d)-\tau_n\bigr)^2.
\]
When $n_1\le n_0$, Step~3 gives the three-region formula
\[
\beta(d)=\begin{cases}
(n_1-d)/n & 1\le d\le n_1-1,\\
0 & n_1\le d\le n_0,\\
(d-n_0)/n & n_0+1\le d\le n-1.
\end{cases}
\]
The first and third regions are mirror images: substituting $j=n_1-d$ in the first 
and $j=d-n_0$ in the third both give $j\in\{1,\dots,n_1-1\}$ with $\beta=j/n$,
so their contributions sum to $\frac2n\sum_{j=1}^{n_1-1}(j/n-\tau_n)^2$.
The middle region contributes $n_0-n_1+1$ terms with $\beta=0$, giving 
$\frac{n_0-n_1+1}{n}\tau_n^2$.  Therefore
\[
V_{2,n}=\frac{2}{n}\sum_{j=1}^{n_1-1}\Big(\frac{j}{n}-\tau_n\Big)^2
+\frac{n_0-n_1+1}{n}\tau_n^2.
\]
Expanding the square,
\[
\sum_{j=1}^{n_1-1}\Big(\frac{j}{n}-\tau_n\Big)^2
=\frac{1}{n^2}\sum_{j=1}^{n_1-1}j^2
-\frac{2\tau_n}{n}\sum_{j=1}^{n_1-1}j
+(n_1-1)\tau_n^2,
\]
and applying $\sum_{j=1}^{m}j=m(m+1)/2$, $\sum_{j=1}^{m}j^2=m(m+1)(2m+1)/6$ with $m=n_1-1$,
then substituting $\tau_n=n_1(n_1-1)/[n(n-1)]$ and combining over the common denominator 
$3n^3(n-1)$ yields
\[
V_{2,n}=\frac{n_1(n_1-1)\bigl(2n_1n_0-n_1^2-n_0+1\bigr)}{3n^3(n-1)}.
\]
By Step~4, the formula extends to $n_1>n_0$ by replacing $(n_1,n_0)$ with $(n_*,n^*)$.

\medskip
\noindent\emph{Step 6: asymptotic limit.}
With $n_*/n\to\rho_*$ and $n^*/n\to 1-\rho_*$, the leading-order behavior is
\[
V_{2,n}
\sim\frac{n_*^2\cdot(2n_*n^*-n_*^2)}{3n^4}
=\frac{(n_*/n)^2\bigl[2(n_*/n)(n^*/n)-(n_*/n)^2\bigr]}{3}
\to\frac{\rho_*^2\bigl[2\rho_*(1-\rho_*)-\rho_*^2\bigr]}{3}
=\rho_*^{\,3}\Big(\frac23-\rho_*\Big),
\]
which is strictly positive since $\rho_*\le 1/2<2/3$.
\end{proof}

\section{Proof of Proposition~\ref{prop:ex-block}}\label{app:proof-block}

\begin{proof}
Recall the block setup: $p:=n_1/2$ treated indices in each of $B_1,B_2\subseteq A_n$, 
$q:=n_0/2$ control indices in each of $B_3,B_4\subseteq A_n^c$, 
and $N:=n/2$ indices in each cross-arm pair 
$\mathcal P_1=B_1\cup B_3,\mathcal P_2=B_2\cup B_4$.
The randomization $\pi$ applies independent uniform permutations within $\mathcal P_1$ and $\mathcal P_2$.

\medskip
\noindent\emph{Step 1: $V_{1,n}=0$.}
For any $k\in\mathcal P_j$, $\pi(k)$ is uniform on $\mathcal P_j$, 
which contains $p$ treated indices.
Hence $a_k=|A_n\cap\mathcal P_j|/|\mathcal P_j|=p/N=\rho_n$ for every $k$, giving $V_{1,n}=0$.

\medskip
\noindent\emph{Step 2: two-point block probabilities.}
If $k,\ell$ lie in the same pair $\mathcal P_j$, the ordered pair $(\pi(k),\pi(\ell))$
is a uniform sample of size two without replacement from $\mathcal P_j$, giving
\[
b^{\mathrm{same}}
=\frac{|A_n\cap\mathcal P_j|\cdot(|A_n\cap\mathcal P_j|-1)}{N(N-1)}
=\frac{p(p-1)}{N(N-1)}.
\]
If $k\in\mathcal P_1$ and $\ell\in\mathcal P_2$, the permutations within the two pairs
are independent by construction, so $\pi(k)$ and $\pi(\ell)$ are independent, 
each uniform on its pair. Hence
\[
b^{\mathrm{diff}}=\mP(\pi(k)\in A_n)\cdot\mP(\pi(\ell)\in A_n)=\frac{p^2}{N^2}.
\]

\medskip
\noindent\emph{Step 3: deviations from $\tau_n$.}
Using $n_1=2p$ and $n=2N$, 
$\tau_n=n_1(n_1-1)/[n(n-1)]=p(2p-1)/[N(2N-1)]$.
Direct algebra over the common denominator $N(N-1)(2N-1)$ gives
\[
b^{\mathrm{same}}-\tau_n
=\frac{p(p-1)(2N-1)-p(2p-1)(N-1)}{N(N-1)(2N-1)}
=\frac{-pq}{N(N-1)(2N-1)},
\]
where the last equality uses $(p-1)(2N-1)-(2p-1)(N-1)=p-N=-q$.
Similarly,
\[
b^{\mathrm{diff}}-\tau_n
=\frac{p^2(2N-1)-p(2p-1)N}{N^2(2N-1)}
=\frac{pq}{N^2(2N-1)}.
\]
Both are $O(1/n)$.

\medskip
\noindent\emph{Step 4: counting pairs and assembling $V_{2,n}$.}
There are $C^{\mathrm{same}}=2N(N-1)$ ordered same-pair pairs (two pairs, each with $N(N-1)$)
and $C^{\mathrm{diff}}=2N^2$ ordered cross-pair pairs.
Since $b_{k\ell}-\tau_n$ takes only these two values,
\[
V_{2,n}
=\frac{C^{\mathrm{same}}}{n^2}(b^{\mathrm{same}}-\tau_n)^2
+\frac{C^{\mathrm{diff}}}{n^2}(b^{\mathrm{diff}}-\tau_n)^2.
\]
Substituting from Steps~2--3:
\begin{align*}
V_{2,n}
&=\frac{2N(N-1)}{n^2}\cdot\frac{p^2q^2}{N^2(N-1)^2(2N-1)^2}
+\frac{2N^2}{n^2}\cdot\frac{p^2q^2}{N^4(2N-1)^2}\\
&=\frac{2p^2q^2}{n^2 N(N-1)(2N-1)^2}+\frac{2p^2q^2}{n^2 N^2(2N-1)^2}
=\frac{2p^2q^2}{n^2 N^2(N-1)(2N-1)^2}\bigl[N+(N-1)\bigr]\\
&=\frac{2p^2q^2}{n^2 N^2(N-1)(2N-1)}.
\end{align*}
Using $n=2N$ so $n^2=4N^2$, this simplifies to
\[
V_{2,n}=\frac{p^2q^2}{2N^4(N-1)(2N-1)}.
\]
Substituting $p=n_1/2$, $q=n_0/2$, $N=n/2$ (so $N-1=(n-2)/2$ and $2N-1=n-1$) gives
\[
V_{2,n}=\frac{n_1^2 n_0^2}{n^4(n-2)(n-1)}=O(n^{-2})\to 0. \qedhere
\]
\end{proof}

\section{Proofs for Section~\ref{sec:nonuniform}}\label{app:proof-nonuniform}

\subsection{Proof of Proposition~\ref{prop:var-full}}\label{app:proof-var-full}

\begin{proof}
\emph{Step 1: Permutation variance formula.}
With $n_1=n_0=n/2$, we have 
$\bar X_{\mathrm{trt}}=\frac{2}{n}\sum_i W_iX_i$ and 
$\bar X_{\mathrm{ctrl}}=\frac{2}{n}\sum_i(1-W_i)X_i$, so
\[
T=\frac{2}{n}\sum_i(2W_i-1)X_i
=\frac{4}{n}\sum_{i=1}^n\Big(W_i-\tfrac12\Big)X_i.
\]
Under Strategy~1, $W$ is a simple random sample indicator of size $n/2$, giving
$\Var(W_i\mid X)=1/4$ and $\Cov(W_i,W_j\mid X)=-1/[4(n-1)]$ for $i\neq j$.
A standard computation then yields
\[
\Var\Big(\sum_i(W_i-\tfrac12)X_i\,\Big|\,X\Big)
=\frac{n}{4(n-1)}\sum_i(X_i-\bar X)^2,
\]
and multiplying by $(4/n)^2$ gives
\[
\Var_{\mathrm{full}}(T\mid X)=\frac{4}{n(n-1)}\sum_{i=1}^n(X_i-\bar X)^2.
\]

\medskip
\noindent\emph{Step 2: Limit of $\frac1n\sum_i(X_i-\bar X)^2$ under $H_1$.}
Write $X_i=m_i+s_i+\varepsilon_i$ where $m_i:=\mu_{\mathrm{s}(i)}$ (stratum mean)
and $s_i:=(\delta/\sqrt n)w_{0,i}$ (local signal).
Since $\bar m=0$ (balanced strata) and $\bar s=\delta/(2\sqrt n)$,
$X_i-\bar X=m_i+(s_i-\bar s)+(\varepsilon_i-\bar\varepsilon)$.
Expanding $\frac1n\sum_i(X_i-\bar X)^2$:
\begin{align*}
\frac1n\sum_i(X_i-\bar X)^2
&=\underbrace{\frac1n\sum_i m_i^2}_{=\,\mu^2}
+\underbrace{\frac1n\sum_i(s_i-\bar s)^2}_{=\,\delta^2/(4n)}
+\underbrace{\frac1n\sum_i(\varepsilon_i-\bar\varepsilon)^2}_{\toP\,1}\\
&\quad+\underbrace{\frac2n\sum_i m_i(s_i-\bar s)}_{=\,O(1/\sqrt n)}
+\underbrace{\frac2n\sum_i m_i(\varepsilon_i-\bar\varepsilon)}_{\toP\,0}
+\underbrace{\frac2n\sum_i(s_i-\bar s)(\varepsilon_i-\bar\varepsilon)}_{=\,O_P(1/n)}.
\end{align*}
\emph{Dominant terms.}
$\frac1n\sum_i m_i^2=\frac{n/2}{n}\mu^2+\frac{n/2}{n}\mu^2=\mu^2$;
$\sum_i(s_i-\bar s)^2=(\delta^2/n)\sum_i(w_{0,i}-1/2)^2=(\delta^2/n)\cdot(n/4)=\delta^2/4$, 
hence $\frac1n\sum_i(s_i-\bar s)^2=\delta^2/(4n)\to 0$;
the noise variance converges to $\sigma_\varepsilon^2=1$ by the law of large numbers.

\emph{Cross terms.}
For the stratum--signal cross, Cauchy--Schwarz gives
$\big|\frac2n\sum_i m_i(s_i-\bar s)\big|\le\frac2n\sqrt{n\mu^2}\cdot\sqrt{\delta^2/4}=\mu\delta/\sqrt n\to 0$.
The stratum--noise cross $\frac2n\sum_i m_i\varepsilon_i$ is a mean-zero sum with variance
$(4\mu^2/n^2)\cdot n=4\mu^2/n\to 0$ (using $\sum_i m_i^2=n\mu^2$).
The signal--noise cross equals $\frac{2\delta}{n\sqrt n}\sum_i(w_{0,i}-1/2)\varepsilon_i$, 
a mean-zero sum with variance $(4\delta^2/n^3)\cdot(n/4)=\delta^2/n^2$, hence $O_P(1/n)$.

Combining, $\frac1n\sum_i(X_i-\bar X)^2\toP 1+\mu^2$, hence
$n\cdot\Var_{\mathrm{full}}(T\mid X)=\frac{4n}{n-1}\cdot\frac1n\sum_i(X_i-\bar X)^2\toP 4(1+\mu^2)$.
\end{proof}

\subsection{Proof of Proposition~\ref{prop:var-strat}}\label{app:proof-var-strat}

\begin{proof}
\emph{Step 1: decomposition $T=\frac12(T_1+T_2)$.}
With balanced design $n_{1,s}=n_{0,s}=n/4$ per stratum, the treated mean decomposes as
\[
\bar X_{\mathrm{trt}}
=\frac{1}{n/2}\sum_{s=1}^2\sum_{i\in\mathcal S_s:W_i=1}\!\!X_i
=\frac{1}{2}(\bar X_{\mathrm{trt},1}+\bar X_{\mathrm{trt},2}),
\]
and similarly for $\bar X_{\mathrm{ctrl}}$, giving 
$T=\frac12(T_1+T_2)$ with $T_s:=\bar X_{\mathrm{trt},s}-\bar X_{\mathrm{ctrl},s}$.
Since Strategy~2 permutes within each stratum independently, 
$T_1$ and $T_2$ are conditionally independent given $X$.

\medskip
\noindent\emph{Step 2: within-stratum variance.}
Applying Proposition~\ref{prop:var-full} to Stratum~$s$
(with $n\to N_s=n/2$ and $n_1\to n/4$) gives
\[
\Var(T_s\mid X)=\frac{4Q_s}{(n/2)(n/2-1)},
\qquad 
Q_s:=\sum_{i\in\mathcal S_s}(X_i-\bar X_s)^2.
\]
Combining via conditional independence,
\[
\Var_{\mathrm{strat}}(T\mid X)
=\frac14\bigl[\Var(T_1\mid X)+\Var(T_2\mid X)\bigr]
=\frac{Q_1+Q_2}{(n/2)(n/2-1)}.
\]

\medskip
\noindent\emph{Step 3: limit of $Q_s/(n/2-1)$ under $H_1$.}
Within Stratum~$s$, $X_i=\mu_s+(\delta/\sqrt n)w_{0,i}+\varepsilon_i$ and
$\bar X_s=\mu_s+\delta/(2\sqrt n)+\bar\varepsilon_s$, 
so the stratum effect cancels:
\[
X_i-\bar X_s=\frac{\delta}{\sqrt n}\Big(w_{0,i}-\tfrac12\Big)+(\varepsilon_i-\bar\varepsilon_s).
\]
Expanding $Q_s=\sum_{i\in\mathcal S_s}(X_i-\bar X_s)^2$ produces three terms.
The noise term $\sum_{i\in\mathcal S_s}(\varepsilon_i-\bar\varepsilon_s)^2/(N-1)\toP 1$ 
by the law of large numbers (with $N=n/2$).
The signal-squared term equals 
$\frac{\delta^2}{n}\sum_{i\in\mathcal S_s}(w_{0,i}-1/2)^2=\frac{\delta^2}{n}\cdot\frac{N}{4}=\frac{\delta^2}{8}$, 
whose contribution to $Q_s/(N-1)$ is $\delta^2/[8(N-1)]\to 0$.
For the cross term, $\sum_{i\in\mathcal S_s}(w_{0,i}-1/2)=0$
(exactly $N/2$ treated in each stratum), so
$\sum_{i\in\mathcal S_s}(w_{0,i}-1/2)(\varepsilon_i-\bar\varepsilon_s)
=\sum_{i\in\mathcal S_s}(w_{0,i}-1/2)\varepsilon_i$, 
a mean-zero sum of independent $\varepsilon_i$ with variance $N/4=O(n)$, 
hence $O_P(\sqrt n)$ by Chebyshev.
Dividing by $\sqrt n(N-1)$ gives $O_P(1/n)\to 0$.
Thus $Q_s/(n/2-1)\toP 1$ and 
$n\cdot\Var_{\mathrm{strat}}(T\mid X)=\frac{n(Q_1+Q_2)}{(n/2)(n/2-1)}\toP 4$.
\end{proof}

\subsection{Proof of Proposition~\ref{prop:power-compare}}\label{app:proof-power-compare}

The proof requires the following intermediate result.

\begin{lemma}[Permutation CLT for each strategy under $H_1$]\label{lem:perm-CLT-strat}
Under $H_1$, write $\hat\sigma_{\mathrm{full}}^2:=\Var_{\mathrm{full}}(T\mid X)$ 
and $\hat\sigma_{\mathrm{strat}}^2:=\Var_{\mathrm{strat}}(T\mid X)$.
Then both strategies produce asymptotically Gaussian permutation distributions:
\begin{enumerate}[label=(\roman*)]
\item $\sup_t|\mP(T/\hat\sigma_{\mathrm{full}}\le t\mid X)-\Phi(t)|\toP 0$.
\item $\sup_t|\mP(T/\hat\sigma_{\mathrm{strat}}\le t\mid X)-\Phi(t)|\toP 0$.
\end{enumerate}
Consequently,
$c_{\mathrm{full}}(\alpha\mid X)=z_{1-\alpha}\,\hat\sigma_{\mathrm{full}}(1+o_P(1))$
and $c_{\mathrm{strat}}(\alpha\mid X)=z_{1-\alpha}\,\hat\sigma_{\mathrm{strat}}(1+o_P(1))$.
\end{lemma}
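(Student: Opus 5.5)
The plan is to prove both parts with a conditional H\'ajek--Wald--Wolfowitz permutation CLT for simple random sampling without replacement, and then convert uniform cdf convergence into the quantile statement.

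\emph{Part (i).} Write $T=\frac4n\sum_i (W_i-\tfrac12)X_i=\frac4n\sum_i (W_i-\tfrac12)(X_i-\bar X)$. Under Strategy~1, $W$ is the indicator of a uniformly drawn subset of size $n/2$. H\'ajek's theorem for linear sampling statistics states that $T/\hat\sigma_{\mathrm{full}}$ is conditionally asymptotically $\N(0,1)$ under a Noether-type condition on the centered data:
\[
\frac{\max_i (X_i-\bar X)^2}{\sum_i (X_i-\bar X)^2}\toP 0,
\]
given that the sampling fraction is $1/2$. The denominator equals $n(1+\mu^2+o_P(1))$ by Proposition~\ref{prop:var-full}. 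For the numerator, the Gaussian noise gives $\max_i|\varepsilon_i|=O_P(\sqrt{\log n})$, while $|\mu_{\mathrm{s}(i)}|+\delta/\sqrt n$ is bounded. The ratio is therefore $O_P(\log n/n)$.

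Because the condition holds only in probability, I would apply the theorem along subsequences: every subsequence has a further subsequence on which the condition holds almost surely. This gives pointwise conditional convergence in probability. Since $\Phi$ is continuous, pointwise convergence upgrades to uniform convergence by the same monotonicity and grid argument as Stage~2 of Lemma~\ref{lem:two-rand}.

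\emph{Part (ii).} By Step~1 of the proof of Proposition~\ref{prop:var-strat}, $T=\frac12(T_1+T_2)$, where $T_1$ and $T_2$ are conditionally independent given $X$. Each $T_s$ is a simple-random-sampling linear statistic within its stratum, centered at $\bar X_s$. The per-stratum Noether condition holds for the same reason as in (i): within a stratum the centered data are $\frac{\delta}{\sqrt n}(w_{0,i}-\tfrac12)+\varepsilon_i-\bar\varepsilon_s$, and $Q_s\asymp n$ by Proposition~\ref{prop:var-strat}. So each $T_s/\sqrt{\Var(T_s\mid X)}$ is conditionally asymptotically standard normal. The two stratum variances are asymptotically equal, each being $\sim 16/n$.

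The sum of two independent conditionally Gaussian limits is Gaussian. I would make this precise with conditional characteristic functions: the conditional cf of $T/\hat\sigma_{\mathrm{strat}}$ factorizes into the two stratum cfs, each evaluated at a rescaled argument with weights $\omega_s=\Var(T_s\mid X)/(4\hat\sigma^2_{\mathrm{strat}})$ satisfying $\omega_1+\omega_2=1$. Each factor converges to the corresponding Gaussian cf, so the product tends to $e^{-t^2/2}$. Alternatively, one can check the Lindeberg condition directly for the combined statistic, using the combinatorial CLT for products of symmetric groups. Uniform convergence then follows as in (i).

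\emph{Consequence for the critical values.} Uniform convergence to the strictly increasing continuous $\Phi$ gives that the $(1-\alpha)$-quantile of $T/\hat\sigma$ tends in probability to $z_{1-\alpha}$. Hence $c(\alpha\mid X)=\hat\sigma\,(z_{1-\alpha}+o_P(1))$, which is $z_{1-\alpha}\hat\sigma(1+o_P(1))$ because $z_{1-\alpha}>0$ for $\alpha<1/2$.

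The main obstacle is verifying the Noether/Lindeberg condition for the H\'ajek CLT in the full-permutation case. The data there are a mixture with non-vanishing stratum shifts, so it must be checked that those shifts enter only as bounded centered values and do not destroy the negligibility of the maximum. The in-probability-to-conditional-convergence bookkeeping via subsequences also needs care.
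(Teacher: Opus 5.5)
Your proposal is correct and follows the paper's proof. For (i) it uses H\'ajek's permutation CLT under the Noether condition $\max_i(X_i-\bar X)^2/\sum_i(X_i-\bar X)^2=O_P(\log n/n)$; for (ii) it uses the decomposition $T=\tfrac12(T_1+T_2)$ into conditionally independent within-stratum statistics, and your characteristic-function step with exact weights $\omega_1+\omega_2=1$ (plus the subsequence and grid bookkeeping) is a tidier version of the paper's Slutsky step. One harmless slip: each stratum variance is $\sim 8/n$, not $16/n$ (the paper has $n\hat\sigma^2_{\mathrm{strat},s}\toP 8$, so it is the sum that is $\sim 16/n$), but your argument never uses that value.
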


\begin{proof}
\emph{Part (i).}
The permutation CLT for simple random sampling 
(\citet{hajek1961}; \citet{hajek1999theory}) requires
\[
\frac{\max_{1\le i\le n}(X_i-\bar X)^2}{\sum_{i=1}^n(X_i-\bar X)^2}\toP 0.
\]
Under $H_1$ with Gaussian noise, the denominator satisfies 
$\frac1n\sum_i(X_i-\bar X)^2\toP 1+\mu^2>0$ (Proposition~\ref{prop:var-full}),
while $\max_i(X_i-\bar X)^2=O_P(\log n)$ by standard Gaussian maximum bounds.
The ratio is therefore $O_P(\log n/n)\to 0$, so the permutation CLT applies, giving~(i).

\medskip
\noindent\emph{Part (ii).}
By Proposition~\ref{prop:var-strat} and Step~1 of its proof, 
$T=\frac12(T_1+T_2)$ with $T_1,T_2$ conditionally independent given $X$ 
and within-stratum variances 
$\hat\sigma_{\mathrm{strat},s}^2:=\Var(T_s\mid X)=4Q_s/[(n/2)(n/2-1)]$.
Within Stratum~$s$, $Q_s/(n/2-1)\toP 1$ and 
$\max_{i\in\mathcal S_s}(X_i-\bar X_s)^2=O_P(\log n)$, 
so the permutation CLT applies within each stratum:
$T_s/\hat\sigma_{\mathrm{strat},s}\mid X\dto\N(0,1)$.
By Step~3 of the previous proof, $n\hat\sigma_{\mathrm{strat},s}^2\toP 8$ for $s=1,2$, 
while $n\hat\sigma_{\mathrm{strat}}^2\toP 4$. Hence
\[
\frac{\hat\sigma_{\mathrm{strat},s}^2}{\hat\sigma_{\mathrm{strat}}^2}\toP 2,
\qquad
\left(\frac{\hat\sigma_{\mathrm{strat},s}}{2\hat\sigma_{\mathrm{strat}}}\right)^2\toP \frac12,
\quad s=1,2.
\]
Writing
\[
\frac{T}{\hat\sigma_{\mathrm{strat}}}
=\frac{T_1}{\hat\sigma_{\mathrm{strat},1}}\cdot\frac{\hat\sigma_{\mathrm{strat},1}}{2\hat\sigma_{\mathrm{strat}}}
+\frac{T_2}{\hat\sigma_{\mathrm{strat},2}}\cdot\frac{\hat\sigma_{\mathrm{strat},2}}{2\hat\sigma_{\mathrm{strat}}},
\]
Slutsky's theorem yields $T/\hat\sigma_{\mathrm{strat}}\dto\N(0,1)$.

\medskip
\noindent\emph{Critical-value statements.}
If the standardized permutation cdf converges uniformly to $\Phi$, 
the quantile function converges pointwise, giving 
$c(\alpha\mid X)/\hat\sigma\toP z_{1-\alpha}$
and hence $c(\alpha\mid X)=z_{1-\alpha}\,\hat\sigma\,(1+o_P(1))$.
\end{proof}

\begin{proof}[Proof of Proposition~\ref{prop:power-compare}]
\emph{Step 1: distribution of $T$ under $H_1$.}
With balanced design (each arm has $n/4$ units from each stratum), 
the stratum effects cancel exactly in $\bar X_{\mathrm{trt}}-\bar X_{\mathrm{ctrl}}$:
\[
\bar X_{\mathrm{trt}}=\underbrace{\frac{(n/4)\mu+(n/4)(-\mu)}{n/2}}_{=0}+\frac{\delta}{\sqrt n}+\bar\varepsilon_{\mathrm{trt}},
\qquad
\bar X_{\mathrm{ctrl}}=0+0+\bar\varepsilon_{\mathrm{ctrl}},
\]
so $T=\delta/\sqrt n+(\bar\varepsilon_{\mathrm{trt}}-\bar\varepsilon_{\mathrm{ctrl}})$.
Since $\bar\varepsilon_{\mathrm{trt}}$ and $\bar\varepsilon_{\mathrm{ctrl}}$ are independent 
(disjoint units) with variance $1/(n/2)=2/n$ each, their difference has variance $4/n$.
Under Gaussian noise, $\sqrt n T/2=\delta/2+(\sqrt n/2)(\bar\varepsilon_{\mathrm{trt}}-\bar\varepsilon_{\mathrm{ctrl}})$
is exactly $\N(\delta/2,1)$ in finite samples, hence $Z_n:=\sqrt n T/2\dto\N(\delta/2,1)$.

\medskip
\noindent\emph{Step 2: asymptotic critical values.}
By Lemma~\ref{lem:perm-CLT-strat}, 
$c_{\mathrm{full}}(\alpha\mid X)=z_{1-\alpha}\,\hat\sigma_{\mathrm{full}}(1+o_P(1))$
and $c_{\mathrm{strat}}(\alpha\mid X)=z_{1-\alpha}\,\hat\sigma_{\mathrm{strat}}(1+o_P(1))$.
Propositions~\ref{prop:var-full}--\ref{prop:var-strat} give 
$n\hat\sigma_{\mathrm{full}}^2\toP 4(1+\mu^2)$ and $n\hat\sigma_{\mathrm{strat}}^2\toP 4$, so
\[
\frac{\sqrt n\cdot c_{\mathrm{full}}(\alpha\mid X)}{2}\toP z_{1-\alpha}\sqrt{1+\mu^2},
\qquad
\frac{\sqrt n\cdot c_{\mathrm{strat}}(\alpha\mid X)}{2}\toP z_{1-\alpha}.
\]

\medskip
\noindent\emph{Step 3: limiting power.}
The rejection events $\{T>c(\alpha\mid X)\}$ equal $\{Z_n>\sqrt n\,c(\alpha\mid X)/2\}$.
Combining Steps~1--2 with Slutsky's theorem yields
\begin{align*}
\mP(T>c_{\mathrm{full}}(\alpha\mid X))
&\to 1-\Phi\Big(z_{1-\alpha}\sqrt{1+\mu^2}-\tfrac{\delta}{2}\Big),\\
\mP(T>c_{\mathrm{strat}}(\alpha\mid X))
&\to 1-\Phi\Big(z_{1-\alpha}-\tfrac{\delta}{2}\Big).
\end{align*}

\medskip
\noindent\emph{Step 4: power ordering.}
For $\alpha<1/2$, $z_{1-\alpha}>0$.
When $\mu>0$, $\sqrt{1+\mu^2}>1$, so 
$z_{1-\alpha}\sqrt{1+\mu^2}>z_{1-\alpha}$ and thus 
$\mathrm{Power}_{\mathrm{strat}}>\mathrm{Power}_{\mathrm{full}}$.
As $\mu\to\infty$, $z_{1-\alpha}\sqrt{1+\mu^2}\to\infty$ and 
$\mathrm{Power}_{\mathrm{full}}\to 1-\Phi(\infty)=0$,
while $\mathrm{Power}_{\mathrm{strat}}$ is independent of $\mu$.
\end{proof}

\end{document}